\newtheorem{theorem}{Theorem}
\newtheorem{lemma}{Lemma}
\newtheorem{definition}{Definition}
\newtheorem{corollary}{Corollary}
\declaretheoremstyle[%
  spaceabove=-6pt,%
  spacebelow=6pt,%
  headfont=\normalfont\itshape,%
  postheadspace=1em,%
  qed=\qedsymbol,%
  headpunct={}
]{mystyle}
	 \definecolor{haskellblue}{rgb}{0.0, 0.0, 1.0}
	 \definecolor{haskellstr}{rgb}{0.2, 0.2, 0.6}
	 \definecolor{haskellred}{rgb}{1.0, 0.0, 0.0}
  \definecolor{gray_ulisses}{gray}{0.55}
  \definecolor{castanho_ulisses}{rgb}{0.71,0.33,0.14}
  \definecolor{preto_ulisses}{rgb}{0.41,0.20,0.04}
  \definecolor{green_ulises}{rgb}{0.2,0.75,0}
	\definecolor{haskellblue}{gray}{0.1}
	\definecolor{haskellstr}{gray}{0.1}
	\definecolor{haskellred}{gray}{0.1}
	\definecolor{gray_ulisses}{gray}{0.1}
	\definecolor{castanho_ulisses}{gray}{0.1}
	\definecolor{preto_ulisses}{gray}{0.1}
	\definecolor{green_ulisses}{gray}{0.1}
\definecolor{lcolor}{gray}{0.0}
\definecolor{lappcolor}{gray}{0.0}
\definecolor{lappascolor}{gray}{0.0}
\def\codesize{\normalsize}
\newcommand\showfocus[1]{\color{purple}{\textbf{#1}}}
\lstdefinelanguage{HaskellUlisses} {
	basicstyle=\ttfamily\codesize,
	moredelim=[is][\showfocus]{\#}{\#},
	sensitive=true,
	morecomment=[l][\color{gray_ulisses}\ttfamily\itshape\codesize]{--},
	morecomment=[s][\color{gray_ulisses}\ttfamily\itshape\codesize]{\{-}{-\}},
	morestring=[b]",
    %% escapeinside={(*}{*)},
	stringstyle=\color{haskellstr},
	showstringspaces=false,
	numberstyle=\codesize,
	numberblanklines=true,
	showspaces=false,
	breaklines=true,
	showtabs=false,
  %% whitespace hackery
  %% lineskip=-2pt,
  %% % aboveskip=\smallskipamount,
  %% belowskip=0pt,
  literate={
	         {<!}{{{\color{lcolor}<!}}}2
           {==!}{{{\color{lcolor}==!}}}3
           {`}{{{$^{\backprime}{}$}}}1
           % {QED}{{{\color{lcolor}QED}}}3
           % {***}{{{\color{lcolor}***}}}3
           {?}{{{\color{lcolor}?}}}1
           {<=}{{$\leq$}}1
           {theta}{{$\theta$}}1
           {rf}{{{\color{lappcolor}f}}}1
           {gf}{{{\color{lappascolor}f}}}1
           {rmap}{{{\color{lappcolor}map}}}3
           {gmap}{{{\color{lappascolor}map}}}3
           {r.}{{{\color{lappcolor}.}}}1
           {g.}{{{\color{lappascolor}.}}}1
           {r++}{{{\showfocus{++}}}}2
           {g++}{{{\color{lappascolor}++}}}2
           %{r>>=}{{{\color{lappcolor}>>=}}}3
           %{g>>=}{{{\color{lappascolor}>>=}}}3
           {gfib}{{{\color{lappascolor}fib}}}3
           {rfib}{{{\color{lappcolor}fib}}}3
           {r++}{{{\color{lappcolor}++}}}2
           % {>>=}{$\ebind$}2
           {op}{{$\odot$}}1
           {env}{{$\Gamma$}}1
           {|-}{{$\vdash$}}1
          % {>=}{{$\geq$}}1
          % {<}{{$<$}}1
          % {>}{{$>$}}1
           {<=!}{{{\color{lcolor}<=!}}}3
          % {\\}{{$\lambda$}}1
           {!=}{{$\neq$}}1
           {forall}{{$\forall$}}1
           {->}{{$\rightarrow$}}2
           {=*}{{$\eqfun$}}2
           {Set_mem}{{$\in$}}1
           {Set_cup}{{$\cup$}}1
           {Set_cap}{{$\cap$}}1
           {Set_emp}{{$\emptyset$}}1
           {Set_sub}{{$\subseteq$}}1
           {<=>}{{$\Leftrightarrow$}}3
           {=>}{{$\Rightarrow$}}2
           {||-}{{$\vdash$}}1
           {|->}{{$\mapsto$}}2
           {<:}{{$\preceq$}}1
           {Inarritu}{Inarritu}8},
%           {Inarritu}{I$\tilde{\text{n}}\acute{\text{a}}$rritu}8,
	emph=
	{[1]
		FilePath,IOError,abs,acos,acosh,all,and,any,appendFile,approxRational,asTypeOf,asin,
		asinh,atan,atan2,atanh,basicIORun,break,catch,ceiling,chr,compare,concat,concatMap,
		const,cos,cosh,curry,cycle,decodeFloat,denominator,digitToInt,div,divMod,drop,
		dropWhile,either,elem,encodeFloat,enumFrom,enumFromThen,enumFromThenTo,enumFromTo,
		error,even,exp,exponent,fail,mapMaybe,filter,flip,floatDigits,floatRadix,floatRange,floor,
		fmap,foldl,foldl1,foldr,foldr1,fromDouble,fromEnum,fromInt,fromInteger,fromIntegral,
		fromRational,fst,gcd,getChar,getContents,getLine,head,id,inRange,index,init,intToDigit,
		interact,ioError,isAlpha,isAlphaNum,isAscii,isControl,isDenormalized,isDigit,isHexDigit,
		isIEEE,isInfinite,isLower,isNaN,isNegativeZero,isOctDigit,isPrint,isSpace,isUpper,iterate,
		last,lcm,length,lex,lexDigits,lexLitChar,lines,log,logBase,lookup,map,mapM,mapM_,max,
		maxBound,posMax,negMax,maximum,maybe,min,minBound,minimum,mod,negate,not,notElem,null,numerator,odd,
		or,ord,pi,pred,primExitWith,print,product,properFraction,putChar,putStr,putStrLn,quot,
		quotRem,range,rangeSize,read,readDec,readFile,readFloat,readHex,readIO,readInt,readList,readLitChar,
		readLn,readOct,readParen,readSigned,reads,readsPrec,realToFrac,recip,rem,repeat,replicate,return,
		reverse,round,scaleFloat,scanl,scanl1,scanr,scanr1,seq,sequence,sequence_,show,showChar,showInt,
		showList,showLitChar,showParen,showSigned,showString,shows,showsPrec,significand,signum,sin,
		sinh,snd,span,splitAt,sqrt,subtract,succ,sum,tail,take,takeWhile,tan,tanh,threadToIOResult,toEnum,
		toInt,toInteger,toLower,toRational,toUpper,truncate,uncurry,undefined,unlines,until,unwords,unzip,
		unzip3,userError,words,writeFile,zip,zip3,zipWith,zipWith3,listArray,doParse,empty,for,initTo,
        assert,compose,checkGE,maxEvens,empty,create,get,set,initialize,idVec,fastFib,fibMemo,
        ex1,ex2,ex3,incr,inc,dec,isPos,positives,find,insert,len,size,union,fromList,initUpto,trim,
        insertSort,decsort,qsort,reverse,append,upperCase, ifM, whileM, get, decrM, diff,
        project, select, leq, elts, keys, dkeys, dfun, addKey, pTrue, emptyRD, rFalse,
        	dom, rng, isI, isD, isS, movie1, movie2,  toI, toS, toD, good_titles, runState, ret,
        	update, getCtr, setCtr, ctr, rdCtr, wrCtr, ifTest, whileTest, posCtr, zeroCtr, decr, decCtr,
        	pread , pwrite , plookup , pcontents, pcreateF , pcreateFP, pcreateD, active, caps, pset, eqP,
        	write, contents, alloc, derivP, copyP, createDir, store, copyRec, copySpec,
        	forM_, when, flookup, fread, createDir, pcreateFile, isFile, copyFrame, ?
	},
	emphstyle={[1]\color{haskellblue}},
	emph=
	{[2] 	Show,Eq,Ord,Num,UpClosed,Comp,Wit,Witness,Inductive,Meet,Flip,TRUE,Nat,Pos,Neg,IntGE,Plus,List,
        Bool,Char,Double,Either,Float,IO,Integer,Int,Maybe,
        Ordering,Rational,Ratio,ReadS,ShowS,String,Word8,
        InPacket,Tree,Vec,NullTerm,IncrList,DecrList,
        UniqList,BST,MinHeap,MaxHeap,World,RIO,IO,HIO,Post,Pre,
        Privilege, Prop, Chain, ChainTy, Range, Dict, RD, Dom, Set, P, Univ, Schema, MovieSchema, RT,
        TDom, TRange, MoviesTable, RTSubEqFlds, RTEqFlds, Disjoint, Union, Ret, Seq, Trans, Map,
        Pure, Then, Else, Exit, Inv, OneState, Priv, Path, FH, Stable,
				Prop, Nat,
	},
	emphstyle={[2]\color{castanho_ulisses}},
	emph=
	{[3]
		case,class,data,deriving,do,else,if,import,in,infixl,infixr,instance,let,
		module,of,primitive,then,refinement,type,where,forall,bound,
		% otherwise,
		measure,reflect,predicate
	},
	emphstyle={[3]\color{preto_ulisses}\textbf},
	emph=
	{[4]
		quot,rem,div,mod,elem,notElem,seq
	},
	emphstyle={[4]\color{castanho_ulisses}\textbf},
	emph=
	{[5]
		EQ,GT,LT,Left,Right
		%, False, True, Just, Nothing
	},
	emphstyle={[5]\color{preto_ulisses}\textbf},
	emph=
	{[6]
	    axiomatize, measure, inline
	},
	emphstyle={[6]\color{lcolor}}
}
\lstdefinelanguage{Pseudo} {
	basicstyle=\ttfamily\codesize,
	sensitive=true,
  mathescape=true,
	morecomment=[l][\color{gray_ulisses}\ttfamily\codesize]{--},
	morecomment=[s][\color{gray_ulisses}\ttfamily\codesize]{\{-}{-\}},
	morestring=[b]",
	showstringspaces=false,
	numberstyle=\codesize,
	numberblanklines=true,
	showspaces=false,
	breaklines=true,
	showtabs=false
}
\newcommand{\isTechReport}{false} % true or false
\begin{document}

%\title{From Types to Proofs via Refinement Reflection}
\title{\ifthenelse{\equal{\isTechReport}{true}}
       {{Technical Report:}}
        {}
        Refinement Reflection
       }
\subtitle{\emph{(or, how to turn your favorite language into a proof assistant using SMT)}}

% Utopia: Liquid Types as a Proof Assistant
% HeLP:   Haskell as a Liquid Type Proof Assistant
%	REPAIr:	REfinement Proof AssIstant

% \authorinfo{OMITTED FOR BLIND REVIEW}{}
\authorinfo{Niki Vazou \and Ranjit Jhala}{University of California, San Diego}{}

\maketitle

\begin{abstract}
Refinement Reflection turns your favorite programming 
language into a proof assistant by reflecting
the ​code implementing a​ user-defined function 
into the function's (output) refinement type. 
As a consequence, at \emph{uses} of the function, 
the function definition is unfolded into the refinement logic 
in a precise, predictable and most
importantly, programmer controllable way.
In the logic, we encode functions and lambdas using uninterpreted symbols
preserving SMT-based decidable verification. 
In the language, we provide a library of combinators that lets programmers 
compose proofs from basic refinements and function definitions.
We have implemented our approach in the Liquid Haskell system, 
thereby converting Haskell into an interactive proof assistant, 
that we used to verify a variety of properties ranging 
from arithmetic properties of higher order, recursive functions
to the Monoid, Applicative, Functor and Monad type class laws 
for a variety of instances.
\end{abstract}
\section{Introduction}\label{sec:intro}

Wouldn't it be great to write proofs \emph{of}
programs in your favorite language, \emph{by}
writing programs in your favorite language,
allowing you to avail of verification, while
reusing the libraries, compilers and run-times
\emph{for} your favorite language?

Refinement types \citep{ConstableS87,Rushby98} offer a
form of programming with proofs that can be
retrofitted into several languages like
ML~\cite{pfenningxi98,GordonRefinement09,LiquidPLDI08},
C~\cite{deputy,LiquidPOPL10},
Haskell~\citep{Vazou14}, TypeScript~\cite{Vekris16}
and Racket~\cite{RefinedRacket}.
The retrofitting relies upon restricting refinements
to so-called ``shallow'' specifications that
correspond to \emph{abstract} interpretations
of the behavior of functions.
For example, refinements make it easy to specify
that the list returned by the @append@ function
has size equal to the sum of those of its inputs.
These shallow specifications fall within decidable
logical fragments, and hence, can be automatically
verified using SMT based refinement typing.

Refinements are a pale shadow of what is possible
with dependently typed languages like Coq, Agda
and Idris which permit ``deep'' specification
and verification.
These languages come equipped with mechanisms
that \emph{represent} and \emph{manipulate} the
exact descriptions of user-defined functions.
For example, we can represent the specification
that the @append@ function is associative, and we
can manipulate (unfold) its definition to write a
small program that constructs a proof of the
specification.
Dafny~\citep{dafny}, \fstar~\citep{fstar} and
Halo~\citep{HALO} take a step towards
SMT-based deep verification, by encoding
user-defined functions as universally
quantified logical formulas or ``axioms''.
This axiomatic approach offers significant automation
but is a devil's bargain as by relying heavily upon
brittle heuristics for ``triggering'' axiom instantiation,
it gives away decidable, and hence, predictable
verification~\citep{Leino16}.
%
% In practice, the heuristics may fail to apply or may
% go into infinite ``matching loops'' leaving programmers
% with little clue of how to proceed~\citep{Leino16}.
%
% Indeed, according to the authors of Dafny, these heuristics
% are unpredictable and suffer from a ``butterfly effect''
% where modification in a part of the program changes the
% outcome of verification in other part \cite{Leino16}.

\mypara{Refinement Reflection}
In this paper, we present a new approach to retrofitting
deep verification into existing languages. Our approach
reconciles the automation of SMT-based refinement typing
with decidable and predictable verification, and enables
users to reify pencil-and-paper proofs simply
as programs in the source language.
Our key insight is dead simple: the ​code
implementing a​ user-defined function can
be \emph{reflected}​ into the function's
(output) refinement type, thus converting
the function's (refinement) type signature
into a deep specification of the functions
behavior.
This simple idea has a profound consequence:
at \emph{uses} of the function, the standard
rule for (dependent) function application
yields a precise, predictable and most
importantly, programmer controllable
means of \emph{instantiating} the deep
specification that is not tethered to
brittle SMT heuristics.
Reflection captures deep specifications as
refinements, but poses challenges for the
\emph{logic} and \emph{language}.

\mypara{Logic: Algorithmic Verification}
Our first challenge: how can we \emph{encode terms}
from an expressive higher order language in a decidable
refinement logic in order to retain decidable, and hence,
predictable, verification?
We address this problem by using ideas for defunctionalization
from the theorem proving literature which
encode functions and lambdas using uninterpreted symbols.
This encoding lets us use (SMT-based) congruence closure to
reason about equality~(\S~\ref{sec:algorithmic}).
Of course, congruence is not enough; in general, \eg to prove
two functions extensionally equal, we require facilities for
manipulating function definitions.

\mypara{Language: Proof Composition}
Thus, as we wish to retrofit proofs into
existing languages, our second challenge:
how can we design a \emph{library of combinators}
that lets programmers \emph{compose proofs}
from basic refinements and function definitions?
We develop such a library, wherein proofs
are represented simply as unit-values
refined by the proposition that they
are proofs of. % ~(\S~\ref{sec:library}).
Refinement reflection lets us unfold definitions
simply by \emph{applying} the function to the
relevant inputs, and finally, we show how to
build up sophisticated proofs using a small
library of combinators that permit reasoning
in an algebraic or equational style.

\mypara{Implementation \& Evaluation}
We have implemented our approach in the \toolname
system, thereby retrofitting deep verification into
Haskell, converting it into an interactive proof
assistant.
\toolname's refinement types crucially allow us to
soundly account for the dreaded bottom by checking
that (refined) functions produce (non-bottom)
values~\cite{Vazou14}.
We evaluate our approach by using \toolname to
verify a variety of properties including arithmetic
properties of higher order, recursive functions,
textbook theorems about functions on inductively
defined datatypes, and the Monoid, Applicative,
Functor and Monad type class laws for a variety
of instances. %(\S~\ref{sec:evaluation}).
We demonstrate that our proofs look very much like
transcriptions of their pencil-and-paper analogues.
Yet, the proofs are plain Haskell functions, where
case-splitting and induction are performed by
plain pattern-matching and recursion.

% showing, that it may be possible to avail of the benefits
% of deep specification and verification without
% leaving the comforts of your favorite programming
% language.

To summarize, this paper describes a means of
retrofitting deep specification and verification
into your favorite language, by making the
following contributions:
\begin{itemize}
\item We start with an informal description of
      refinement reflection, and how it can
      be used to prove theorems about functions,
      by writing functions~(\S~\ref{sec:overview}).

\item We formalize refinement reflection using
      a core calculus, and prove it sound with
      respect to a denotational
      semantics~(\S~\ref{sec:formalism}).

\item We show how to keep type checking
      decidable~(\S~\ref{sec:algorithmic})
      while using uninterpreted functions and
      defunctionalization to reason about
      extensional equality in higher-order
      specifications~(\S~\ref{sec:lambdas}).

\item Finally, we have implemented refinement reflection
      in \toolname, a refinement type system for Haskell.
      We develop a library of (refined) proof combinators
      and evaluate our approach by proving various theorems
      about recursive, higher-order functions operating over
      integers and algebraic data types~(\S~\ref{sec:evaluation}).
\end{itemize}

\section{Overview}
\label{sec:overview}
\label{sec:examples}

We begin with a fast overview of refinement reflection and
how it allows us to write proofs \emph{of} and \emph{by}
Haskell functions.

\subsection{Refinement Types}

First, we recall some preliminaries about refinement types
and how they enable shallow specification and verification.

\mypara{Refinement types} are the source program's (here
Haskell's) types decorated with logical predicates drawn
from a(n SMT decidable) logic~\citep{ConstableS87,Rushby98}.
For example, we can refine Haskell's @Int@ datatype with a
predicate @0 <= v@, to get a @Nat@ type:
\begin{code}
  type Nat = {v:Int | 0 <= v}
\end{code}
The variable @v@ names the value described by the type,
hence the above can be read as the
``set of @Int@ values @v@ that are greater than 0".
The refinement is drawn from the logic of quantifier
free linear arithmetic and uninterpreted functions
(QF-UFLIA~\cite{SMTLIB2}).

\mypara{Specification \& Verification}
We can use refinements to define and type the
textbook Fibonacci function as:
\begin{code}
  fib   :: Nat -> Nat
  fib 0 = 0
  fib 1 = 1
  fib n = fib (n-1) + fib (n-2)
\end{code}
Here, the input type's refinement specifies a
\emph{pre-condition} that the parameters must
be @Nat@, which is needed to ensure termination,
and the output types's refinement specifies a
\emph{post-condition} that the result is also a @Nat@.
Thus refinement type checking, lets us specify
and (automatically) verify the shallow property
that if @fib@ is invoked with non-negative
@Int@ values, then it (terminates) and yields
a non-negative value.

\mypara{Propositions}
We can use refinements to define a data type
representing propositions simply as an alias
for unit, a data type that carries no run-time
information:
\begin{mcode}
  type $\typp$ = ()
\end{mcode}
but which can be \emph{refined} with desired
propositions about the code.
For example, the following states the proposition
$2 + 2$ equals $4$.
\begin{mcode}
  type Plus_2_2_eq_4 = {v: $\typp$ | 2 + 2 = 4}
\end{mcode}
For clarity, we abbreviate the above type by omitting
the irrelevant basic type $\typp$ and variable @v@:
\begin{mcode}
  type Plus_2_2_eq_4 = {2 + 2 = 4}
\end{mcode}
We represent universally quantified propositions as function types:
\begin{mcode}
  type Plus_com = x:Int->y:Int->{x+y = y+x}
\end{mcode}
Here, the parameters @x@ and @y@ refer to input
values; any inhabitant of the above type is a
proof that @Int@ addition is commutative.

\mypara{Proofs}
We can now \emph{prove} the above theorems simply by
writing Haskell programs. To ease this task \toolname
provides primitives to construct proof terms by
``casting'' expressions to \typp.
\begin{mcode}
  data QED = QED

  (**) :: a -> QED -> $\typp$
  _ ** _  = ()
\end{mcode}
To resemble mathematical proofs, we make this casting post-fix.
Thus, we can write @e ** QED@ to cast @e@ to a value of \typp.
For example, we can prove the above propositions simply by writing
\begin{code}
  pf_plus_2_2 :: Plus_2_2_eq_4
  pf_plus_2_2 = trivial ** QED

  pf_plus_comm :: Plus_comm
  pf_plus_comm = \x y -> trivial ** QED

  trivial = ()
\end{code}
Via standard refinement type checking, the above code yields
the respective verification conditions (VCs),
\begin{align*}
                   2 + 2 & = 4 \\
\forall \ x\ y\ .\ x + y & = y + x
\end{align*}
which are easily proved valid by the SMT solver, allowing us
to prove the respective propositions.

\mypara{A Note on Bottom:} Readers familiar with Haskell's
semantics may be feeling a bit anxious about whether the
dreaded ``bottom", which inhabits all types, makes our
proofs suspect.
Fortunately, as described in \cite{Vazou14}, \toolname
ensures that all terms with non-trivial refinements
provably evaluate to (non-bottom) values, thereby making
our proofs sound.

\subsection{Refinement Reflection}

Suppose that we wish to prove properties about the @fib@
function, \eg that @fib 2@ equals @1@.
\begin{code}
  type fib2_eq_1 = { fib 2 = 1 }
\end{code}
%
%% \NV{By Standard refinement type checking, you mean liquid types, not FStar}
Standard refinement type checking runs into two problems.
First, for decidability and soundness, arbitrary user-defined
functions do not belong the refinement logic, \ie we cannot even
\emph{refer} to @fib@ in a refinement.
Second, the only information that a refinement type checker
has about the behavior of @fib@ is its shallow type
specification @Nat -> Nat@ which is far too weak to verify
@fib2_eq_1@.
To address both problems, we use the following annotation,
which sets in motion the three steps of refinement reflection:
\begin{code}
  reflect fib
\end{code}

\mypara{Step 1: Definition}
The annotation tells \toolname to declare an
\emph{uninterpreted function} @fib :: Int -> Int@
in the refinement logic.
By uninterpreted, we mean that the logical @fib@
is \emph{not} connected to the program function
@fib@; as far as the logic is concerned, @fib@
only satisfies the \emph{congruence axiom}
$$\forall n, m.\ n = m\ \Rightarrow\ \fib{n} = \fib{m}$$
On its own, the uninterpreted function is not
terribly useful, as it does not let us prove
% It lets us prove theorems like
% $$\forall m,\ n.\ m = n \Rightarrow \fib{m} = \fib{n}$$
%
%% \begin{code}
  %% fib_cong :: n:Nat -> m:Nat -> {m=n => fib m = fib n}
  %% fib_cong = trivial ** QED
%% \end{code}
%% %
%but not
@fib2_eq_1@ which requires reasoning about the
\emph{definition} of @fib@.

\mypara{Step 2: Reflection}
In the next key step, \toolname reflects the
definition into the refinement type of @fib@
by automatically strengthening the user defined
type for @fib@ to:
\begin{code}
  fib :: n:Nat -> {v:Nat | fibP v n}
\end{code}
where @fibP@ is an alias for a refinement
\emph{automatically derived} from the
function's definition:
\begin{mcode}
  predicate fibP v n =
    v = if n = 0 then 0 else
        if n = 1 then 1 else
        fib(n-1) + fib(n-2)
\end{mcode}

\mypara{Step 3: Application}
With the reflected refinement type,
each application of @fib@ in the code
automatically unfolds the @fib@ definition
\textit{once} during refinement type checking.
We prove @fib2_eq_1@ by:
\begin{code}
  pf_fib2 :: { fib 2 = 1 }
  pf_fib2 = #fib# 2 == #fib# 1 + #fib# 0 ** QED
\end{code}
We write @#f#@ to denote places where the
unfolding of @f@'s definition is important.
The proof is verified as the above is A-normalized to
\begin{code}
  let { t0 = fib 0; t1 = fib 1; t2 = fib 2 }
  in  ( t2 == t1 + t0 ) ** QED
\end{code}
Which via standard refinement typing, yields the
following verification condition that is easily
discharged by the SMT solver, even though @fib@
is uninterpreted:
\begin{align*}
   & (\fibdef\ (\fib\ 0)\ 0) \ \wedge\ (\fibdef\ (\fib\ 1)\ 1) \ \wedge\ (\fibdef\ (\fib\ 2)\ 2) \\
   & \Rightarrow\ (\fib{2} = 1)
\end{align*}
Note that the verification of @pf_fib2@ relies
merely on the fact that @fib@ was applied
to (\ie unfolded at) @0@, @1@ and @2@.
The SMT solver can automatically \emph{combine}
the facts, once they are in the antecedent.
Hence, the following would also be verified:
\begin{code}
  pf_fib2' :: { fib 2 = 1 }
  pf_fib2' = [#fib# 0, #fib# 1, #fib# 2] ** QED
\end{code}
Thus, unlike classical dependent typing, refinement
reflection \emph{does not} perform any type-level
computation.

\mypara{Reflection vs. Axiomatization}
An alternative \emph{axiomatic} approach, used by Dafny,
\fstar and HALO, is to encode the definition of @fib@ as
a universally quantified SMT formula (or axiom):
$$\forall n.\ \fibdef\ (\fib\ n)\ n$$
Axiomatization offers greater automation than
reflection. Unlike \toolname, Dafny
%and \fstar
will verify the equivalent of the following by
\emph{automatically instantiating} the above
axiom at @2@, @1@ and @0@:
\begin{code}
  axPf_fib2 :: { fib 2 = 1 }
  axPf_fib2 = trivial ** QED
\end{code}

However, the presence of such axioms renders checking
the VCs undecidable. In practice, automatic axiom
instantation can easily lead to infinite ``matching loops''.
For example, the existence of a term \fib{n} in a VC
can trigger the above axiom, which may then produce
the terms \fib{(n-1)} and \fib{(n-2)}, which may then
recursively give rise to further instantiations
\emph{ad infinitum}.
To prevent matching loops an expert must carefully
craft ``triggers'' and provide a ``fuel''
parameter~\citep{Amin2014ComputingWA} that can be
used to restrict the numbers of the SMT unfoldings,
which ensure termination, but can cause the axiom
to not be instantiated at the right places.
In short, the undecidability of the VC checking
and its attendant heuristics makes verification
unpredictable~\citep{Leino16}.

\subsection{Structuring Proofs}

In contrast to the axiomatic approach,
with refinement reflection, the VCs are
deliberately designed to always fall in
an SMT-decidable logic, as function symbols
are uninterpreted.
It is upto the programmer to unfold the
definitions at the appropriate places,
which we have found, with careful design
of proof combinators, to be quite
a natural and pleasant experience.
To this end, we have developed a library
of proof combinators that permits reasoning
about equalities and linear arithmetic,
inspired by Agda~\citep{agdaequational}.

\mypara{``Equation'' Combinators}
We equip \toolname with a a family of
equation combinators @op.@ for each
logical operator @op@ in
$\{=, \not =, \leq, <, \geq, > \}$,
the operators in the theory QF-UFLIA.
The refinement type of @op.@  \emph{requires}
that $x \odot y$ holds and then \emph{ensures}
that the returned value is equal to @x@.
For example, we define @=.@ as:
\begin{code}
  (=.) :: x:a -> y:{a| x=y} -> {v:a| v=x}
  x =. _ = x
\end{code}
and use it to write the following ``equational" proof:
\begin{code}
  eqPf_fib2 :: { fib 2 = 1 }
  eqPf_fib2 =  #fib# 2
            =. #fib# 1 + #fib# 0
            =. 1
            ** QED
\end{code} %$

\mypara{``Because'' Combinators}
Often, we need to compose ``lemmas'' into larger
theorems. For example, to prove @fib 3 = 2@ we
may wish to reuse @eqPf_fib2@ as a lemma.
To this end, \toolname has a ``because'' combinator:
\begin{mcode}
  ($\because$) :: ($\typp$ -> a) -> $\typp$ -> a
  f $\because$ y = f y
\end{mcode}
The operator is simply an alias for function
application that lets us write
@ x op. y $\because$ p@ (instead of @(op.) x y p@)
where @(op.)@ is extended to accept an \textit{optional} third proof
argument via Haskell's type class mechanisms.
We can use the because combinator to
prove that @fib 3 = 2@ just by writing
plain Haskell code:
\begin{mcode}
  eqPf_fib3 :: {fib 3 = 2}
  eqPf_fib3 =  #fib# 3
            =. fib 2 + #fib# 1
            =. 2              $\because$ eqPf_fib2
            ** QED
\end{mcode}

\mypara{Arithmetic and Ordering}
SMT based refinements let us go well beyond just equational
reasoning. Next, lets see how we can use arithmetic and
ordering to prove that @fib@ is (locally) increasing,
\ie for all $n$, $\fib{n} \leq \fib{(n+1)}$
\begin{mcode}
  fibUp :: n:Nat -> {fib n <= fib (n+1)}
  fibUp n
    | n == 0
    =  #fib# 0 <. #fib# 1
    ** QED

    | n == 1
    =  fib 1 <=. fib 1 + fib 0 <=. #fib# 2
    ** QED

    | otherwise
    =  #fib# n
    =. fib (n-1) + fib (n-2)
    <=. fib n     + fib (n-2) $\because$ fibUp (n-1)
    <=. fib n     + fib (n-1) $\because$ fibUp (n-2)
    <=. #fib# (n+1)
    ** QED
\end{mcode} %$

\mypara{Case Splitting and Induction}
The proof @fibUp@ works by induction on @n@.
In the \emph{base} cases @0@ and @1@, we simply assert
the relevant inequalities. These are verified as the
reflected refinement unfolds the definition of
@fib@ at those inputs.
The derived VCs are (automatically) proved
as the SMT solver concludes $0 < 1$ and $1 + 0 \leq 1$
respectively.
In the \emph{inductive} case, @fib n@ is unfolded
to  @fib (n-1) + fib (n-2)@, which, because of the
induction hypothesis (applied by invoking @fibUp@
at @n-1@ and @n-2@), and the SMT solvers arithmetic
reasoning, completes the proof.

\mypara{Higher Order Theorems}
Refinements smoothly accomodate higher-order reasoning.
For example, lets prove that every locally increasing
function is monotonic, \ie
if @f z <= f (z+1)@ for all @z@,
then @f x <= f y@ for all @x < y@.
\begin{mcode}
  fMono :: f:(Nat -> Int)
        -> fUp:(z:Nat -> {f z <= f (z+1)})
        -> x:Nat
        -> y:{x < y}
        -> {f x <= f y} / [y]
  fMono f inc x y
    | x + 1 == y
    =  f x <=. f (x+1) $\because$ fUp x
           <=. f y
           ** QED

    | x + 1 < y
    =  f x <=. f (y-1) $\because$ fMono f fUp x (y-1)
           <=. f y     $\because$ fUp (y-1)
           ** QED
\end{mcode}
We prove the theorem by induction
on @y@, which is specified by the
annotation @/ [y]@ which states
that @y@ is a well-founded
termination metric that decreases
at each recursive call~\citep{Vazou14}.
%
% All reflected functions are proved terminating.
% When the annotation metric is not explicit Liquid Haskell
% successfully uses heuristics to automatically prove termination. 
%
If @x+1 == y@, then we use @fUp x@.
Otherwise, @x+1 < y@, and we use
the induction hypothesis \ie apply
@fMono@ at @y-1@, after which
transitivity of the less-than
ordering finishes the proof.
We can use the general @fMono@
theorem to prove that @fib@
increases monotonically:
\begin{code}
  fibMono :: n:Nat -> m:{n<m} ->
             {fib n <= fib m}
  fibMono = fMono fib fibUp
\end{code}

\subsection{Case Study: Peano Numerals}

Refinement reflection is not limited to programs
operating on integers. We conclude the overview
with a small library for Peano numerals, defined
via the following algebraic data type:
\begin{code}
  data Peano = Z | S Peano
\end{code}
We can @add@ two @Peano@ numbers via:
\begin{code}
  reflect add :: Peano -> Peano -> Peano
  add Z     m = m
  add (S n) m = S (add n m)
\end{code}
In \S~\ref{subsec:measures} we will describe
exactly how the reflection mechanism (illustrated
via @fibP@) is extended to account for ADTs like @Peano@.
Note that \toolname automatically checks
that @add@ is total~\citep{Vazou14}, which
lets us safely @reflect@ it into the
refinement logic.

\mypara{Add Zero to Left}
As an easy warm up, lets show that
adding zero to the left leaves the
number unchanged:
\begin{code}
  zeroL :: n:Peano -> { add Z n == n }
  zeroL n =  #add# Z n
          =. n
          ** QED
\end{code}

\mypara{Add Zero to Right}
It is slightly more work to prove
that adding zero to the right also
leaves the number unchanged.
\begin{mcode}
  zeroR :: n:Peano -> { add n Z == n }
  zeroR Z     =  #add# Z Z
              =. Z
              ** QED

  zeroR (S n) =  #add# (S n) Z
              =. S (add n Z)
              =. S n         $\because$     zeroR n
              ** QED
\end{mcode}
The proof goes by induction, splitting cases on
whether the number is zero or non-zero. Consequently,
we pattern match on the parameter @n@, and furnish
separate proofs for each case.
In the ``zero" case, we simply unfold the definition
of @add@.
In the ``successor" case, after unfolding we (literally)
apply the induction hypothesis by using the because operator.
\toolname's termination and totality checker
verifies that we are in fact doing induction
properly, \ie the recursion in @zeroR@ is
well-founded~(\S~\ref{sec:types-reflection}).

\mypara{Commutativity}
Lets conclude by proving that @add@ is commutative:
\begin{mcode}
add_com :: a:_ -> b:_ -> {add a b = add b a}

add_com    Z  b =  #add# Z b
                =. b
                =. add b Z     $\because$ zeroR b
                ** QED

add_com (S a) b =  #add# (S a) b
                =. S (add a b)
                =. S (add b a) $\because$ add_com a b
                =. add b (S a) $\because$ sucR b a
                ** QED
\end{mcode}
using a lemma @sucR@
\begin{code}
  sucR :: n:Peano -> m:Peano ->
                {add n (S m) = S (add n m)}
  sucR = exercise_for_reader
\end{code}

%
% Again, note how case-splitting, induction, and
% lemma reuse are just pattern-matching, recursion and
% function application, via the lemma @sucR@:

Thus, refinement reflection lets us prove properties of Haskell programs
just by writing Haskell programs: lemmas are just functions, case-splitting
is just branching and pattern matching, and induction is just recursion.
Next, we formalize refinement reflection and describe
how to keep type checking decidable and predictable.

\section{Refinement Reflection}
\label{sec:formalism}
\label{sec:types-reflection}

Our first step towards formalizing refinement
reflection is a core calculus \corelan with an
\emph{undecidable} type system based on
denotational semantics.
We show how the soundness of the type system
allows us to \emph{prove theorems} using \corelan.

%
%% Note that \smtlan programs are a subset of
%% \corelan programs derivations
%%
%% a subset of \corelan where the that forms a
%% decidable logic of
%% refinements, and use it to obtain \corelan with
%% decidable SMT-based algorithmic typing.

\subsection{Syntax}
\newcommand{\ra}[1]{\renewcommand{\arraystretch}{#1}}

\begin{figure}[t!]
\centering
$$
\begin{array}{rrcl}
\emphbf{Operators} \quad
  & \odot
  & ::= & = \spmid  <
\\[0.03in]

\emphbf{Constants} \quad
  & c
  & ::=
  & \land \spmid \lnot \spmid \odot \spmid +,-,\dots  \\
  && \spmid & \etrue\spmid \efalse \spmid 0, 1,-1, \dots
\\[0.03in]

\emphbf{Values} \quad
  & w & ::=&  c
             \spmid \efun{x}{\typ}{e} \spmid D\ \overline{w}
\\[0.03in]

\emphbf{Expressions} \quad
  & e & ::=    & w \spmid x \spmid \eapp{e}{e}      \\
  &   & \spmid & \ecase{x}{e}{\dc}{\overline{x}}{e}
  %% &   & \spmid & \eletb{x}{\gtyp}{e}{e}
\\[0.03in]

\emphbf{Binders} \quad
  & \bd & ::= & e \spmid \eletb{x}{\gtyp}{\bd}{\bd}
\\[0.03in]

\emphbf{Program} \quad
  & \prog & ::= & \bd \spmid \erefb{x}{\gtyp}{e}{\prog}
\\[0.03in]

% \emphbf{Logical Labels} \quad
%  & L & ::= & \la \spmid \lm
%\\[0.03in]

\emphbf{Basic Types} \quad
  & \btyp
  & ::=
  & \tint \spmid \tbool \spmid T
\\[0.03in]

\emphbf{Refined Types} \quad
  & \typ
  & ::=      & \tref{v}{\btyp}{\reft} \spmid \tfun{x}{\typ}{\typ}
\\[0.05in]
\end{array}
$$
\caption{\textbf{Syntax of \corelan}}
\label{fig:syntax}
\end{figure}

\begin{figure}[t!]
\centering

\emphbf{Contexts}\hfill{{$\fbox{\textit{C}}$}}
$$
\begin{array}{rcl}
  C & ::=    & \bullet \\
    & \spmid & C\ e \spmid c\ C \spmid D\ \overline{e}\ C\ \overline{e}\\
    & \spmid & \ecase{y}{C}{\dc_i}{\overline{x}}{e_i}
  \\[0.03in]
\end{array}
$$

\emphbf{Reductions}\hfill{{$\fbox{\goesto{\prog}{\prog'}}$}}
$$
\begin{array}{rcl}
C[\prog]
  & \hookrightarrow
  & C[\prog'],\quad \text{if}\ \goesto{\prog}{\prog'}
  \\
{c\ v}
  & \hookrightarrow
  & {\ceval{c}{v}}
  \\
{({\efun{x}{\typ}{e})}\ {e'}}
  & \hookrightarrow
  & {\SUBST{e}{x}{e'}}
  \\
{\ecase{y}{\dc_j\ \overline{e}}{\dc_i}{\overline{x_i}}{e_i}}
  & \hookrightarrow
  & {\SUBST{\SUBST{e_j}{y}{\dc_j\ \overline{e}}}{\overline{x_i}}{\overline{e}}}
  \\
{\erefb{x}{\gtyp}{e}{\prog}}
  & \hookrightarrow
  & {\SUBST{\prog}{x}{\efix{(\efun{x}{\gtyp}{e})}}}
  \\
{\eletb{x}{\gtyp}{\bd_x}{\bd}}
  & \hookrightarrow
  % & {\SUBST{\prog}{x}{e}}
  & {\SUBST{\bd}{x}{\efix{(\efun{x}{\gtyp}{\bd_x})}}}
  \\
%% {\eletrec{x}{e_x}{\gtyp}{[L]}{\prog}}
  %% & \hookrightarrow
  %% & {\SUBST{\prog}{x}{\efix{\efun{x}{\gtyp}{e_x}}}}
  %% \\
{\efix{\prog}}
  & \hookrightarrow
  & {\prog\ (\efix{\prog})}
\end{array}
$$
\caption{\textbf{Operational Semantics of \corelan}}
\label{fig:semantics}
\end{figure}

Figure~\ref{fig:syntax} summarizes the syntax of \corelan,
which is essentially the calculus \undeclang~\cite{Vazou14}
with explicit recursion and a special $\erefname$ binding form
to denote terms that are reflected into the refinement logic.
In \corelan refinements $r$ are arbitrary expressions $e$
(hence $r ::= e$ in Figure~\ref{fig:syntax}).
This choice allows us to prove preservation and progress,
but renders typechecking undecidable.
In \S~\ref{sec:algorithmic} we will see how to recover
decidability by soundly approximating refinements.

The syntactic elements of \corelan are layered into
primitive constants, values, expressions, binders
and programs.

\mypara{Constants}
The primitive constants of \corelan
include all the primitive logical
operators $\op$, here, the set $\{ =, <\}$.
Moreover, they include the
primitive booleans $\etrue$, $\efalse$,
integers $\mathtt{-1}, \mathtt{0}$, $\mathtt{1}$, \etc,
and logical operators $\mathtt{\land}$, $\mathtt{\lor}$, $\mathtt{\lnot}$, \etc.

\mypara{Data Constructors}
We encode data constructors as special constants.
% Each data type has an equality predicate $\haseq{T}$
% that is true only if values of type $T$ can be finitely compared.
For example the data type \tintlist, which represents
finite lists of integers, has two data constructors: $\dnull$ (``nil'')
and $\dcons$ (``cons'').
% and satisfies $\haseq{\tintlist}$.

%% NV Arity is not used anywhere
%%Each data type has an arity $\arity{T}$ that represents
%%the exact number of data constructors that return
%%a value of type $T$.
%
%%For example the data type \tintlist, which represents
%%lists of integers, has two data constructors: $\dnull$ (``nil'')
%%and $\dcons$ (``cons'') and so has arity $2$.

\mypara{Values \& Expressions}
The values of \corelan include
constants, $\lambda$-abstractions
$\efun{x}{\typ}{e}$, and fully
applied data constructors $D$
that wrap values.
The expressions of \corelan
include values and variables $x$,
applications $\eapp{e}{e}$, and
$\mathtt{case}$ expressions.

\mypara{Binders \& Programs}
A \emph{binder} $\bd$ is a series of possibly recursive
let definitions, followed by an expression.
A \emph{program} \prog is a series of $\erefname$
definitions, each of which names a function
that can be reflected into the refinement
logic, followed by a binder.
The stratification of programs via binders
is required so that arbitrary recursive definitions
are allowed but cannot be inserted into the logic
via refinements or reflection.
(We \emph{can} allow non-recursive $\mathtt{let}$
binders in $e$, but omit them for simplicity.)

\subsection{Operational Semantics}

Figure~\ref{fig:syntax} summarizes the small
step contextual $\beta$-reduction semantics for
\corelan.
%
%%There are two points to note.
%%%
%%First, we allow for reductions under
%%data constructors, and thus, values may
%%be further reduced.
%%%
%%Second, for simplicity, we treat both
%%$\eletname$ and $\erefname$ as possibly
%%recursive (\ie $\mathtt{let\ rec}$) binders.
%% Note that, for simplicity, we treat each
%% $\eletname$ as a possibly
%% recursive (\ie $\mathtt{let\ rec}$) binder.
%
We write \evalj{e}{e'}{j} if there exist
$e_1,\ldots,e_j$ such that $e$ is $e_1$,
$e'$ is $e_j$ and $\forall i,j, 1 \leq i < j$,
we have $\evals{e_i}{e_{i+1}}$.
We write $\evalsto{e}{e'}$ if there exists
some finite $j$ such that $\evalj{e}{e'}{j}$.
We define $\betaeq{}{}$ to be the reflexive,
symmetric, transitive closure of $\evals{}{}$.

\mypara{Constants} Application of a constant requires the
argument be reduced to a value; in a single step the
expression is reduced to the output of the primitive
constant operation.
For example, consider $=$, the primitive equality
operator on integers.
We have $\ceval{=}{n} \defeq =_n$
where $\ceval{=_n}{m}$ equals \etrue
iff $m$ is the same as $n$.
%
%\mypara{Equality}
%
We assume that the equality operator
is defined \emph{for all} values,
and, for functions, is defined as
extensional equality.
That is, for all
$f$ and
$f'$
we have
$\evals{(f = f')}{\etrue}
  \quad \mbox{iff} \quad
  \forall v.\ \betaeq{f\ v}{f'\ v}$.
We assume source \emph{terms} only contain implementable equalities
over non-function types; the above only appears in \emph{refinements}
and allows us to state and prove facts about extensional
equality~\S~\ref{subsec:extensionality}.
%% % \RJ{CHECK}

%%That is, for all
%%$f \defeq \efun{x}{\typ}{e}$ and
%%$f' \defeq \efun{x}{\typ}{e'}$
%%we have
%%$$\evals{(f = f')}{\etrue}
%%  \quad \mbox{iff} \quad
%%  \forall v.\ \evalsto{(\SUBST{e}{x}{v} = \SUBST{e'}{x}{v})}{\etrue}
%%$$

\subsection{Types}

\corelan types include basic types, which are \emph{refined} with predicates,
and dependent function types.
\emph{Basic types} \btyp comprise integers, booleans, and a family of data-types
$T$ (representing lists, trees \etc.)
For example the data type \tintlist represents lists of integers.
We refine basic types with predicates (boolean valued expressions \refa) to obtain
\emph{basic refinement types} $\tref{v}{\btyp}{\refa}$.
Finally, we have dependent \emph{function types} $\tfun{x}{\typ_x}{\typ}$
where the input $x$ has the type $\typ_x$ and the output $\typ$ may
refer to the input binder $x$.
We write $\btyp$ to abbreviate $\tref{v}{\btyp}{\etrue}$,
and \tfunbasic{\typ_x}{\typ} to abbreviate \tfun{x}{\typ_x}{\typ} if
$x$ does not appear in $\typ$.
We use $r$ to refer to refinements.

\mypara{Denotations}
Each type $\typ$ \emph{denotes} a set of expressions $\interp{\typ}$,
that are defined via the dynamic semantics~\cite{Knowles10}.
Let $\shape{\typ}$ be the type we get if we erase all refinements
from $\typ$ and $\bhastype{}{e}{\shape{\typ}}$ be the
standard typing relation for the typed lambda calculus.
Then, we define the denotation of types as:
\begin{align*}
\interp{\tref{x}{\btyp}{r}} \defeq &
    \{e \mid  \bhastype{}{e}{\btyp},
              \mbox{ if } \evalsto{e}{w}
              \mbox{ then } \evalsto{r\subst{x}{w}}{\etrue} \}\\
\interp{\tfun{x}{\typ_x}{\typ}} \defeq &
    \{e \mid  \bhastype{}{e}{\shape{\tfunbasic{\typ_x}{\typ}}},
              \forall e_x \in \interp{\typ_x}.\ \eapp{e}{e_x} \in \interp{\typ\subst{x}{e_x}}
    \}
\end{align*}

\mypara{Constants}
For each constant $c$ we define its type \constty{c}
such that $c \in \interp{\constty{c}}$.
For example,
$$
\begin{array}{lcl}
\constty{3} &\doteq& \tref{v}{\tint}{v = 3}\\
\constty{+} &\doteq& \tfun{\ttx}{\tint}{\tfun{\tty}{\tint}{\tref{v}{\tint}{v = x + y}}}\\
\constty{\leq} &\doteq& \tfun{\ttx}{\tint}{\tfun{\tty}{\tint}{\tref{v}{\tbool}{v \Leftrightarrow x \leq y}}}\\
\end{array}
$$
So, by definition we get the constant typing lemma
\begin{lemma}{[Constant Typing]}\label{lemma:constants}
Every constant $c \in \interp{\constty{c}}$.
\end{lemma}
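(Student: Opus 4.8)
The plan is to prove the lemma by a \emph{case analysis} on the finite set of syntactic forms a constant $c$ can take, checking in each case that $c$ satisfies the membership conditions imposed by the denotation $\interp{\constty{c}}$. Since $\constty{c}$ is either a basic refinement type or a (curried) dependent function type, the two relevant conditions are: (i) that $c$ has the right erased base type under $\bhastype{}{c}{\shape{\constty{c}}}$, and (ii) the refinement or application obligations spelled out in the definition of $\interp{\cdot}$. Condition (i) is immediate from the erased typing rules for constants, so the work is entirely in (ii).

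For the \emph{literal} constants, such as an integer $3$ with $\constty{3} = \tref{v}{\tint}{v = 3}$, or the booleans $\etrue,\efalse$, the argument is direct. Such a constant is already a value, so $\evalsto{3}{3}$, and the only obligation is that the refinement evaluates to $\etrue$: here $3 = 3$ reduces to $\etrue$ by the operational semantics of the primitive equality operator, and similarly for the boolean literals. No further reduction is needed.

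For the \emph{operator} constants such as $+$ and $\leq$, whose types are curried dependent function types, I would unfold the function-type denotation one argument at a time. Given $e_x \in \interp{\tint}$ I must show $\eapp{+}{e_x}$ lies in $\interp{\tfun{y}{\tint}{\tref{v}{\tint}{v = x + y}}\subst{x}{e_x}}$, and then, given a further $e_y \in \interp{\tint}$, that $\eapp{\eapp{+}{e_x}}{e_y}$ lies in $\interp{\tref{v}{\tint}{v = e_x + e_y}}$. The reduction rule $\ceval{c}{v}$ handles this: a fully applied operator on values steps to the primitive result, while a partial application such as $\eapp{+}{v_x}$ steps to a constant like $+_{v_x}$, which is again a value of the appropriate function type, discharging the intermediate currying step. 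The key observation is that if $\eapp{\eapp{+}{e_x}}{e_y}$ reduces to a value $w$, then since constant application forces its arguments, $e_x$ and $e_y$ must first reduce to values $v_x, v_y$ with $w = \ceval{\ceval{+}{v_x}}{v_y}$, \ie $w$ is the sum $v_x + v_y$.

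The main obstacle, and the only step requiring care, is closing the output refinement obligation for the operator constants: I must show that the refinement $v = e_x + e_y$, instantiated at the result $w$, reduces to $\etrue$, even though $e_x$ and $e_y$ are \emph{arbitrary expressions} in the domain denotation rather than values. This is where determinism of reduction is essential: because $\evalsto{e_x}{v_x}$ and $\evalsto{e_y}{v_y}$, the expression $e_x + e_y$ is $\betaeq{}{}$-equal to $v_x + v_y$, which reduces to $w$; hence $w = e_x + e_y$ reduces to $w = w$ and then to $\etrue$. The same pattern, relying on the connection between an argument expression and its value form, covers $\leq$ and the remaining primitive operators, completing the case analysis.
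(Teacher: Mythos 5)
Your proof is correct, but it follows a genuinely different route from the paper, because the paper offers essentially no proof at all: there, Lemma~\ref{lemma:constants} holds \emph{by definition}. The text reads ``for each constant $c$ we define its type $\constty{c}$ such that $c \in \interp{\constty{c}}$'', so denotation membership is a \emph{requirement imposed on} any admissible constant signature rather than a fact derived from it; the lemma is then only \emph{used}, in the form that $\constty{c} \defeq \tfun{x}{\typ_x}{\typ}$ forces $\ceval{c}{w} \in \interp{\typ\subst{x}{w}}$ for values $w \in \interp{\typ_x}$, and the sole verification the paper gestures at is for data constructors (``easy to prove \ldots by construction''). You instead treat the listed types of $3$, $+$, $\leq$ as the definition and discharge the obligation concretely: case analysis on constants, unfolding the function denotation one argument at a time, letting partial applications step to intermediate constants such as $+_{v_x}$, and---the one delicate step, which you correctly isolate---closing the output refinement when the domain elements are arbitrary expressions $e_x, e_y$ rather than values, so that $w = e_x + e_y$ reduces to $w = w$ and then to $\etrue$. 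The paper's stipulative reading buys modularity: soundness is parametric in whatever constant signature satisfies the requirement. Your reading buys the assurance that \corelan's concrete signature actually meets that requirement, a real (if routine) obligation that the paper leaves implicit. Two minor points: in the delicate step, what you actually need is context-closure of reduction together with determinism (or confluence) of evaluation---the denotation demands that the refinement \emph{reduce} to $\etrue$, so the appeal to $\betaeq{}{}$ is a detour, though your stated conclusion is the right one; and your case split omits data constructors, which the paper also encodes as constants and whose measure-strengthened types carry their own, separately discharged, instance of this lemma.
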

Thus, if $\constty{c} \defeq \tfun{x}{\typ_x}{\typ}$,
then for every value $w \in \interp{\typ_x}$, we require
$\ceval{c}{w} \in \interp{\typ\subst{x}{w}}$.

%% \mypara{Equality}
%% The equality predicate
%% $\haseq{B}$, is defined to be true for \tint and \tbool,
%% and for each type constructor $T$
%% whose values can be finitely compared.
%% %
%% So, by definition we get the equality lemma
%% %
%% \begin{lemma}{[Equality]}\label{lemma:equality}
%% If $\haseq{B}$ then for each value $\bhastype{\emptyset}{w}{B}$
%% \evalsto{w = w}{\etrue}
%% \end{lemma}

\subsection{Refinement Reflection}
\label{subsec:logicalannotations}
The simple, but key idea in our work is to
\emph{strengthen} the output type of functions
with a refinement that \emph{reflects} the
definition of the function in the logic.
We do this by treating each
$\erefname$-binder:
${\erefb{f}{\gtyp}{e}{\prog}}$
as a $\eletname$-binder:
${\eletb{f}{\exacttype{\gtyp}{e}}{e}{\prog}}$
during type checking (rule $\rtreflect$ in Figure~\ref{fig:typing}).

\mypara{Reflection}
We write \exacttype{\typ}{e} for the \emph{reflection}
of term $e$ into the type $\typ$,  defined by strengthening
\typ as:
$$
\begin{array}{lcl}
\exacttype{\tref{v}{\btyp}{r}}{e}
  & \defeq
  & \tref{v}{\btyp}{r \land v = e}\\
\exacttype{\tfun{x}{\typ_x}{\typ}}{\efun{y}{}{e}}
  & \defeq
  & \tfun{x}{\typ_x}{\exacttype{\typ}{e\subst{y}{x}}}
\end{array}
$$
As an example, recall from \S~\ref{sec:overview}
that the @reflect fib @ strengthens the type of
@fib@ with the reflected refinement @fibP@.
%% NV In Overview, we have fibP v n = v = fib n && fibR v n
%% NV Here we get the reflection part (fibR v n)
%% NV which we can verify
%% NV at each fix invocation we also get the v = fib n portion
%% NV via the exact rule
%% NV We can not add the v = fib n as a port condition, because
%% NV we cannot prove it.

\mypara{Consequences for Verification}
Reflection has two consequences for verification.
First, the reflected refinement is \emph{not trusted};
it is itself verified (as a valid output type)
during type checking.
Second, instead of being tethered to quantifier
instantiation heuristics or having to program
``triggers'' as in Dafny~\citep{dafny} or
\fstar~\citep{fstar}
the programmer can predictably ``unfold'' the
definition of the function during a proof simply
by ``calling'' the function, which we have found
to be a very natural way of structuring
proofs~\S~\ref{sec:evaluation}.

\subsection{Refining \& Reflecting Data Constructors with Measures}
\label{subsec:measures}
\label{subsec:list}

% We reuse the notion of \emph{measures}~\cite{Vazou14}
% to reflect functions over datatypes into the refinement
% logic.

We assume that each data type is equipped with
a set of \emph{measures} which are \emph{unary}
functions whose (1)~domain is the data type, and
(2)~body is a single case-expression over the
datatype~\cite{Vazou14}:
$$\emeasb{f}
         {\gtyp}
         {\efun{x}{\typ}{\ecase{y}{x}{\dc_i}{\overline{z}}{e_i}}}$$
For example, @len@ measures the size of an $\tintlist$:
\begin{code}
  measure len :: [Int] -> Nat
  len = \x -> case x of
                []     -> 0
                (x:xs) -> 1 + len xs
\end{code}

\mypara{Checking and Projection}
We assume the existence of measures that
\emph{check} the top-level constructor,
and \emph{project} their individual fields.
\NV{Remove this pointer since we removed the text}
In \S~\ref{subsec:embedding} we show how to
use these measures to reflect functions over
datatypes.
%
% Such measures are straightforward to generate
% automatically from the data-type definition.)
%
For example, for lists, we assume the existence of measures:
\begin{code}
  isNil []      = True
  isNil (x:xs)  = False

  isCons (x:xs) = True
  isCons []     = False

  sel1 (x:xs)   = x
  sel2 (x:xs)   = xs
\end{code}

\mypara{Refining Data Constructors with Measures}
We use measures to strengthen the types
of data constructors, and we use these
strengthened types during construction
and destruction (pattern-matching).
Let:
(1)~$\dc$ be a data constructor,
   with \emph{unrefined} type
   $\tfun{\overline{x}}{\overline{\gtyp}}{T}$
(2)~the $i$-th measure definition with
   domain $T$ is:
$$\emeasb{f_i}
         {\gtyp}
         {\efun{x}{\typ}{\ecase{y}{x}{\dc}{\overline{z}}{e_{i}}}}
$$
Then, the refined type of $\dc$ is defined:
$$
\constty{\dc} \defeq
   \tfun{\overline{x}}
        {\overline{\typ}}
        {\tref{v}{T}{ \wedge_i f_i\ v = \SUBST{e_{i}}{\overline{z}}{\overline{x}}}}
$$

Thus, each data constructor's output type is refined to reflect
the definition of each of its measures.
For example, we use the measures @len@, @isNil@, @isCons@, @sel1@,
and @sel2@ to strengthen the types of $\dnull$ and $\dcons$ to:
\begin{align*}
\constty{\dnull}  \defeq & \tref{v}{\tintlist}{r_{\dnull}} \\
\constty{\dcons}  \defeq & \tfun{x}{\tint}
                                   {\tfun{\mathit{xs}}
                                         {\tintlist}
                                         {\tref{v}{\tintlist}{r_\dcons}}}
\intertext{where the output refinements are}
r_{\dnull} \defeq &\ \mathtt{len}\ v = 0
             \land  \mathtt{isNil}\ v
             \land  \lnot \mathtt{isCons}\ v \\
r_{\dcons} \defeq &\ \mathtt{len}\ v = 1 + \mathtt{len}\ \mathit{xs}
             \land  \lnot \mathtt{isNil}\ v
             \land  \mathtt{isCons}\ v \\
             \land & \  \mathtt{sel1}\ v = x
             \land  \mathtt{sel2}\ v = \mathit{xs}
\end{align*}
It is easy to prove that Lemma~\ref{lemma:constants}
holds for data constructors, by construction.
For example, $\mathtt{len}\ \dnull = 0$ evaluates to $\tttrue$.

%%
%% The above annotation \emph{strengthens} the types of data constructors
%% $\dc_i$ to reflect the
%% behavior of $f$:
%%
%%
%% \preproc{\eletrecoptsmall{f}{\efun{x}{\typ}{\ecase{y}{x}{\dc_i}{\overline{z}}{e_i}}}{\gtyp}{M}{\prog}}
%%
%% %
%% Where \exacttypefun{f}{\gtyp}{e} strengthens the result $v$ of the type \typ to exactly
%% describe that $f\ v = e$:
%% %
%% \begin{align*}
%% \exacttypefun{f}{\tref{v}{\btyp}{r}}{e} &= \tref{v}{\btyp}{r \land f\ v = e}\\
%% \exacttypefun{f}{\tfun{x}{\typ_x}{\typ}}{\efun{y}{}{e}} &=\tfun{x}{\typ_x}{\exacttype{\typ}{e\subst{y}{x}}}
%% \end{align*}

\subsection{Typing Rules}
\begin{figure}
\emphbf{Well Formedness}\hfill{\fbox{\iswellformed{\env}{\typ}}}\\

$$
\inference{
	\hastype{\env,\tbind{v}{\btyp}}{\refa}{\tbool}
}{
	\iswellformed{\env}{\tref{v}{\btyp}{\refa}}
}[\rwbase]
\inference{
	\iswellformed{\env}{\typ_x} &
	\iswellformed{\env,\tbind{x}{\typ_x}}{\typ}
}{
	\iswellformed{\env}{\tfun{x}{\typ_x}{\typ}}
}[\rwfun]
$$

\emphbf{Subtyping}\hfill{\fbox{\issubtype{\env}{\typ_1}{\typ_2}}}\\

$$
\inference{
	\forall \sub\in\interp{\env}.
	\interp{\applysub{\sub}{\tref{v}{\btyp}{\refa_1}}}
	\subseteq
	\interp{\applysub{\sub}{\tref{v}{\btyp}{\refa_2}}}
}{
	\issubtype{\env}{\tref{v}{\btyp}{\refa_1}}{\tref{v}{\btyp}{\refa_2}}
}[\rsubbase]
$$

$$
\inference{
	\issubtype{\env}{\typ_x'}{\typ_x} &
	\issubtype{\env,\tbind{x}{\typ_x'}}{\typ}{\typ'}
}{
	\issubtype{\env}{\tfun{x}{\typ_x}{\typ}}{\tfun{x}{\typ_x'}{\typ'}}
}[\rsubfun]
$$

\emphbf{Typing}\hfill{\fbox{\hastype{\env}{\prog}{\typ}}}\\
$$
\inference{
	\tbind{x}{\gtyp}\in\env
}{
	\hastype{\env}{x}{\gtyp}
}[\rtvar]
\qquad
\inference{
}{
	\hastype{\env}{c}{\constty{c}}
}[\rtconst]
$$

$$
\inference{
	\hastype{\env}{\prog}{\typ'} &
	\issubtype{\env}{\typ'}{\typ}
}{
	\hastype{\env}{\prog}{\typ}
}[\rtsub]
$$
$$
\inference{
    % \haseq{\btyp} &
	\hastype{\env}{e}{\tref{v}{\btyp}{\reft_r}}
}{
	\hastype{\env}{e}{\tref{v}{\btyp}{\reft_r\land v = e}}
}[\rtexact]
$$
$$
\inference{
	\hastype{\env, \tbind{x}{\typ_x}}{e}{\typ}
}{
	\hastype{\env}{\efun{x}{\typ}{e}}{\tfun{x}{\typ_x}{\typ}}
}[\rtfun]
$$

$$
\inference{
	\hastype{\env}{e_1}{(\tfun{x}{\typ_x}{\typ})} &&
	\hastype{\env}{e_2}{\typ_x}
}{
	\hastype{\env}{e_1\ e_2}{\typ}
}[\rtapp]
$$

%%% $$
%%% \inference{
	%%% \hastype{\env}{e}{\gtyp_x} &
	%%% \hastype{\env, \tbind{x}{\gtyp_x}}{\prog}{\typ} &
	%%% \iswellformed{\env}{\typ}
%%% }{
	%%% \hastype{\env}{\elet{x}{e}{\gtyp_x}{}{\prog}}{\typ}
%%% }[\rtlet]
%%% $$
%%%
%%% $$
%%% \inference{
	%%% \hastype{\env, \tbind{x}{\gtyp_x}}{e}{\gtyp_x} &
	%%% \hastype{\env, \tbind{x}{\gtyp_x}}{\prog}{\typ} &
	%%% \iswellformed{\env}{\typ}
%%% }{
	%%% \hastype{\env}{\eletrec{x}{e}{\gtyp_x}{}{\prog}}{\typ}
%%% }[\rtletrec]
%%% $$
%% \NV{For soundness, it is important that f cannot appear in $\gtyp_2$}
$$
\inference
	{\hastype{\env, \tbind{x}{\gtyp_x}}{\bd_x}{\gtyp_x} &
	 \iswellformed{\env, \tbind{x}{\gtyp_x}}{\typ_x} \\
	 \hastype{\env, \tbind{x}{\gtyp_x}}{\bd}{\gtyp} &
	 \iswellformed{\env}{\typ}}
	{\hastype{\env}{\eletb{x}{\gtyp_x}{\bd_x}{\bd}}{\typ}}
	[\rtlet]
$$

$$
\inference
	{\hastype{\env}
	 				 {\eletb{f}{\exacttype{\gtyp_f}{e}}{e}{\prog}}
					 {\typ}
	}
	{\hastype{\env}
					 {\erefb{f}{\gtyp_f}{e}{\prog}}
					 {\typ}
	}[\rtreflect]
$$

$$
\inference{
	\hastype{\env}{e}{\tref{v}{T}{e_r}} & \iswellformed{\env}{\typ} \\
	& \forall i. \constty{\dc_i} = \tfunbasic{\overline{\tbind{y_j}{\typ_j}}}{\tref{v}{T}{e_{r_i}}} \\
	& \hastype{\env, \overline{\tbind{y_j}{\typ_j}}, \tbind{x}{\tref{v}{T}{e_r \land e_{r_i}}} }{e_i}{\typ}
}{
	\hastype{\env}{\ecase{x}{e}{\dc_i}{\overline{y_i}}{e_i}}{\typ}
}[\rtcase]
$$
$$
\inference{
	\iswellformed{\env}{\typ}
}{
	\hastype{\env}
	        {\efix{}}
	        {\tfun{x}{\typ}{\tfun{y}{\typ}{\typ}}}
}[\rtfix]
$$
\caption{Typing of \corelan}
\label{fig:typing}
\end{figure}

Next, we present the type-checking
judgments and rules of \corelan.

\mypara{Environments and Closing Substitutions}
A \emph{type environment} $\env$ is a sequence of type bindings
$\tbind{x_1}{\typ_1},\ldots,\tbind{x_n}{\typ_n}$. An environment
denotes a set of \emph{closing substitutions} $\sto$ which are
sequences of expression bindings:
$\gbind{x_1}{e_1}, \ldots, \gbind{x_n}{e_n}$ such that:
$$
\interp{\env} \defeq  \{\sto \mid \forall \tbind{x}{\typ} \in \Env.
                                    \sto(x) \in \interp{\applysub{\sto}{\typ}} \}
$$

\mypara{Judgments}
We use environments to define three kinds of
rules: Well-formedness, Subtyping,
and Typing~\citep{Knowles10,Vazou14}.
%
%\mypara{Well-formedness}
A judgment \iswellformed{\env}{\typ} states that
the refinement type $\typ$ is well-formed in
the environment $\env$.
Intuitively, the type $\typ$ is well-formed if all
the refinements in $\typ$ are $\tbool$-typed in $\env$.
%
%\mypara{Subtyping}
A judgment \issubtype{\env}{\typ_1}{\typ_2} states
that the type $\typ_1$ is a subtype of %the type
$\typ_2$ in the environment $\env$.
Informally, $\typ_1$ is a subtype of $\typ_2$ if, when
the free variables of $\typ_1$ and $\typ_2$
are bound to expressions described by $\env$,
the denotation of $\typ_1$
is \emph{contained in} the denotation of $\typ_2$.
Subtyping of basic types reduces to denotational containment checking.
%
%\mypara{Implication}
%%A judgment \issubref{\Env}{p_1}{p_2} states
%%that the predicate $p_1$ \emph{implies}
%%the predicate $p_2$ in the environment $\Env$.
%
That is, for any closing substitution $\sto$
in the denotation of $\env$, for every expression $e$,
if $e \in \interp{\applysub{\sto}{\typ_1}}$ then
$ e \in \interp{\applysub{\sto}{\typ_2}}$.
%
%\mypara{Typing}
A judgment \hastype{\env}{\prog}{\typ} states that
the program $\prog$ has the type $\typ$ in
the environment $\env$.
That is, when the free variables in $\prog$ are
bound to expressions described by $\env$, the
program $\prog$ will evaluate to a value
described by $\typ$.

\mypara{Rules}
All but three of the rules are standard~\cite{Knowles10,Vazou14}.
First, rule \rtreflect is used to strengthen the type of each
reflected binder with its definition, as described previously
in \S~\ref{subsec:logicalannotations}.
%
% \NV{FIX:Eq}
% applies only to expressions that can be finitely compared
% (\ie whose type satisfies the \haseq{B} predicate) and
Second, rule \rtexact strengthens the expression with
a singleton type equating the value and the expression
(\ie reflecting the expression in the type).
This is a generalization of the ``selfification'' rules
from \cite{Ou2004,Knowles10}, and is required to
equate the reflected functions with their definitions.
For example, the application $(\fib\ 1)$ is typed as
${\tref{v}{\tint}{\fibdef\ v\ 1 \wedge v = \fib\ 1}}$ where
the first conjunct comes from the (reflection-strengthened)
output refinement of \fib~\S~\ref{sec:overview}, and
the second conjunct comes from rule~\rtexact.
Finally, rule \rtfix is used to type the intermediate
$\texttt{fix}$ expressions that appear, not in the
surface language but as intermediate terms in the
operational semantics.

\mypara{Soundness}
Following \undeclang~\citep{Vazou14}, we can show that
evaluation preserves typing and that typing implies
denotational inclusion.
\begin{theorem}{[Soundness of \corelan]}\label{thm:safety}
\begin{itemize}
\item\textbf{Denotations}
If $\hastype{\env}{\prog}{\typ}$ then
$\forall \sto\in \interp{\env}. \applysub{\sto}{\prog} \in \interp{\applysub{\sto}{\typ}}$.
\item\textbf{Preservation}
If \hastype{\emptyset}{\prog}{\typ}
       and $\evalsto{\prog}{w}$ then $\hastype{\emptyset}{w}{\typ}$.
\end{itemize}
\end{theorem}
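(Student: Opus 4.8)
The plan is to prove the two parts in order, deriving \textbf{Preservation} from the \textbf{Denotations} result together with a closure lemma, following the structure of the soundness proof for \undeclang~\citep{Vazou14}. The heart of the argument is the Denotations statement, which I would establish by induction on the derivation of $\hastype{\env}{\prog}{\typ}$, showing for each typing rule that every closing substitution $\sto \in \interp{\env}$ sends $\prog$ into $\interp{\applysub{\sto}{\typ}}$. The base cases \rtvar and \rtconst are immediate: the former from the definition of $\interp{\env}$ as a set of closing substitutions, the latter from the Constant Typing Lemma~(Lemma~\ref{lemma:constants}). Rule \rtsub needs an auxiliary \emph{subtyping soundness} lemma, namely that $\issubtype{\env}{\typ_1}{\typ_2}$ implies $\interp{\applysub{\sto}{\typ_1}} \subseteq \interp{\applysub{\sto}{\typ_2}}$ for all $\sto \in \interp{\env}$, proved by induction on the subtyping derivation, with \rsubbase holding definitionally and \rsubfun requiring the usual contravariant/covariant inclusions. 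Rules \rtfun and \rtapp are standard and rest on a \emph{substitution lemma} stating that extending $\sto$ with $x \mapsto e_x$ for $e_x \in \interp{\applysub{\sto}{\typ_x}}$ yields a valid closing substitution for the extended environment; both then follow by the induction hypothesis under the extended substitution. Rule \rtcase proceeds by case analysis on the evaluated scrutinee, exploiting that the refined constructor types of \S~\ref{subsec:measures} record every measure's value, so the refinement $e_r \land e_{r_i}$ available in branch $i$ is sound.

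The two distinctive rules demand separate attention. For \rtexact I would show that if $e \in \interp{\tref{v}{\btyp}{r}}$ then $e \in \interp{\tref{v}{\btyp}{r \land v = e}}$: the base-type obligation is unchanged, and for the added conjunct I use that whenever $e \evalsto w$ we have $e \betaeq w$, so $v = e$ evaluates to $\etrue$ at $v \mapsto w$ by the (extensional) definition of equality on the reflected term. Rule \rtreflect is then discharged by reduction to \rtlet: since $\erefb{f}{\gtyp_f}{e}{\prog}$ is typed via $\eletb{f}{\exacttype{\gtyp_f}{e}}{e}{\prog}$, the obligation becomes that the body $e$ inhabits the strengthened type $\exacttype{\gtyp_f}{e}$, which follows from \rtexact applied structurally along the definition of $\exacttype{\cdot}{\cdot}$.

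The main obstacle is \textbf{recursion}, i.e.\ the interaction of \rtlet and \rtfix. Because let-binders are recursive, $\eletb{x}{\gtyp}{\bd_x}{\bd}$ reduces to $\SUBST{\bd}{x}{\efix{(\efun{x}{\gtyp}{\bd_x})}}$, and the induction hypothesis for $\bd_x$ is stated under an environment that already assumes $x$ has type $\gtyp_x$; I must therefore show that the fixpoint $\efix{(\efun{x}{\gtyp}{\bd_x})}$ genuinely lies in $\interp{\applysub{\sto}{\gtyp_x}}$. This is exactly where the denotation being a \emph{partial-correctness} predicate is essential: $\interp{\tref{v}{\btyp}{r}}$ constrains $e$ only \emph{if} it reduces to a value, so a diverging recursion vacuously inhabits every base type, while a converging one is handled by the self-application typing of \rtfix, just as in \undeclang. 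I expect the most delicate bookkeeping to be checking that the assignment of $\tfun{x}{\typ}{\tfun{y}{\typ}{\typ}}$ to $\efix{}$ in rule \rtfix is consistent with the unrolling $\efix{\prog} \hookrightarrow \prog\ (\efix{\prog})$.

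Finally, \textbf{Preservation} follows from Denotations. Instantiating the first part at $\env = \emptyset$ and the empty substitution gives $\prog \in \interp{\typ}$. A short lemma that $\interp{\typ}$ is closed under $\hookrightarrow$ — immediate for base types, since values do not reduce, any $w$ reachable from the reduct is also reachable from $\prog$, and the erased simply-typed system enjoys subject reduction, with the standard argument at function types — yields $w \in \interp{\typ}$ whenever $\evalsto{\prog}{w}$. Since $w$ is a value, a final case analysis (a constant via \rtconst, a constructor application via the refined constructor types, or a $\lambda$-abstraction via \rtfun), composed with \rtexact and \rtsub using $w \in \interp{\typ}$, lets me re-derive $\hastype{\emptyset}{w}{\typ}$, completing the proof.
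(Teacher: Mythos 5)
Your \textbf{Denotations} argument is essentially the paper's: the paper proves this part by induction on the typing derivation, reusing the \undeclang development~\citep{Vazou14} wholesale for the unchanged rules and checking only the modified ones (\rtexact, \rtlet, \rtreflect, \rtfix). Your handling of \rtexact (the added conjunct $v = e$ evaluates to \etrue once $e$ reduces to $w$) and of \rtreflect (reduction to the \rtlet form) matches the paper, and the fixpoint obligation you flag as delicate is discharged there by appealing to the prior result that $\efix{}_\typ \in \interp{(\typ \rightarrow \typ) \rightarrow \typ}$.

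The genuine gap is in \textbf{Preservation}. You instantiate Denotations to get the semantic fact $w \in \interp{\typ}$ and then assert that a ``final case analysis'' on the shape of $w$ re-derives the syntactic judgment $\hastype{\emptyset}{w}{\typ}$. That last step is a semantic-to-syntactic completeness claim, and it is exactly the hard (and in general false) direction. For a constant it may go through via \rtconst, \rtsub and the semantic rule \rsubbase, but for a lambda value $w = \efun{x}{\typ'}{e}$ at type $\tfun{x}{\typ_x}{\typ}$, rule \rtfun demands a derivation $\hastype{\tbind{x}{\typ_x}}{e}{\typ}$ for the \emph{open} body, whereas the denotation only tells you that every \emph{closed} instantiation $\eapp{w}{e_x}$ lands in the right denotation; nothing lets you convert that family of semantic facts into a syntactic derivation, since the type system is sound but not complete with respect to denotations. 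The paper avoids this entirely by proving Preservation \emph{syntactically}, not via Denotations: a single-step subject-reduction lemma (if $\hastype{\emptyset}{\prog}{\typ}$ and $\evals{\prog}{\prog'}$ then $\hastype{\emptyset}{\prog'}{\typ}$), proved by induction on the typing derivation --- again only the modified rules need checking, using the substitution lemma of \undeclang and the fact that subtyping and the \rtexact refinement are closed under evaluation --- and then iterated along $\evalsto{\prog}{w}$. You should restructure your Preservation argument along those lines rather than routing it through the denotational statement.
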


\subsection{From Programs \& Types to Propositions \& Proofs}

The denotational soundness Theorem~\ref{thm:safety}
lets us interpret well typed programs as proofs of
propositions.

\NV{say that definition is a context that will be used later}
\mypara{``Definitions''}
A \emph{definition} $\defn$ is a sequence of reflected binders:
$$\defn \ ::= \ \bullet \spmid \erefb{x}{\gtyp}{e}{\defn}$$
A \emph{definition's environment} $\env(\defn)$ comprises
its binders and their reflected types:
\begin{align*}
\aenv(\bullet)                    \defeq & \emptyset \\
\aenv(\erefb{f}{\gtyp}{e}{\defn}) \defeq & (f, \exacttype{\gtyp}{e}),\ \env(\defn) \\
\end{align*}
A \emph{definition's substitution} $\sto(\defn)$ maps each binder
to its definition:
\begin{align*}
\sto(\bullet)                     \defeq & \emptysto \\
\sto(\erefb{f}{\gtyp}{e}{\defn})  \defeq & \extendsto{f}{\efix{f}\ e}{\sto(\defn)}
\end{align*}

\mypara{``Propositions''}
A \emph{proposition} is a type
$$\tbind{x_1}{\typ_1} \rightarrow \ldots
  \rightarrow \tbind{x_n}{\typ_n}
  \rightarrow \tref{v}{\tunit}{\ppn}$$
For brevity, we abbreviate propositions like the above to
$\tfun{\overline{x}}{\overline{\typ}}{\ttref{\ppn}}$
and we call $\ppn$ the \emph{proposition's refinement}.
For simplicity we assume that $\freevars{\typ_i} = \emptyset$.

\mypara{``Validity''}
\NV{add termination: proofs should provably terminates}

A proposition $\tfun{\overline{x}}{\overline{\typ}}{\ttref{\ppn}}$
is \emph{valid under} $\defn$ if
$$\forall \overline{w} \in \interp{\overline{\typ}}.\
  \evalsto{\applysub{\sto(\defn)}{\SUBST{\ppn}{\overline{x}}{\overline{w}}}}{\etrue}$$
That is, the proposition is valid if its refinement
evaluates to $\etrue$ for every (well typed)
interpretation for its parameters $\overline{x}$
under $\defn$.

\mypara{``Proofs''}
A binder $\bd$ \emph{proves} a proposition $\gtyp$ under $\defn$ if
$$\hastype{\emptyset}{\defn[\eletb{x}{\typ}{\bd}{\eunit}]}{\tunit}$$
That is, if the binder $\bd$ has the proposition's type $\gtyp$
under the definition $\defn$'s environment.

\begin{theorem}{[Proofs]} \label{thm:validity}
If $\bd$ proves $\typ$ under $\defn$ then $\typ$ is valid under $d$.
\end{theorem}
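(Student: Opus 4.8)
The plan is to reduce validity to the Denotations part of the Soundness Theorem~\ref{thm:safety}. First I would peel off the outer structure of the hypothesis. Since $\bd$ proves $\typ$ under $\defn$, the closed program $\defn[\eletb{x}{\typ}{\bd}{\eunit}]$ is well typed at $\tunit$ in the empty environment. Inverting this derivation, each $\erefname$-binder of $\defn$ is typed by rule $\rtreflect$ (i.e.\ as a $\eletname$-binder carrying the reflected type $\exacttype{\gtyp}{e}$), which is exactly what builds the environment $\env(\defn)$; the innermost $\eletname$-binding is then typed by $\rtlet$, whose first premise yields a sub-derivation $\hastype{\env(\defn), \tbind{x}{\typ}}{\bd}{\typ}$. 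Thus the proof term $\bd$ genuinely has the proposition type $\typ = \tfun{\overline{x}}{\overline{\typ}}{\ttref{\ppn}}$ in the definitions' environment.

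Next I would produce a closing substitution witnessing membership in that environment and invoke soundness. I would take $\sto(\defn)$ extended by mapping $x$ to the recursive unfolding $\efix{(\efun{x}{\typ}{\bd})}$ of the proof term itself, which is forced because proofs may recurse (induction is recursion, as in the @zeroR@ example). I would then check that this extended substitution lies in $\interp{\env(\defn), \tbind{x}{\typ}}$: for each reflected binder this is the statement that its fixpoint inhabits its own reflected type, which holds because reflection is verified rather than trusted (the body is checked against $\exacttype{\gtyp}{e}$ in rules $\rtreflect$/$\rtlet$) together with $\rtfix$ and soundness; for $x$ it is the statement that the recursive proof inhabits $\typ$, which is again the Denotations conclusion for $\bd$ and so is established by a mild bootstrapping. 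Applying the Denotations part of Theorem~\ref{thm:safety} to the sub-derivation then yields $\applysub{\sto(\defn)}{\bd} \in \interp{\applysub{\sto(\defn)}{\typ}}$.

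Finally I would unfold the two denotations. Because $\freevars{\typ_i}=\emptyset$, substitution leaves the argument types fixed, so $\applysub{\sto(\defn)}{\typ} = \tfun{\overline{x}}{\overline{\typ}}{\ttref{\applysub{\sto(\defn)}{\ppn}}}$. Unfolding the denotation of the dependent function type and instantiating at an arbitrary $\overline{w} \in \interp{\overline{\typ}}$ gives $\applysub{\sto(\defn)}{\bd}\ \overline{w} \in \interp{\tref{v}{\tunit}{\applysub{\sto(\defn)}{\SUBST{\ppn}{\overline{x}}{\overline{w}}}}}$, using that the proposition refinement $\ppn$ does not mention the proof binder $x$. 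By the denotation of a refined base type, if this application reduces to a value then its refinement reduces to $\etrue$; since the carrier is $\tunit$ and $v \notin \freevars{\ppn}$, that refinement is exactly $\applysub{\sto(\defn)}{\SUBST{\ppn}{\overline{x}}{\overline{w}}}$, which is precisely the obligation in the definition of validity.

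The hard part is discharging the side condition ``if the application reduces to a value,'' which the conditional shape of the refined-type denotation leaves hanging: a diverging proof would satisfy the denotation vacuously yet establish nothing. To close the gap I would appeal to the termination and totality guarantee enforced by \toolname, which ensures $\applysub{\sto(\defn)}{\bd}\ \overline{w}$ reduces to the unique unit value $\eunit$, thereby firing the refinement obligation. This is the only place the argument needs more than the raw Soundness theorem, and it is the step I expect to be most delicate, since it is exactly what makes proofs trustworthy rather than merely conditionally so.
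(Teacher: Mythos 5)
Your proposal takes the same route as the paper's proof --- invert the typing of $\defn[\eletb{x}{\typ}{\bd}{\eunit}]$, apply the Denotations half of Theorem~\ref{thm:safety} together with $\sto(\defn)\in\interp{\env(\defn)}$ to obtain $\applysub{\sto(\defn)}{\bd} \in \interp{\applysub{\sto(\defn)}{\typ}}$, then unfold the denotation of the proposition type to read off validity --- but it is more careful than the paper at exactly the two points where the paper is terse. First, you account for the recursivity of the $\eletname$-binder by closing over $x$ with the fix of the proof term itself; the paper simply asserts $\hastype{\env(\defn)}{\bd}{\typ}$ and substitutes only $\sto(\defn)$, which suffices for non-recursive proofs, whereas your reading is what rule $\rtlet$ literally produces, and your ``bootstrapping'' is the same move made in the $\rtlet$ case of the soundness proof. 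Second, and more substantively, you correctly identify that the final unfolding is not free: membership in $\interp{\ttref{\ppn}}$ only forces the refinement to evaluate to $\etrue$ \emph{if} the applied proof term reduces to a value, so a diverging proof inhabits the denotation vacuously. The paper's last step, which equates $\interp{\applysub{\sto(\defn)}{\typ}}$ with the set of terms that is nonempty exactly when $\typ$ is valid, silently assumes termination; the authors' own note in that section (that proofs should provably terminate) concedes this gap. Your appeal to \toolname's termination and totality checking is the right repair in spirit, but be aware it is extrinsic to $\corelan$ as formalized in Theorem~\ref{thm:validity}: a self-contained statement needs an explicit hypothesis that $\bd$ (applied to values) terminates, which is precisely the delicacy you flagged in your final paragraph.
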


\begin{proof}
As $\bd$ proves $\typ$ under $\defn$, we have
\begin{align}
\hastype{\emptyset}{\defn[\eletb{x}{\typ}{\bd}{\eunit}]}{\tunit}
\label{pf:1} \\
\intertext{By Theorem~\ref{thm:safety} on \ref{pf:1} we get}
\sto(\defn) \in \interp{\env(\defn)} \label{pf:2}\\
\intertext{Furthermore,  by the typing rules \ref{pf:1}
implies $\hastype{\env(\defn)}{\bd}{\typ}$ and hence, via Theorem~\ref{thm:safety}}
\forall \sub \in \interp{\env(\defn)}.\ \applysub{\sub}{\bd} \in \interp{\applysub{\sub}{\typ}}
\label{pf:3} \\
\intertext{Together, \ref{pf:2} and \ref{pf:3} imply}
\applysub{\sto(\defn)}{\bd} \in \interp{\applysub{\sto(\defn)}{\typ}}
\label{pf:4}
\intertext{By the definition of type denotations, we have}
\interp{\applysub{\sto(\defn)}{\typ}}
  \defeq \{ f\ |\ \typ \mbox{ is valid under}\ \defn\}
  \label{pf:5}
\end{align}
By \ref{pf:4}, the above set is not empty, and hence $\typ$ is valid under $\defn$.
\end{proof}

\mypara{Example: Fibonacci is increasing}
In \S~\ref{sec:overview} we verified that
under a definition $\defn$ that includes \fibname,
the term \fibincrname proves
$${\tfun{n}{\tnat}{\ttref{\fib{n} \leq \fib{(n+1)}}}}$$
Thus, by Theorem~\ref{thm:validity} we get
%that the \fib{}, as defined in $\defn$ is increasing
%
%$$
%\forall n\in\interp{\tnat}. \evalsto{\fib{n} \leq \fib{(n+1)}}{\etrue}
%$$
%Equivalently
$$
\forall n. \evalsto{0 \leq n}{\etrue} \Rightarrow \evalsto{\fib{n} \leq \fib{(n+1)}}{\etrue}
$$

\section{Algorithmic Verification}\label{sec:algorithmic}

Next, we describe \smtlan, a conservative approximation
of \corelan where the undecidable type subsumption rule
is replaced with a decidable one, yielding an SMT-based
algorithmic type system that enjoys the same soundness
guarantees.

\subsection{The SMT logic \smtlan}

\begin{figure}[t!]
\centering
$$
\begin{array}{rrcl}
\emphbf{Predicates} \quad
  & \pred & ::= &
    \pred \binop \pred \spmid
    \unop \pred \\
  && \spmid & n \spmid b \spmid x \spmid \dc \spmid  x\ \overline{\pred}\\
%%  && \spmid & \forall \overline{\tbind{x}{\sort}}. \pred
  && \spmid & \eif{\pred}{\pred}{\pred}
\\[0.03in]

\emphbf{Integers} \quad
  & n
  & ::= & 0, -1, 1, \dots
\\[0.03in]

\emphbf{Booleans} \quad
  & b
  & ::= & \etrue \spmid \efalse
\\[0.03in]

\emphbf{Bin Operators} \quad
  & \binop
  & ::= & = \spmid < \spmid \land \spmid + \spmid - \spmid \dots
\\[0.03in]

\emphbf{Un Operators} \quad
  & \unop
  & ::= & \lnot \spmid \dots 
\\[0.03in]

\emphbf{Model} \quad
  & \sigma
  & ::= & \sigma, (x:\pred) \spmid \emptyset
\\[0.03in]

\emphbf{Sort Arguments} \quad
  & \sort_a
  & ::= & \tint \spmid \tbool \spmid \tuniv 
         \spmid \tsmtfun{\sort_a}{\sort_a}
\\[0.03in]
\emphbf{Sorts} \quad
  & \sort
  & ::=  & \sort_a \rightarrow \sort
\end{array}
$$
\caption{\textbf{Syntax of \smtlan}}
\label{fig:smtsyntax}
\end{figure}

\mypara{Syntax: Terms \& Sorts}
Figure~\ref{fig:smtsyntax} summarizes the syntax
of \smtlan, the \emph{sorted} (SMT-)
decidable logic of quantifier-free equality,
uninterpreted functions and linear
arithmetic (QF-EUFLIA) ~\citep{Nelson81,SMTLIB2}.
The \emph{terms} of \smtlan include
integers $n$,
booleans $b$,
variables $x$,
data constructors $\dc$ (encoded as constants),
fully applied unary \unop and binary \binop operators,
and application $x\ \overline{\pred}$ of an uninterpreted function $x$.
The \emph{sorts} of \smtlan include built-in
integer \tint and \tbool for representing
integers and booleans.
%
%% NV reflected functions and measures are first order
%% NV because
%% NV 1. they can be partially applied
%% NV 2. they can be passed as arguments
The interpreted functions of \smtlan, \eg
the logical constants $=$ and $<$,
%% NV and the uninterpreted functions app and lam
%% NV but we have not introduced these yet
have the function sort $\sort \rightarrow \sort$.
Other functional values in \corelan, \eg
reflected \corelan functions and
$\lambda$-expressions, are represented as
first-order values with
uninterpreted sort \tsmtfun{\sort}{\sort}.
%
%%The uninterpreted functions of \smtlan, which
%%correspond to reflected \corelan functions,
%%have the function sort $\sort \rightarrow \sort$.
%%%
%%Other functional values in \corelan, \eg
%%$\lambda$-expressions, are represented as
%%first-order values in \smtlan with
%%uninterpreted sort \tsmtfun{\sort}{\sort}.
%%%
The universal sort \tuniv represents all other values.

\mypara{Semantics: Satisfaction \& Validity}
An assignment $\sigma$ is a mapping from
variables to terms
${\sigma \defeq \{ \assignto{x_1}{\pred_1}, \ldots, \assignto{x_n}{\pred_n} \}}$.
We write
${\sigma \models \pred}$
if the assignment $\sigma$ is a
\emph{model of} $\pred$, intuitively
if $\sigma\ \pred$ ``is true''~\cite{Nelson81}.
A predicate $\pred$ \emph{is satisfiable} if
there exists ${\sigma\models\pred}$.
A predicate $\pred$ \emph{is valid} if
for all assignments ${\sigma\models\pred}$.

\subsection{Transforming \corelan into \smtlan}
\label{subsec:embedding}

\newcommand\emptyaxioms{\ensuremath{\emptyset}\xspace}
\newcommand\andaxioms[2]{\ensuremath{{#1}\cup {#2}}\xspace}

\begin{figure}
\emphbf{Transformation}\hfill{\fbox{\tologicshort{\Gamma}{e}{\typ}{\pred}{\sort}{\smtenv}{\axioms}}}
$$
\inference{
}{
	\tologicshort{\env}{b}{\tbool}{b}{\tbool}{\emptyset}{\emptyaxioms}
}[\lgbool]
\qquad
\inference{
}{
	\tologicshort{\env}{n}{\tint}{n}{\tint}{\emptyset}{\emptyaxioms}
}[\lgint]
$$

$$
\inference{
    \tologicshort{\env}{e_1}{\typ}{\pred_1}{\embed{\typ}}{\smtenv}{\axioms_1} &
    \tologicshort{\env}{e_2}{\typ}{\pred_2}{\embed{\typ}}{\smtenv}{\axioms_2}
}{
	\tologicshort{\env}{e_1\binop e_2}{\tbool}{\pred_1 \binop\pred_2}{\tbool}{\smtenv}{\andaxioms{\axioms_1}{\axioms_2}}
}[\lgbinGEN]
$$

$$
\inference{
	\tologicshort{\env}{e}{\tbool}{\pred}{\tbool}{\smtenv}{\axioms}
}{
	\tologicshort{\env}{\unop e}{\tbool}{\unop\pred}{\tbool}{\smtenv}{\axioms}
}[\lgun]
\qquad
\inference{
}{
	\tologicshort{\env}{x}{\env(x)}{x}{\embed{\env(x)}}{\emptyset}{\emptyaxioms}
}[\lgvar]
$$

$$
\inference{
}{
	\tologicshort{\env}{c}{\constty{\odot}}{\smtvar{c}}{\embed{\constty{\odot}}}{\emptyset}{\emptyaxioms}
}[\lgpop]
\qquad
\inference{
}{
	\tologicshort{\env}{\dc}{\constty{\dc}}{\smtvar{\dc}}{\embed{\constty{\dc}}}{\emptyset}{\emptyaxioms}
}[\lgdc]
$$

%%$$
%%\inference{
%%  	\axioms_{f_1} = \forall \tbind{x}{\sort_x}.\smtappname{\sort_x}{\sort}\ f\ x = \pred \\
%%  	\axioms_{f_2} = \forall \tbind{g}{\sort'},\tbind{x}{\sort_x}.
%%  	\smtappname{\sort_x}{\sort}\ f\ x = \smtappname{\sort_x}{\sort}\ g\ x \Rightarrow f = g \\
%% 	f\ \text{fresh} &
%% 	\sort' = \embed{\tfun{x}{\typ_x}{\typ}} &
%% 	\sort  = \embed{\typ} &
%% 	\sort_x = \embed{\typ_x} \\
%% 	\tologicshort{\env,\tbind{x}{\typ_x}}{e}{\typ}{\pred}{\sort}{\smtenv, \tbind{x}{\sort_x}}{\axioms} &
%% 	\hastype{\env}{(\efun{x}{}{e})}{(\tfun{x}{\typ_x}{\typ})}\\
%%}{
%%	\tologicshort{\env}{\efun{x}{}{e}}{(\tfun{x}{\typ_x}{\typ})}
%%	        {f}{\sort'}{\smtenv, \tbind{f}{\sort'}}{\andaxioms{\{\axioms_{f_1}, \axioms_{f_2}\}}{\axioms}}
%%}[\lgfun]
%%$$

$$
\inference{
    \tologicshort{\env, \tbind{x}{\typ_x}}{e}{}{\pred}{}{}{} &
  	\hastype{\env}{(\efun{x}{}{e})}{(\tfun{x}{\typ_x}{\typ})}\\
}{
	\tologicshort{\env}{\efun{x}{}{e}}{(\tfun{x}{\typ_x}{\typ})}
	        {\smtlamname{\embed{\typ_x}}{\embed{\typ}}\ {x}\ {\pred}}
	        {\sort'}{\smtenv, \tbind{f}{\sort'}}{\andaxioms{\{\axioms_{f_1}, \axioms_{f_2}\}}{\axioms}}
}[\lgfun]
$$

$$
\inference{
	\tologicshort{\env}{e'}{\typ_x}{\pred'}{\embed{\typ_x}}{\smtenv}{\axioms'}
	&
	\tologicshort{\env}{e}{\tfun{x}{\typ_x}{\typ}}{\pred}{\tsmtfun{\embed{\typ_x}}{\embed{\typ}}}{\smtenv}{\axioms}
	& 
	\hastype{\env}{e}{{\typ_x}\rightarrow{\typ}}
}{
	\tologicshort{\env}{e\ e'}{\typ}{\smtappname{\embed{\typ_x}}{\embed{\typ}}\ {\pred}\ {\pred'}}{\embed{\typ}}{\smtenv}{\andaxioms{\axioms}{\axioms'}}
}[\lgapp]
$$

$$
\inference{
	\tologicshort{\env}{e}{\tbool}{\pred}{\tbool}{\smtenv}{\axioms} & 
	\tologicshort{\env}{e_i\subst{x}{e}}{\typ}{\pred_i}{\embed{\typ}}{\smtenv}{\axioms_i}
}{
	\tologicshorttwolines{\env}{\ecaseexp{x}{e}{\etrue \rightarrow e_1; \efalse \rightarrow e_2}}{\typ}
	 {\eif{\pred}{\pred_1}{\pred_2}}{\embed{\typ}}{\smtenv}{\andaxioms{\axioms}{\axioms_i}}
}[\lgcaseBool]
$$

$$
\inference{
	\tologicshort{\env}{e}{\typ_e}{\pred}{\embed{\typ_e}}{\smtenv}{\axioms}\\
	\tologicshort{\env}{e_i\subst{\overline{y_i}}{\overline{\selector{\dc_i}{}\ x}}\subst{x}{e}}{\typ}{\pred_i}{\embed{\typ}}{\smtenv}{\axioms_i}
}{
	\tologicshorttwolines{\env}{\ecase{x}{e}{\dc_i}{\overline{y_i}}{e_i}}{\typ}
	 {\eif{\smtappname{}{}\ \checkdc{\dc_1}\ \pred}{\pred_1}{\ldots} \ \mathtt{else}\ \pred_n}{\embed{\typ}}{\smtenv}
	 {\andaxioms{\axioms}{\axioms_i}}
}[\lgcase]
$$
\caption{\textbf{Transforming \corelan terms into \smtlan.}}
\label{fig:defunc}
\end{figure}

The judgment
\tologicshort{\env}{e}{\typ}{\pred}{\sort}{\smtenv}{\axioms}
states that a $\corelan$ term $e$ is transformed,
under an environment $\env$, into a
$\smtlan$ term $\pred$.
The transformation rules are summarized in Figure~\ref{fig:defunc}.

\mypara{Embedding Types}
We embed \corelan types into \smtlan sorts as:
$$
\begin{array}{rclcrcl}
\embed{\tint}                       & \defeq &  \tint &  &
\embed{T}                           & \defeq &  \tuniv \\
\embed{\tbool}                      & \defeq &  \tbool & &
\embed{\tfun{x}{\typ_x}{\typ}} & \defeq & \tsmtfun{\embed{\typ_x}}{\embed{\typ}}
\end{array}
$$
%%%The embedding extends to typing environments:
%%%% by embedding the types of the environment
%%%$$
%%%\embedsort{\{\tbind{x_1}{\typ_1}, \dots, \tbind{x_n}{\typ_n}\}}
%%%  \defeq
%%%  \{\tbind{x_1}{\embed{\typ_1}}, \dots, \tbind{x_n}{\embed{\typ_n}}
%%%  \}
%%%$$

\mypara{Embedding Constants}
Elements shared on both \corelan and \smtlan
translate to themselves.
These elements include
booleans (\lgbool),
integers (\lgint),
variables (\lgvar),
binary (\lgbinGEN)
and unary (\lgun)
operators.
SMT solvers do not support currying,
and so in \smtlan, all function symbols
must be fully applied.
Thus, we assume that all applications
to primitive constants and data
constructors are \emph{saturated},
%% NV eta converted
\ie fully applied, \eg by converting
source level terms like @(+ 1)@ to
@(\z -> z + 1)@.
%

%%% Thus, to translate \corelan's partially applied operators,
%%% we define an uninterpreted function
%%% $$
%%% \tbind{\smtvar{c}}{\embed{\constty{c}}}
%%% $$
%%% for every functional constant $c$ in \corelan.
%%% %
%%% For example, $+ 1$ will be translated to application of $\smtvar{+}$ to $1$, while
%%% $1+2$ will be translated to the identical $1+2$.

%%\spara{Lambda Lifting}
%%%
%%Since \smtlan does not support $\lambda$-functions.
%%the translation lifts function to axiomatized variables.
%%%
%%Rule~\lgfun
%%translates the term $\efun{x}{\typ}{e}$ to
%%a fresh variable $f$ that satisfies two axioms:
%%(1). $\beta$-reduction,
%%that is $f$ applied to $x$ is equal to $e$, and
%%(2). extentionality,
%%that is for every other function $g$ and argument $x$,
%%if $f$ applied to $x$ is equal to $g$ applied to $x$,
%%then $f = g$.

\mypara{Embedding Functions}
As \smtlan is a first-order logic, we
embed $\lambda$-abstraction and application
using the uninterpreted functions
\smtlamname{}{} and \smtappname{}{}.
We embed $\lambda$-abstractions
using $\smtlamname{}{}$ as shown in rule~\lgfun.
The term $\efun{x}{}{e}$ of type
${\typ_x \rightarrow \typ}$ is transformed
to
${\smtlamname{\sort_x}{\sort}\ x\ \pred}$
of sort
${\tsmtfun{\sort_x}{\sort}}$, where
$\sort_x$ and $\sort$ are respectively
$\embed{\typ_x}$ and $\embed{\typ}$,
${\smtlamname{\sort_x}{\sort}}$
is a special uninterpreted function
of sort
${\sort_x \rightarrow \sort\rightarrow\tsmtfun{\sort_x}{\sort}}$,
and
$x$ of sort $\sort_x$ and $r$ of sort $\sort$ are
the embedding of the binder and body, respectively.
As $\smtlamname{}{}$ is just an SMT-function,
it \emph{does not} create a binding for $x$.
Instead, the binder $x$ is renamed to
a \emph{fresh} name pre-declared in
the SMT environment.

\mypara{Embedding Applications}
Dually, we embed applications via
defunctionalization~\citep{Reynolds72}
using an uninterpreted \emph{apply}
function
$\smtappname{}{}$ as shown in rule~\lgapp.
The term ${e\ e'}$, where $e$ and $e'$ have
types ${\typ_x \rightarrow \typ}$ and $\typ_x$,
is transformed to
${\tbind{\smtappname{\sort_x}{\sort}\ \pred\ \pred'}{\sort}}$
where
$\sort$ and $\sort_x$ are respectively $\embed{\typ}$ and $\embed{\typ_x}$,
the
${\smtappname{\sort_x}{\sort}}$
is a special uninterpreted function of sort
${\tsmtfun{\sort_x}{\sort} \rightarrow \sort_x \rightarrow \sort}$,
and
$\pred$ and $\pred'$ are the respective translations of $e$ and $e'$.

\mypara{Embedding Data Types}
Rule~\lgdc translates each data constructor to a
predefined \smtlan constant ${\smtvar{\dc}}$ of
sort ${\embed{\constty{\dc}}}$.
Let $\dc_i$ be a non-boolean data constructor such that
$$
\constty{\dc_i} \defeq \typ_{i,1} \rightarrow \dots \rightarrow \typ_{i,n} \rightarrow \typ
$$
Then the \emph{check function}
${\checkdc{{\dc_i}}}$ has the sort
$\tsmtfun{\embed{\typ}}{\tbool}$,
and the \emph{select function}
${\selector{\dc}{i,j}}$ has the sort
$\tsmtfun{\embed{\typ}}{\embed{\typ_{i,j}}}$.
Rule~\lgcase translates case-expressions
of \corelan into nested $\mathtt{if}$
terms in \smtlan, by using the check
functions in the guards, and the
select functions for the binders
of each case.
%
%\mypara{Reflecting DataTypes}
%
% The above approach  makes it straightforward
% to reflect functions over datatypes into \smtlan.
%
For example, following the above, the body of the list append function
%
%%% reflect (++) :: xs:[Int] -> ys:[Int] -> [Int]
\begin{code}
  []     ++ ys = ys
  (x:xs) ++ ys = x : (xs ++ ys)
\end{code}
is reflected into the \smtlan refinement:
$$
\ite{\mathtt{isNil}\ \mathit{xs}}
    {\mathit{ys}}
    {\mathtt{sel1}\ \mathit{xs}\
       \dcons\
       (\mathtt{sel2}\ \mathit{xs} \ \mathtt{++}\  \mathit{ys})}
$$
We favor selectors to the axiomatic translation of
HALO~\citep{HALO} and \fstar~\cite{fstar} to avoid
universally quantified formulas and the resulting
instantiation unpredictability.

%% $$
%% \tbind{\checkdc{\dc}}{\embed{\typ \rightarrow \tbool}}
%% \ \text{with}\ \constty{\dc} = \typ_1 \rightarrow \dots \rightarrow \typ_n\rightarrow\typ
%% $$
%% and the field selector is used to substitute the data constructor quantified variables $\overline{y_i}$:
%% eg. if \dc is [] then i == 0
%%     if \dc is (:) :: a -> [a] -> [a] then
%%         \dc_1 = head :: [a] -> a
%%         \dc_2 = tail :: [a] -> [a]
%% $$
%% \tbind
      %% {\embed{\typ \rightarrow \typ_i}}
%% \ \text{with}\ \constty{\dc} = \typ_1 \rightarrow \dots \rightarrow \typ_n\rightarrow\typ, i \leq n
%% $$
%% %
%% For example, the body of the @length@ function from~\S~\ref{sec:examples}
%% translates to the condition $\eif{\isN\ xs}{0}{1+\texttt{length} (\etail\ xs)}$,
%% as $\etail \defeq \selector{\dcons}{2}$.

\subsection{Correctness of Translation}

Informally, the translation relation $\tologicshort{\env}{e}{}{\pred}{}{}{}$
is correct in the sense that if $e$ is a terminating boolean expression
then $e$ reduces to \etrue \textit{iff} $\pred$ is SMT-satisfiable
by a model that respects $\beta$-equivalence.

\NV{below we use substitution in lambda s which is not formally defined}
\begin{definition}[$\beta$-Model]\label{def:beta-model}
A $\beta-$model $\bmodel$ is an extension of a model $\sigma$
where $\smtlamname{}{}$ and $\smtappname{}{}$
satisfy the axioms of $\beta$-equivalence:
$$
\begin{array}{rcl}
\forall x\ y\ e. \smtlamname{}{}\ x\ e
  & = & \smtlamname{}{}\ y\ (e\subst{x}{y}) \\
\forall x\ e_x\ e. (\smtappname{}{}\ (\smtlamname{}{}\ x\ e)\ e_x
  & = &  e\subst{x}{e_x}
\end{array}
$$
\end{definition}

\mypara{Semantics Preservation}
We define the translation of a \corelan term
into \smtlan under the empty environment as
${\embed{e} \defeq \pred}$
if ${\tologicshort{\emptyset}{\refa}{}{\pred}{}{}{}}$.
A \emph{lifted substitution}
$\theta^\perp$ is a set of models $\sigma$
where each ``bottom'' in the substitution
$\theta$ is mapped to an arbitrary logical
value of the respective sort~\citep{Vazou14}.
We connect the semantics of \corelan and translated
\smtlan via the following theorems:
% terms can connect evaluation of boolean
% \corelan expression to \smtlan predicates.

\begin{theorem}\label{thm:embedding-general}
If ${\tologicshort{\env}{\refa}{}{\pred}{}{}{}}$,
then for every ${\sub\in\interp{\env}}$
and every ${\sigma\in {\sub^\perp}}$,
if $\evalsto{\applysub{\sub^\perp}{\refa}}{v}$
then $\sigma^\beta \models \pred = \embed{v}$.
\end{theorem}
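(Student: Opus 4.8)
The plan is to prove the statement by induction on the derivation of the translation judgment $\tologicshort{\env}{\refa}{}{\pred}{}{}{}$, but with the two phenomena the statement conflates cleanly separated: the translation is \emph{compositional} with respect to the closing substitution, and it is \emph{stable under reduction} in any $\beta$-model. Concretely, I would first establish two lemmas. The compositionality lemma says that whenever $\tologicshort{\env}{\refa}{}{\pred}{}{}{}$, $\sub\in\interp{\env}$ and $\sigma\in\sub^\perp$, we have $\sigma^\beta\models \pred = \embed{\applysub{\sub^\perp}{\refa}}$; this is a purely model-theoretic fact proved by induction on the derivation, using a substitution lemma showing $\embed{\cdot}$ commutes with substitution modulo the axioms of Definition~\ref{def:beta-model}. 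The reduction-preservation lemma says that if $\evals{e}{e'}$ for closed $e$ then $\sigma^\beta\models \embed{e} = \embed{e'}$. The theorem then follows immediately: compositionality gives $\sigma^\beta\models \pred = \embed{\applysub{\sub^\perp}{\refa}}$, and iterating reduction preservation along the evaluation sequence $\evalsto{\applysub{\sub^\perp}{\refa}}{v}$ gives $\sigma^\beta\models \embed{\applysub{\sub^\perp}{\refa}} = \embed{v}$, so transitivity of equality in the model closes the goal.

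For the compositionality lemma the base cases are immediate: booleans (\lgbool) and integers (\lgint) embed to themselves and are already values; primitive operators and data constructors (\lgpop, \lgdc) map to their fixed \smtlan symbols; and for a variable (\lgvar) the claim is exactly the consistency condition tying $\sigma\in\sub^\perp$ to $\sub$, namely that $\sigma$ assigns $x$ the embedding of the value $\sub(x)$ reduces to (the arbitrary choices $\sub^\perp$ makes for diverging bindings are irrelevant here, since the hypothesis places us in the terminating case). The operator cases (\lgbinGEN, \lgun) follow from the induction hypotheses together with the fact that \smtlan interprets $=, <, +, \lnot, \dots$ in agreement with the primitive semantics $\ceval{\cdot}{\cdot}$ of \corelan on embedded literals. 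The case-expression rules (\lgcaseBool, \lgcase) follow once we observe that the $\checkdc{\dc_i}$ and $\selector{\dc_i}{}$ symbols correctly discriminate and project the encoded constructors, which is precisely the refinement we attached to each constructor's type in \S~\ref{subsec:measures}.

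For reduction preservation I would do a case analysis on the contraction rule of Figure~\ref{fig:semantics}. The $\delta$-rule $\ceval{c}{v}$ again reduces to agreement of the SMT interpretation of $c$ with $\ceval{\cdot}{\cdot}$; the case and \efix{} rules reuse the check/select machinery and the reflected definition. The interesting rule is $\beta$-reduction, where $\embed{(\efun{x}{}{e})\ e'}$ is $\smtappname{}{}\ (\smtlamname{}{}\ x\ \embed{e})\ \embed{e'}$, and the second axiom of Definition~\ref{def:beta-model} collapses this to $\embed{e}\subst{x}{\embed{e'}}$, which equals $\embed{\SUBST{e}{x}{e'}}$ by the embedding-substitution lemma.

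I expect this abstraction/application interaction to be the main obstacle, for two reasons. First, the reduct $\SUBST{e}{x}{e'}$ is not a subterm of the redex, so a plain structural induction on $\refa$ does not suffice; the separate reduction lemma (equivalently, a nested induction on the number of evaluation steps) is what lets us close this gap. Second, it is exactly here that defunctionalization forces us to lean on the $\beta$-model: an arbitrary \smtlan model need not validate $\smtappname{}{}\ (\smtlamname{}{}\ x\ e)\ e_x = e\subst{x}{e_x}$, so the statement is false without the $\beta$-extension, and some care is needed to check that $\sigma^\beta$ is well-defined, since the first axiom of Definition~\ref{def:beta-model} is what justifies the $\alpha$-renaming of the pre-declared fresh binder introduced by rule \lgfun. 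The remaining reflected symbols stay uninterpreted in $\pred$; their required agreement with the functions they reflect is supplied by the reflected output refinement rather than by the embedding itself, so under the model they contribute only through congruence and do not interfere with the argument.
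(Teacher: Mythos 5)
Your high-level skeleton --- reduce the theorem to a single-step preservation fact, iterate it along the evaluation sequence, and let the axioms of Definition~\ref{def:beta-model} discharge the $\beta$-redex case --- agrees with the paper's proof, and you correctly identify why a plain structural induction cannot work. But your specific decomposition, \emph{eager} application of the closing substitution followed by reduction preservation on closed terms, has a fatal problem: both of your lemmas presuppose that $\embed{\cdot}$ is defined on $\applysub{\sub^\perp}{\refa}$ and on every intermediate closed reduct, and it is not. The translation of Figure~\ref{fig:defunc} is defined only on the refinement fragment (constants, variables, $\lambda$-abstraction, application, case); there is no rule for $\efix{}$-terms. Closing substitutions in $\interp{\env}$ do contain such terms --- indeed that is the central use case, since the definition's substitution $\sto(\defn)$ maps each reflected binder $f$ to $\efix{f}\ e$ --- so for a refinement that merely mentions a reflected function (say $\mathtt{fib}\ 2$), the closed term $\applysub{\sub^\perp}{\refa}$ already has no embedding, and neither do its reducts; only the final value $v$ is guaranteed translatable, which is exactly why the statement mentions $\embed{v}$ and nothing in between. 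Your remark that ``the $\efix{}$ rule reuses the reflected definition'' has no content, because $\embed{\efix{\prog}}$ is simply undefined. The paper's proof never forms these embeddings: it replays the evaluation on configurations $\langle\theta_i;\, e_i\rangle$ pairing a \emph{tracking substitution} with an \emph{open} term (following \citep{Vazou14}), so untranslatable material stays inside the substitution, every $e_i$ remains translatable under an extension of $\env$, \ie ${\tologicshort{\env}{\refa_i}{}{\pred_i}{}{}{}}$ holds at every step, and a single-step equivalence-preservation lemma is proved for tracked steps --- covering both genuine contractions and variable-lookup steps, with the model extended as new bindings appear.

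A second, related gap: your compositionality lemma is false at precisely the case you wave off. If $\sub(x)$ diverges, then $\sigma\in\sub^\perp$ assigns $x$ an arbitrary value of the appropriate sort, so $\sigma^\beta\models x = \embed{\applysub{\sub^\perp}{x}}$ fails for suitably chosen $\sigma$. Your parenthetical defense --- that the hypothesis places you in the terminating case --- does not apply: the lemma as you state it carries no termination hypothesis, and termination of the whole term does not imply termination of each binding (consider $\refa = (\efun{y}{}{0})\ x$ with $\sub(x)$ diverging, where $\applysub{\sub^\perp}{\refa}$ still reaches a value). Strengthening the lemma to require every binding to terminate would in turn make it inapplicable in the main theorem. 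The lazy, tracked treatment dissolves this issue as well: a variable is resolved only after its binding has been reduced in place to a value, so diverging bindings are never consulted and the arbitrary assignments in $\sigma$ never need to be related to any embedded term. In short, repairing your two lemmas is not a matter of polish; doing it soundly essentially forces you to reconstruct the tracked-substitution argument that the paper actually gives.
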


% For Boolean expressions we specialize the above to

\begin{corollary}\label{thm:embedding}
If ${\hastype{\env}{\refa}{\tbool}}$, $e$ reduces to a value and
${\tologicshort{\env}{\refa}{\tbool}{\pred}{\tbool}{\smtenv}{\axioms}}$,
then for every ${\sub\in\interp{\env}}$
and every ${\sigma\in {\sub^\perp}}$,
$\evalsto{\applysub{\sub^\perp}{\refa}}{\etrue}$ iff
$\sigma^\beta \models \pred$.
\end{corollary}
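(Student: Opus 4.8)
The plan is to obtain Corollary~\ref{thm:embedding} as an essentially immediate consequence of Theorem~\ref{thm:embedding-general}, by instantiating the reduced value $v$ at the two boolean constants. Fix an arbitrary $\sub\in\interp{\env}$ and $\sigma\in\sub^\perp$, and let $v$ be the value to which $\applysub{\sub^\perp}{\refa}$ reduces, which exists by hypothesis. The first step is to pin down that $v$ is a boolean constant: since $\hastype{\env}{\refa}{\tbool}$ and $\applysub{\sub^\perp}{\refa}$ reduces to a value, type soundness (Theorem~\ref{thm:safety}) together with canonical forms for the base type $\tbool$ forces $v$ to be either $\etrue$ or $\efalse$. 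I would also record, directly from rule \lgbool, that $\embed{\etrue}=\etrue$ and $\embed{\efalse}=\efalse$.

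With these observations in hand, I would invoke Theorem~\ref{thm:embedding-general} on the derivation $\tologicshort{\env}{\refa}{\tbool}{\pred}{\tbool}{\smtenv}{\axioms}$ to conclude $\sigma^\beta\models\pred=\embed{v}$. For the forward direction, assuming $\evalsto{\applysub{\sub^\perp}{\refa}}{\etrue}$ we have $v=\etrue$, so the theorem yields $\sigma^\beta\models\pred=\etrue$; since $\pred$ has sort $\tbool$, this is, by definition of the \smtlan satisfaction relation, the same as $\sigma^\beta\models\pred$. For the backward direction, assume $\sigma^\beta\models\pred$ and suppose toward a contradiction that $v=\efalse$; then the theorem gives $\sigma^\beta\models\pred=\efalse$, i.e.\ $\sigma^\beta\models\lnot\pred$, contradicting $\sigma^\beta\models\pred$. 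Hence $v=\etrue$, which is exactly $\evalsto{\applysub{\sub^\perp}{\refa}}{\etrue}$.

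The substantive content is really carried by Theorem~\ref{thm:embedding-general}, whose proof I expect to be the main obstacle and which proceeds by structural induction on the translation derivation $\tologicshort{\env}{\refa}{}{\pred}{}{}{}$. The delicate cases there are the defunctionalized rules \lgfun and \lgapp, where one must appeal to the $\beta$-model axioms of Definition~\ref{def:beta-model} to match the operational $\beta$-reduction of $\applysub{\sub^\perp}{\refa}$ against the uninterpreted $\smtlamname{}{}$/$\smtappname{}{}$ encoding, and to the lifted substitution $\sub^\perp$ to soundly handle bottoms. For the corollary itself, the only genuinely load-bearing facts beyond the theorem are the canonical-forms argument that $v\in\{\etrue,\efalse\}$ and the elementary equivalence between $\sigma^\beta\models\pred$ and $\sigma^\beta\models\pred=\etrue$ for a $\tbool$-sorted $\pred$.
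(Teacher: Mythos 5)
Your proposal is correct and matches the paper's own proof: the paper likewise derives the corollary directly from the general translation theorem (Theorem~\ref{thm:embedding-general}), handling the forward direction by instantiating the theorem at $v \equiv \etrue$ and the backward direction by noting that the value must be $\etrue$ or $\efalse$ and deriving a contradiction ($\sigma^\beta \models \lnot\pred$) if it were $\efalse$. Your version merely spells out the canonical-forms step and the equivalence of $\sigma^\beta \models \pred$ with $\sigma^\beta \models \pred = \etrue$, which the paper leaves implicit.
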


\subsection{Decidable Type Checking}
\begin{figure}[t!]
\centering
$$
\begin{array}{rrcl}
\emphbf{Refined Types} \quad
  & \typ
  & ::=   & \tref{v}{\btyp^{[\tlabel]}}{\reft} \spmid \tfun{x}{\typ}{\typ}
\\[0.10in]
\end{array}
$$
\emphbf{Well Formedness}\hfill{\fbox{\aiswellformed{\env}{\typ}}}\\
$$
\inference{
  \ahastype{\env,\tbind{v}{\btyp}}{\refa}{\tbool^{\tlabel}}
}{
  \aiswellformed{\env}{\tref{v}{\btyp}{\refa}}
}[\rwbase]
$$
\emphbf{Subtyping}\hfill{\fbox{\aissubtype{\env}{\typ}{\typ'}}}\\
$$
\inference{
\env' \defeq \env,\tbind{v}{\{\btyp^\tlabel | \refa\}} &
\tologicshort{\env'}{\refa'}{\tbool}{\pred'}{}{}{} &
\smtvalid{\vcond{\env'}{\pred'}}
}{
 \aissubtype{\env}{\tref{v}{\btyp}{\refa}}{\tref{v}{\btyp}{\refa'}}
}[\rsubbase]
$$
%%%% %\NV{REVERT TO OLD DEFINITIONS, what is e'?}
%%%% $$
%%%% \inference{
%%%% \tologicshort{\env'}{\refa_1}{\tbool}{\pred_1}{\tbool}{\smtenv_1}{\axioms_1} &
%%%% \tologicshort{\env'}{\refa_2}{\tbool}{\pred_2}{\tbool}{\smtenv_1}{\axioms_1} \\
%%%% % \isvalid{\env,\tbind{v}{\btyp}}{\refa_1}{\refa_2}
%%%% \env' \defeq \env,\tbind{v}{\btyp^\tlabel} &
%%%% % \tologicshort{\env'}{\refa'}{\tbool}{\pred'}{\tbool}{\smtenv'}{\axioms'} &
%%%% \smtvalid{\vcond{\env'}{\pred_1 \Rightarrow \pred_2}}
%%%% %
%%%% }{
  %%%% \aissubtype{\env}{\tref{v}{\btyp}{\refa_1}}{\tref{v}{\btyp}{\refa_2}}
%%%% }[\rsubbase]
%%%% $$
%%% \emphbf{Implication}\hfill{\isvalid{\env}{\refa_1}{\refa_2}}\\
%%% $$
%%% \inference{
  %%% \tologicshort{\env}{\refa_1}{\tbool}{\pred_1}{\tbool}{\smtenv_1}{\axioms_1} &
  %%% \tologicshort{\env}{\refa_2}{\tbool}{\pred_2}{\tbool}{\smtenv_2}{\axioms_i} \\
  %%% \text{is SMT-valid}\ (\embedexpr{\env} \Rightarrow \pred_1 \Rightarrow \pred_2)
%%% }{
  %%% \isvalid{\env}{\refa_1}{\refa_2}
%%% }
%%% $$
%%% \emphbf{Typing}\hfill{\ahastype{\env}{\prog}{\typ}}\\
\caption{\textbf{Algorithmic Typing (other rules in Figs~\ref{fig:syntax} and \ref{fig:typing}.)}}
\label{fig:modifications}
\end{figure}

Figure~\ref{fig:modifications} summarizes the modifications required
to obtain decidable type checking.
Namely, basic types are extended with labels that track termination
and subtyping is checked via an SMT solver.

\mypara{Termination}
Under arbitrary beta-reduction semantics
(which includes lazy evaluation), soundness
of refinement type checking requires checking
termination, for two reasons:
(1)~to ensure that refinements cannot diverge, and
(2)~to account for the environment during subtyping~\citep{Vazou14}.
We use \tlabel to mark provably terminating
computations, and extend the rules to use
refinements to ensure that if
${\ahastype{\env}{e}{\tref{v}{\btyp^\tlabel}{r}}}$,
then $e$ terminates~\citep{Vazou14}.
%
%% Here we assume termination is checked by an oracle,
%% but we can use refinement types themselves to prove
%% correctness of the termination labeling

\mypara{Verification Conditions}
The \emph{verification condition} (VC)
${\vcond{\env}{\pred}}$
is \emph{valid} only if the set of values
described by $\env$, is subsumed by
the set of values described by $\pred$.
$\env$ is embedded into logic by conjoining
(the embeddings of) the refinements of
provably terminating binders~\cite{Vazou14}:
%
%% We only trust refinements of terminating
%% expressions, as every diverging expression
%% can be unsoundly refined \efalse.
%% $$
%% \embed{\env} \defeq
  %% \bigwedge\{ p \mid \tbind{x}{\tref{v}{\btyp^{\tlabel}}{e}} \in \env
   %% \land \tologicshort{\env}{e\subst{v}{x}}{\btyp}{p}{\embed{\btyp}}{\smtenv}{\axioms}
   %% \}
%% $$
\begin{align*}
\embed{\env} \defeq & \bigwedge_{x \in \env} \embed{\env, x} \\
\intertext{where we embed each binder as}
\embed{\env, x} \defeq & \begin{cases}
                           \pred  & \text{if } \env(x)=\tref{v}{\btyp^{\tlabel}}{e},\
                                    \tologicshort{\env}{e\subst{v}{x}}{\btyp}{\pred}{\embed{\btyp}}{\smtenv}{\axioms} \\
                           \etrue & \text{otherwise}.
                         \end{cases}
\end{align*}

%We use the embedding of environment to decidably check subtyping.
%As defined in Figure~\ref{fig:modifications},
%\tref{v}{\btyp}{\refa_1} is subtype of \tref{v}{\btyp}{\refa_1}
%under the environment \env, when
%$\refa_i$ transforms to $\pred_i$ with axioms $\axioms_i$
%and assuming $\embedexpr{\env}$ and the axioms $\axioms_i$
%$\pred_i$ implies $\pred_2$.

\mypara{Subtyping via SMT Validity}
We make subtyping, and hence, typing decidable,
by replacing the denotational base subtyping
rule $\rsubbase$ with a conservative,
algorithmic version that uses an SMT
solver to check the validity of the subtyping VC.
We use Corollary~\ref{thm:embedding} to prove
soundness of subtyping. 
\begin{lemma}\label{lem:subtyping} %[Conservative Subtyping]
If {\aissubtype{\env}{\tref{v}{\btyp}{e_1}}{\tref{v}{\btyp}{e_2}}}
then {\issubtype{\env}{\tref{v}{\btyp}{e_1}}{\tref{v}{\btyp}{e_2}}}.
\end{lemma}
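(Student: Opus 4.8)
The plan is to prove soundness of the algorithmic base-subtyping rule by unfolding both subtyping judgments and using the translation-correctness Corollary~\ref{thm:embedding} as the bridge from SMT validity to denotational containment. First I would unfold the hypothesis: by the algorithmic rule $\rsubbase$ of Fig.~\ref{fig:modifications}, the judgment $\aissubtype{\env}{\tref{v}{\btyp}{e_1}}{\tref{v}{\btyp}{e_2}}$ supplies a translation $\tologicshort{\env'}{e_2}{\tbool}{\pred'}{}{}{}$ together with $\smtvalid{\vcond{\env'}{\pred'}}$, where $\env' \defeq \env, \tbind{v}{\tref{v}{\btyp^\tlabel}{e_1}}$. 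On the goal side, by the denotational rule $\rsubbase$ of Fig.~\ref{fig:typing}, it suffices to fix an arbitrary $\sub \in \interp{\env}$ and an arbitrary $e \in \interp{\applysub{\sub}{\tref{v}{\btyp}{e_1}}}$ and show $e \in \interp{\applysub{\sub}{\tref{v}{\btyp}{e_2}}}$.

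Unfolding the type denotation, membership in both sets shares the obligation $\bhastype{}{e}{\btyp}$ (same base type $\btyp$), so the only real content is the refinement obligation: assuming $\evalsto{e}{w}$, I must show $\evalsto{\applysub{\sub}{e_2}\subst{v}{w}}{\etrue}$, knowing $\evalsto{\applysub{\sub}{e_1}\subst{v}{w}}{\etrue}$. The key move is to extend the substitution to $\sub' \defeq \sub, \gbind{v}{w}$ and observe $\sub' \in \interp{\env'}$: since $w$ is a value reducing to itself and $\evalsto{\applysub{\sub}{e_1}\subst{v}{w}}{\etrue}$, we get $w \in \interp{\applysub{\sub}{\tref{v}{\btyp}{e_1}}}$, and $w$ trivially discharges the termination label $\tlabel$.

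Next I would transport $\sub'$ into the logic. For each binder of $\env'$ I apply the forward direction of Corollary~\ref{thm:embedding} to its refinement: because $\sub' \in \interp{\env'}$, every binder's refinement reduces to $\etrue$ under $\sub'$, hence over a $\beta$-model $\sigma^\beta$ drawn from the lifted substitution $(\sub')^\perp$ the corresponding embedded predicate holds, so $\sigma^\beta \models \embed{\env'}$. The assumed validity $\smtvalid{\vcond{\env'}{\pred'}}$ then yields $\sigma^\beta \models \pred'$. Finally I apply the backward direction of Corollary~\ref{thm:embedding} to $e_2$ to conclude $\evalsto{\applysub{(\sub')^\perp}{e_2}}{\etrue}$, which is exactly $\evalsto{\applysub{\sub}{e_2}\subst{v}{w}}{\etrue}$, closing the goal.

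The main obstacle is the side condition of Corollary~\ref{thm:embedding} that the translated expression reduce to a value: to invoke it on $e_2$, and on each environment refinement, I must know these boolean refinements terminate, which is precisely what the termination labels $\tlabel$ in the algorithmic well-formedness rule $\rwbase$ (Fig.~\ref{fig:modifications}) guarantee, via the implicit well-formedness of the subtyped types (which also supplies $\hastype{\env'}{e_2}{\tbool}$). A secondary source of care is the bottom-handling bookkeeping: since the corollary is stated over lifted substitutions $(\sub')^\perp$ and $\beta$-models $\sigma^\beta$, I must check that the model built from $\sub'$ is a legitimate $\beta$-model in the sense of Definition~\ref{def:beta-model}, and that mapping each bottom to an arbitrary value of the appropriate sort cannot disturb the refinements we already know reduce to $\etrue$.
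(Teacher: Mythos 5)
Your proposal is correct and takes essentially the same route as the paper's own proof: fix a closing substitution and a value $w$, use the forward direction of Corollary~\ref{thm:embedding} to establish the embedded environment refinements (including $e_1$'s) in a $\beta$-model, invoke SMT validity of the VC, and use the backward direction of the same corollary to conclude that $e_2$'s refinement evaluates to $\etrue$. Your bookkeeping --- extending the substitution with $v \mapsto w$ so the semantic side literally matches the VC's environment $\env'$ --- and your explicit attention to the termination and $\beta$-model side conditions are only cosmetic refinements of (indeed, slightly more careful than) the paper's argument.
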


\mypara{Soundness of \smtlan}
Lemma~\ref{lem:subtyping} directly implies the soundness of \smtlan.
\begin{theorem}[Soundness of \smtlan]\label{thm:soundness-smt}
If \ahastype{\env}{e}{\typ} then \hastype{\env}{e}{\typ}.
\end{theorem}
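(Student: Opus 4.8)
The plan is to proceed by induction on the derivation of the algorithmic judgment \ahastype{\env}{e}{\typ}. The essential observation is that \smtlan is obtained from \corelan by only two changes (Figure~\ref{fig:modifications}): basic types carry a termination label \tlabel, and the denotational base-subtyping rule \rsubbase is replaced by the SMT-based one. Every other typing rule is literally shared between the two systems. Consequently, for each structural case (\rtvar, \rtconst, \rtexact, \rtfun, \rtapp, \rtlet, \rtreflect, \rtcase, \rtfix) I would apply the induction hypotheses to the premises to obtain the corresponding \corelan typing judgments and then re-apply the identical rule to conclude \hastype{\env}{e}{\typ}. To keep the well-formedness side-conditions of these rules available, I would run the induction simultaneously over the typing, subtyping, and well-formedness judgments, proving that each algorithmic judgment implies its \corelan counterpart.

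The only case with real content is subsumption, \rtsub, whose premise is an algorithmic subtyping judgment. Here I would appeal to Lemma~\ref{lem:subtyping}, which turns an algorithmic (SMT-discharged) base-subtyping judgment into a denotational one, and then close with the \corelan \rtsub rule. Since Lemma~\ref{lem:subtyping} is stated only for base refinement types, I would first lift it to arbitrary types by a short inner induction following the structural rule \rsubfun: contravariantly on the domain and covariantly on the codomain, each leaf of which is an instance of Lemma~\ref{lem:subtyping}. This inner induction is exactly the subtyping half of the simultaneous induction above.

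A small auxiliary step is needed to dispose of the termination labels, since the conclusion \hastype{\env}{e}{\typ} lives in \corelan, whose types are unlabeled. Because a label only records that a computation provably terminates and places no constraint on the values described by a type, erasing labels does not change a type's denotation: \interp{\tref{v}{\btyp^\tlabel}{r}} and \interp{\tref{v}{\btyp}{r}} are the same set of expressions. I would prove this erasure identity up front so that every denotational premise---both the containment check inside \rsubbase and the \tbool-typing condition inside \rwbase---transfers verbatim from the labeled algorithmic types to their unlabeled \corelan images.

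The bulk of the difficulty is already discharged by Lemma~\ref{lem:subtyping}, which in turn rests on Corollary~\ref{thm:embedding} relating the evaluation semantics of \corelan refinements to the satisfiability of their \smtlan translations. Given that lemma, what remains is a routine rule-by-rule induction; the only points demanding care are the contravariant lifting of subtyping to function types and the verification that termination labels are denotationally inert.
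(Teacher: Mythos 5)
Your overall decomposition matches the paper's: the paper proves this theorem by observing that the algorithmic system differs from \corelan only in the base subtyping rule, so soundness reduces to Lemma~\ref{lem:subtyping} (which in turn rests on Corollary~\ref{thm:embedding}); the rule-by-rule induction and the lifting of base subtyping through \rsubfun are exactly the ``it suffices to prove that subtyping is appropriately approximated'' step that the paper leaves implicit. That part of your plan is fine.

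However, your auxiliary step on termination labels contains a genuine error. The label \tlabel is \emph{not} denotationally inert: in the semantics this paper inherits from \undeclang~\cite{Vazou14}, $\interp{\tref{v}{\btyp^{\tlabel}}{r}}$ contains only expressions that provably reduce to values, whereas $\interp{\tref{v}{\btyp}{r}}$ also contains every diverging expression, because the refinement condition is guarded by ``if $\evalsto{e}{w}$''. So the two sets are not equal; the labeled one is a strict subset. This is not a harmless slip, because labels occur \emph{contravariantly}, in the environment: the embedding $\embed{\env}$ used in the algorithmic VC conjoins exactly the refinements of the \tlabel-labeled binders, and that is sound only if those binders are interpreted as terminating. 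Concretely, for $\env = \tbind{x}{\tref{v}{\tint^{\tlabel}}{v = 5}}$, the algorithmic judgment $\aissubtype{\env}{\tref{v}{\tint}{\etrue}}{\tref{v}{\tint}{x = 5}}$ holds, since the VC $(x{=}5) \Rightarrow \etrue \Rightarrow (x{=}5)$ is valid; but after erasing the label the declarative judgment fails: a closing substitution may now send $x$ to a diverging expression $\Omega$, and then $\interp{\tref{v}{\tint}{\Omega = 5}}$ contains only diverging terms, so the containment demanded by \rsubbase is violated. Hence your erasure identity is false, the ``verbatim transfer'' breaks precisely in the \rtsub case, and with labels erased the theorem itself would be false. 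The correct move---implicit in the paper, and visible in its appendix proof of Lemma~\ref{lem:subtyping}, which fixes $\sub\in\interp{\env}$ and uses that the $\btyp^{\tlabel}$-labeled binders evaluate to values satisfying their refinements---is to keep the labels in the declarative judgment and give them their termination-denoting meaning, rather than to erase them.
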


\begin{comment}
\begin{proof}
By rule \rsubbase, we need to show that
$\forall \sub\in\interp{\env}.
  \interp{\applysub{\sub}{\tref{v}{\btyp}{\refa_1}}}
  \subseteq
  \interp{\applysub{\sub}{\tref{v}{\btyp}{\refa_2}}}$.
%
We fix a $\sub\in\interp{\env}$.
and get that forall bindings
$(\tbind{x_i}{\tref{v}{\btyp^{\downarrow}}{\refa_i}}) \in \env$,
$\evalsto{\applysub{\sub}{e_i\subst{v}{x_i}}}{\etrue}$.

Then need to show that for each $e$,
if $e \in \interp{\applysub{\sub}{\tref{v}{\btyp}{\refa_1}}}$,
then $e \in \interp{\applysub{\sub}{\tref{v}{\btyp}{\refa_2}}}$.

If $e$ diverges then the statement trivially holds.
Assume $\evalsto{e}{w}$.
We need to show that
if $\evalsto{\applysub{\sub}{e_1\subst{v}{w}}}{\etrue}$
then $\evalsto{\applysub{\sub}{e_2\subst{v}{w}}}{\etrue}$.

Let \vsub the lifted substitution that satisfies the above.
Then  by Lemma~\ref{thm:embedding}
for each model $\bmodel \in \interp{\vsub}$,
$\bmodel\models\pred_i$, and $\bmodel\models q_1$
for
$\tologicshort{\env}{e_i\subst{v}{x_i}}{\btyp}{\pred_i}{\embed{\btyp}}{\smtenv_i}{\axioms_i}$
$\tologicshort{\env}{e_i\subst{v}{w}}{\btyp}{q_i}{\embed{\btyp}}{\smtenv_i}{\beta_i}$.
%
Since \aissubtype{\env}{\tref{v}{\btyp}{e_1}}{\tref{v}{\btyp}{e_2}} we get
$$
\bigwedge_i \pred_i
\Rightarrow q_1 \Rightarrow q_2
$$
thus $\bmodel\models q_2$.
%
By Theorem~\ref{thm:embedding} we get $\evalsto{\applysub{\sub}{\refa_2\subst{v}{w}}}{\etrue}$.
\end{proof}
\end{comment}

\section{Reasoning About Lambdas}\label{sec:lambdas}

Though \smtlan, as presented so far, is sound and decidable,
it is \emph{imprecise}: our encoding of $\lambda$-abstractions
and applications via uninterpreted functions makes it impossible
to prove theorems that require $\alpha$- and $\beta$-equivalence,
or extensional equality. Next, we show how to address the former
by strengthening the VCs with equalities~\S~\ref{subsec:equivalences},
and the latter by introducing a combinator for safely asserting
extensional equality~\S~\ref{subsec:extensionality}.
In the rest of this section, 
for clarity we omit $\smtappname{}{}$ when it is clear from the context.

\subsection{Equivalence}\label{subsec:equivalences}

As soundness relies on satisfiability under a
\bmodel  (see Definition~\ref{def:beta-model}),
we can safely \emph{instantiate} the axioms of
$\alpha$- and $\beta$-equivalence on any set of
terms of our choosing and still preserve soundness
(Theorem~\ref{thm:soundness-smt}).
That is, instead of checking the validity
of a VC
${p \Rightarrow q}$,
we check the validity of a \emph{strengthened VC},
${a \Rightarrow p \Rightarrow q}$,
where $a$ is a (finite) conjunction
of \emph{equivalence instances}
derived from $p$ and $q$,
as discussed below.

% Since it is unclear how to reify this axiomatization
% while preserving decidability, we choose (once again)
% to syntactically instantiate the axioms of $\alpha$-,
% $\beta$- and normal form equivalence on the relevant
% candidates.

\mypara{Representation Invariant}
The lambda binders,
for each SMT sort, are drawn from a
pool of names $x_i$ where the index
$i=1,2,\ldots$.
When representing
$\lambda$ terms we enforce
a \emph{normalization invariant}
that for each lambda term
$\slam{x_i}{e}$, the index $i$
% $x_i$
is greater than any lambda
argument appearing in $e$.

\mypara{$\alpha$-instances}
For each syntactic term
${\slam{x_i}{e}}$, and $\lambda$-binder
$x_j$ such that $i < j$ appearing in the VC,
we generate an \emph{$\alpha$-equivalence instance predicate}
(or \emph{$\alpha$-instance}):
$$\slam{x_i}{e} = \slam{x_j}{e \subst{x_i}{x_j}}$$

% , x_i < x_j \leq \maxlamarg$$
%\leq \maxlamarg$
% that follow the ordering
% $i < j \Leftrightarrow x_i < x_j$.
%
% The number of distinct valid lambda
% arguments is determined by the
% maximum number of $\lambda$s
% syntactically in program's refinements.
% In the implementation, we bound
% this number by $\maxlamarg = 7$.
% %
% In practice we only encountered
% 4 nested lambdas when verifying
% monadic left identity of the
% reader monad.

The conjunction of $\alpha$-instances
can be more precise than De Bruijn
representation, as they let the SMT
solver deduce more equalities via
congruence.
For example, consider the VC needed
to prove the applicative laws for @Reader@:
\begin{align*}
d & = \slam{x_1}{(\sapp{x}{x_1})} \\
  & \Rightarrow \slam{x_2}{(\sapp{(\slam{x_1}{(\sapp{x}{x_1})})}{x_2})}
              = \slam{x_1}{(\sapp{d}{x_1})}
\end{align*}
The $\alpha$ instance
${\slam{x_1}{(\sapp{d}{x_1})} = \slam{x_2}{(\sapp{d}{x_2})}}$
derived from the VC's hypothesis,
combined with congruence immediately
yields the VC's consequence.

\mypara{$\beta$-instances}
For each syntactic term $\smapp{(\slam{x}{e})}{e_x}$,
with $e_x$ not containing any $\lambda$-abstractions,
appearing in the VC,
we generate an \emph{$\beta$-equivalence instance predicate}
(or \emph{$\beta$-instance}):
$$
\smapp{(\slam{x_i}{e})}{e_x} = \SUBST{e}{x_i}{e_x},
  \mbox{ s.t. } e_x \mbox{ is $\lambda$-free}
$$
We require the $\lambda$-free restriction as
a simple way to enforce that the reduced
term ${\SUBST{e}{x_i}{e'}}$ enjoys the
representation invariant.
%
%%\NV{This restriction is not implemented, but neither has a normalization function}
%%\RJ{Then why are we discussing it?}

For example, consider the following VC
needed to prove that the bind operator for
lists satisfies the monadic associativity law.
$$(\sapp{f}{x} \ebind g) = \smapp{(\slam{y}{(\sapp{f}{y} \ebind g)})}{x}$$
The right-hand side of the above VC generates
a $\beta$-instance that corresponds directly
to the equality, allowing the SMT solver to
prove the (strengthened) VC.

%% immediately yields
%% $\beta$-equivalence is required in our benchmarks to for example prove
%% monadic associativity for lists.
%% %
%% An intermediate step in the proof is to verify that
%%
%% Assuming that @h x = f x >>= g@ the above
%% translates to the logical query
%% $$
%% \sapp{h}{x} = \sapp{(\slam{x_1}{(\sapp{h}{x_1})})}{x}
%% $$
%%
%% The right hand side of the query will fire a
%% $\beta$-equivalence instantiation
%% and the assumption
%% $
%% \sapp{(\slam{x_1}{(\sapp{h}{x_1})})}{x} = \sapp{h}{x}
%% $
%% will be added in the environment, allowing SMT to prove the equivalence.

\mypara{Normalization}
The combination of $\alpha$- and $\beta$-instances
is often required to discharge proof obligations.
For example, when proving that the bind operator
for the @Reader@ monad is associative, we need
to prove the VC:
$$\slam{x_2}{(\slam{x_1}{w})} =
  \slam{x_3}{(\smapp{(\slam{x_2}{(\slam{x_1}{w})})}{w})}$$
The SMT solver proves the VC via the equalities
corresponding to an $\alpha$ and then $\beta$-instance:
\begin{align*}
\slam{x_2}{(\slam{x_1}{w})}
  \ =_{\alpha}\ & \slam{x_3}{(\slam{x_1}{w})} \\
  \ =_{\beta}\ & \slam{x_3}{(\smapp{(\slam{x_2}{(\slam{x_1}{w})})}{w})}
\end{align*}

%%% of monadic
%%% associativity for the reader monad requires to prove a $\lambda$ equality
%%% simplified to
%%% %
%%% \begin{code}
  %%% w:a -> {(\x y -> w) = (\x -> (\z y -> w) w)}
%%% \end{code}
%%% %
%%% The proof for the above code is @trivial@ as
%%% in the logic, the $\lambda$ arguments are renamed to
%%% @(\x2 x1 -> w) = (\x3 -> (\x2 x1 -> w) w)@.
%%% Due to $\alpha$- equivalence the
%%% right hand side is equal to
%%% @\x3 x1 -> w@.
%%% Due to $\beta$-equivalence we get
%%% @((\x2 x1 -> w) w) = \x1 -> w@,
%%% by which and due to congruence axiom
%%% we get the desired equality,
%%% \begin{code}
  %%% \x y -> w               -- lhs
  %%% \x2 x1 -> w             -- representation
  %%% \x3 x1 -> w             -- alpha
  %%% \x3 ->((\x2 x1 -> w) w) -- beta
  %%% \x -> ((\z y -> w) w)   -- rhs
%%% \end{code}

\subsection{Extensionality} \label{subsec:extensionality}

Often, we need to prove that two
functions are equal, given the
definitions of reflected binders.
For example, consider
\begin{code}
 reflect id
 id x = x
\end{code}
\toolname accepts the proof that
@id x = x@ for all @x@:
\begin{code}
 id_x_eq_x :: x:a -> {id x = x}
 id_x_eq_x = \x -> #id# x =. x ** QED
\end{code}
as ``calling'' @id@ unfolds its definition,
completing the proof.
However, consider this $\eta$-expanded variant of
the above proposition:
\begin{code}
 type Id_eq_id = {(\x -> id x) = (\y -> y)}
\end{code}
\toolname \emph{rejects} the proof:
\begin{code}
 fails :: Id_eq_id
 fails =  (\x -> #id# x) =. (\y -> y) ** QED
\end{code}
The invocation of @id@ unfolds the
definition, but the resulting equality
refinement @{id x = x}@ is \emph{trapped}
under the $\lambda$-abstraction.
That is, the equality is absent from the
typing environment at the \emph{top} level,
where the left-hand side term is compared to @\y -> y@.
%
% NV LHS reads like LiquidHaskell
Note that the above equality requires
the definition of @id@ and hence is
outside the scope of purely the
$\alpha$- and $\beta$-instances.

\newcommand\eqfun{\ensuremath{\texttt{=}\forall}}

\mypara{An Exensionality Operator}
To allow function equality via
extensionality, we provide the
user with a (family of)
\emph{function comparison operator(s)}
that transform an \emph{explanation} @p@
which is a proof that @f x = g x@ for every
argument @x@, into a proof that @f = g@.
\begin{mcode}
  =* :: f:(a -> b) -> g:(a -> b)
     -> exp:(x:a -> {f x = g x})
     -> {f = g}
\end{mcode}
Of course, @=*@ cannot be implemented;
its type is \emph{assumed}. We can use
@=*@ to prove @Id_eq_id@ by providing
a suitable explanation:
\begin{mcode}
pf_id_id :: Id_eq_id
pf_id_id = (\y -> y) =* (\x -> id x) $\because$ expl ** QED
  where
    expl = (\x -> #id# x =. x ** QED)
\end{mcode}
%
%   =* (\x -> id x) $\because$ exp
%   where
%    exp   :: x:a -> {(\x -> id x) x = (\x -> x) x}
%    exp x = #id# x =. x ** QED
%
The explanation is
% the proof passed as
the second argument to $\because$ which has
the following type that syntactically fires $\beta$-instances:
\begin{code}
  x:a -> {(\x -> id x) x = ((\x -> x) x}
\end{code}

\section{Evaluation}\label{sec:evaluation}

% The source counts were generated by:
% $ cd benchmarks/haskell16/pos/
% $ find . -type f -name '*.hs' -exec sed -i '' s/\{-@/\{-#LH/ {} +
% $ sloccount

\begin{figure}[t!]
\begin{center}
\begin{tabular}{lllr}
\toprule
  \multicolumn{3}{l}{\textbf{CATEGORY}}              & \textbf{LOC} \\
\toprule
  \textbf{I.} & \multicolumn{3}{l}{\textbf{Arithmetic}} \\[0.05in]
   & Fibonacci      & \S~\ref{sec:overview}          &  48 \\ % Overview.hs
   & Ackermann      & \citep{ackermann}
                    , Fig.~\ref{fig:ackermann}       & 280 \\ % Ackermann.hs

  \midrule

  \textbf{II.} & \multicolumn{3}{l}{\textbf{Algebraic Data Types}} \\[0.05in]

  & Fold Universal & \citep{agdaequational}          & 105 \\ % FoldrUniversal.hs
  & Fold Fusion    & \citep{agdaequational}          &     \\

  \midrule

  \textbf{III.} & \textbf{Typeclasses} & Fig~\ref{fig:laws} & \\[0.05in]
  & Monoid         & \tPeano, \tMaybe, \tList        & 189 \\ % Monoid*.hs
  & Functor        & \tMaybe, \tList, \tId, \tReader & 296 \\ % Functor*.hs - FunctorReader.hs
  & Applicative    & \tMaybe, \tList, \tId, \tReader & 578 \\ % Applicative*.hs
  & Monad          & \tMaybe, \tList, \tId, \tReader & 435 \\ % Monad*.hs

  \midrule

  \textbf{IV.} & \multicolumn{3}{l}{\textbf{Functional Correctness}} \\[0.05in]
  & SAT Solver     & \citep{Zombie}                  & 133 \\ % Solver.hs
  & Unification    & \citep{Sjoberg2015}             & 200 \\ % Unification

  \midrule

  \multicolumn{3}{l}{\textbf{TOTAL}}                 & 2264 \\
\bottomrule
\end{tabular}
\end{center}
\caption{\textbf{Summary of Case Studies}}
\label{fig:eval-summary}
\end{figure}

We have implemented refinement reflection
in \toolname. 
In this section, we evaluate our approach
by using \toolname to verify a variety of
deep specifications of Haskell functions
drawn from the literature and categorized
in Figure~\ref{fig:eval-summary},
totalling about 2500 lines of specifications
and proofs.
Next, we detail each of the four classes of
specifications, illustrate how they were
verified using refinement reflection, and
discuss the strengths and weaknesses of
our approach.
\emph{All} of these proofs require refinement
reflection, \ie are beyond the scope of shallow
refinement typing.

\mypara{Proof Strategies.}
Our proofs use three building blocks, that are seamlessly
connected via refinement typing:
\begin{itemize}
  \item \emphbf{Un/folding}
     definitions of a function @f@ at
     arguments @e1...en@, which due
     to refinement reflection, happens
     whenever the term @f e1 ... en@
     appears in a proof.
     For exposition, we render the function
     whose un/folding is relevant as @#f#@;

  \item \emphbf{Lemma Application}
     which is carried out by using
     the ``because'' combinator
     ($\because$) to instantiate
     some fact at some inputs;

  \item \emphbf{SMT Reasoning}
     in particular, \emph{arithmetic},
     \emph{ordering} and \emph{congruence closure}
     which kicks in automatically (and predictably!),
     allowing us to simplify proofs by not
     having to specify, \eg which subterms
     to rewrite.
\end{itemize}

\subsection{Arithmetic Properties} \label{subsec:arith} \label{subsec:ackermann}

The first category of theorems pertains to the textbook
Fibonacci and Ackermann functions.
The former were shown in \S~\ref{sec:overview}.
The latter are summarized in Figure~\ref{fig:ackermann},
which shows two alternative definitions for the
Ackermann function.
We proved equivalence of the definition (Prop 1)
and various arithmetic relations between
them (Prop 2 --- 13), by mechanizing the
proofs from~\cite{ackermann}.

\begin{figure}[t!]
\textbf{Ackermann's Function}
\[
\begin{array}{lr}
\ack{n}{x} \defeq
 \left\{
\begin{array}{l}
\setlength\arraycolsep{0pt}
\begin{array}{ll}
      x+2 &\quad \text{, if}\ n=0 \\
      2   &\quad \text{, if}\ x=0 \\
\end{array} \\
\ack{n-1}{\ack{n}{x-1}}
\end{array}
\right.
&
\iack{h}{n}{x} \defeq
 \left\{
\begin{array}{l}
      x \quad \text{, if}\ h=0 \\
      \ack{n}{\iack{h-1}{n}{x}}
\end{array}
\right.
\end{array}
 \]

\textbf{Properties}
$$
\begin{array}{lrcrcl}
1.&                &&            \ack{n+1}{x}   &=& \iack{x}{n}{2}\\
2.&                &&            x + 1          &<& \ack{n}{x}\\
3.&                &&            \ack{n}{x}     &<& \ack{n}{x+1}\\
4.& x < y          &\Rightarrow& \ack{n}{x}     &<& \ack{n}{y}\\
5.& 0 < x          &\Rightarrow& \ack{n}{x}     &<& \ack{n+1}{x}\\
6.& 0 < x, n < m   &\Rightarrow& \ack{n}{x}     &<& \ack{m}{x}\\
7.&                &&            \iack{h}{n}{x} &<& \iack{h+1}{n}{x}\\
8.&                &&            \iack{h}{n}{x} &<& \iack{h}{n}{x+1}\\
9.& x<y            &\Rightarrow& \iack{h}{n}{x} &<& \iack{h}{n}{y}\\
10.&               &\Rightarrow& \iack{h}{n}{x} &<& \iack{h}{n+1}{x}\\
11.& 0<n, l-2 < x  &\Rightarrow& x + l          &<& \ack{n}{x}\\
12.& 0<n, l-2 < x  &\Rightarrow& \iack{l}{n}{x} &<& \ack{n+1}{x}\\
13.&               &&            \iack{x}{n}{y} &<& \ack{n+1}{x+y}\\
\end{array}
$$
\caption{\textbf{Ackermann Properties~\citep{ackermann},
$\forall n, m, x, y, h, l \geq 0$}}
\label{fig:ackermann}
\end{figure}

\mypara{Monotonicity}
Prop 3. shows that \ack{n}{x} is increasing on $x$.
We derived Prop 4. by applying @fMono@ theorem
from \S~\ref{sec:examples} with input function
the partially applied Ackermann Function
$\ack{n}{\star}$.
Similarly, we derived the monotonicity Prop 9. by
applying @fMono@ to the locally increasing Prop. 8
and $\iack{h}{n}{\star}$.
Prop 5. proves that \ack{n}{x} is increasing
on the \emph{first} argument $n$.
As @fMono@ applies to the \emph{last} argument
of a function, we cannot directly use it to
derive Prop 6.
Instead, we define a variant @fMono2@ that works
on the first argument of a binary function, and
use it to derive Prop 6.

%%\mypara{Existentials}
%%%
%%Properties 11. and 12. are described in~\citep{ackermann}
%%to hold for almost every $x$.
%%%
%%That is, Property 11. is described as
%%$\exists x_0. x_0 < x \Rightarrow x + l < \ack{n}{x}$.
%%%
%%Refinement types cannot express existentials,
%%thus expressing the above property is (currently)
%%not feasible, but the minimum $x$ was easy to retrieve.
%%%
%%Thus, we expressed the above statement by specifying $x_0 = l - 2$.
%%%
%%\NV{``Almost'' means in English for large x.
%%In math it translates to there exist some x0 such that...
%%in LiquidHaskell, and due to lack of existentials
%%I had to find (``retrieve'') an x0 = l-2 above which
%%the property holds For almost see big o notation, where
%%there exists a c such that...to specify these properties
%%in LH you need to give a concrete c But, this paragraph
%%summarizes a lot of internal thinking, so feel free to
%%rephrase. I wanted it in, because when I saw the lemma
%%stating this property holds almost everywhere I though
%%we could not express it, but we can!}

%
\mypara{Constructive Proofs}
In \citep{ackermann} Prop 12. was proved by constructing
an auxiliary \emph{ladder} that counts the number of
(recursive) invocations of the Ackermann function, and
uses this count to bound \iack{h}{n}{x} and \ack{n}{x}.
It turned out to be straightforward and natural
to formalize the proof just by defining the
@ladder@ function in Haskell, reflecting it,
and using it to formalize the algebra from~\citep{ackermann}.

\subsection{Algebraic Data Properties}
\label{subsec:fold}

The second category of properties pertain to
algebraic data types.
%, \eg \emph{folding} over lists.

\mypara{Fold Univerality}
Next, we proved properties of list folding, such as
the following, describing the \emph{universal}
property of right-folds~\citep{agdaequational}:
\begin{code}
foldr_univ
  :: f:(a -> b -> b)
  -> h:([a] -> b)
  -> e:b
  -> ys:[a]
  -> base:{h [] = e }
  -> stp:(x:a ->l:[a]->{h(x:l) = f x (h l)})
  -> {h ys = foldr f e ys}
\end{code}
Our proof @foldr_univ@ differs from the one in Agda,
in two ways.
First, we encode Agda's universal quantification over
@x@ and @l@ in the assumption @stp@ using a function type.
Second, unlike Agda, \toolname
does not support implicit arguments,
so at \emph{uses} of @foldr_univ@
the programmer must explicitly
provide arguments for @base@
and @stp@, as illustrated below.

\mypara{Fold Fusion}
Let us define the usual composition operator:
\begin{code}
reflect . :: (b -> c) -> (a -> b) -> a -> c
f . g     = \x -> f (g x)
\end{code}
We can prove the following @foldr_fusion@ theorem
(that shows operations can be pushed inside a @foldr@),
by applying @foldr_univ@ to explicit @bas@ and @stp@ proofs:
\begin{code}
  foldr_fusion
   :: h:(b -> c)
   -> f:(a -> b -> b)
   -> g:(a -> c -> c)
   -> e:b -> z:[a] -> x:a -> y:b
   -> fuse: {h (f x y) = g x (h y)})
   -> {(h . foldr f e) z = foldr g (h e) z}

  foldr_fusion h f g e ys fuse
    = foldr_univ g (h . foldr f e) (h e) ys
        (fuse_base h f e)
        (fuse_step h f e g fuse)
\end{code}
where @fuse_base@ and @fuse_step@ prove the
base and inductive cases, and for example
@fuse_base@ is a function with type
\begin{code}
fuse_base :: h:(b->c) -> f:(a->b->b) -> e:b
          -> {(h . foldr f e) [] = h e}
\end{code}

\subsection{Typeclass Laws}

\begin{figure}[t!]
\begin{center}
\begin{tabular}{rl}

\toprule

\multicolumn{2}{c}{\textbf{Monoid}} \\
{Left Ident.}  & $\emempty\ x\ \emappend\  \equiv x$  \\
{Right Ident.} & $x\ \emappend\ \emempty \equiv x$  \\
{Associativity}  & $(x\ \emappend\ y)\ \emappend\ z \equiv x\ \emappend\ (y\ \emappend\ z)$ \\

\midrule

\multicolumn{2}{c}{\textbf{Functor}} \\
{Ident.}     & $\efmap\ \eid\ xs \equiv \eid\ xs$ \\
{Distribution} & $\efmap\ (g\ecompose\ h)\ xs \equiv (\efmap\ g\ \ecompose\ \efmap\ h)\ xs$\\

\midrule

\multicolumn{2}{c}{\textbf{Applicative}} \\

{Ident.}      & $\epure \eid \eseq\ v \equiv v$ \\
{Compos.}     & $\epure (\ecompose) \eseq u \eseq v \eseq w \equiv u \eseq (v \eseq w)$ \\
{Homomorph.}  & $\epure\ f\ \eseq\ \epure\ x \equiv \epure\ (f\ x)$\\
{Interchange} & $u\ \eseq\ \epure\ y \equiv \epure\ (\$\ y) \ \eseq \ u$ \\

\midrule
\multicolumn{2}{c}{\textbf{Monad}} \\
{Left Ident.}   & $\ereturn\ a \ebind f \equiv f\ a$ \\
{Right Ident.}  & $m \ebind \ereturn \equiv m$ \\
{Associativity} & $(m\ebind f) \ebind g \equiv m\ebind (\lambda x \rightarrow f\ x \ebind g)$\\
\bottomrule
\end{tabular}
\end{center}
\caption{\textbf{Typeclass Laws verified using \toolname}}
\label{fig:laws}
\end{figure}
We used \toolname to prove the Monoid, Functor,
Applicative and Monad Laws, summarized in
Figure~\ref{fig:laws}, for various user-defined
instances summarized in Figure~\ref{fig:eval-summary}.

%% The purpose of these proofs is to investigate the
%% proving abilities of \libname.
%% For this purpose, we defined the appropriate class
%% operators on user defined lists, instead of using
%% Haskell's predefined class instances.
%% %
%% In the near future, we plan to embed these proofs
%% to check the laws on real Haskell instances,
%% but this requires some engineering from the
%% \liquidHaskell team.

\mypara{Monoid Laws}
A Monoid is a datatype equipped with an associative
binary operator $\emappend$ and an \emph{identity}
element $\emempty$.
We use \toolname to prove that
@Peano@ (with @add@ and @Z@),
@Maybe@ (with a suitable @mappend@ and @Nothing@), and
@List@ (with append @++@ and @[]@) satisfy the monoid laws.
For example, we prove that @++@ (\S~\ref{subsec:list})
is associative by reifying the textbook proof~\cite{HuttonBook}
into a Haskell function, where the induction
corresponds to case-splitting and recurring
on the first argument:
\begin{mcode}
assoc :: xs:[a] -> ys:[a] -> zs:[a] ->
       {(xs ++ ys) ++ zs = xs ++ (ys ++ zs)}

assoc [] ys zs     = ([] #++# ys) ++ zs
                   =. [] #++# (ys ++ zs)
                   ** QED
assoc (x:xs) ys zs = ((x:  xs)#++# ys) ++ zs
                   =. (x: (xs ++ ys))#++# zs
                   =.  x:((xs ++ ys) ++ zs)
                   =.  x: (xs ++ (ys ++ zs))
                       $\because$ assoc xs ys zs
                   =. (x:xs)  #++# (ys ++ zs)
                   ** QED
\end{mcode}

\mypara{Functor Laws}
A type is a functor if it has a function
@fmap@ that satisfies the \emph{identity}
and \emph{distribution} (or fusion) laws
in Figure~\ref{fig:laws}.
For example, consider the proof of
the @fmap@ distribution law for the lists,
also known as ``map-fusion'', which is the
basis for important optimizations in
GHC~\cite{ghc-map-fusion}.
We reflect the definition of @fmap@:
\begin{code}
  reflect map :: (a -> b) -> [a] -> [b]
  map f []     = []
  map f (x:xs) = f x : fmap f xs
\end{code}
and then specify fusion and verify it by an inductive proof:
% by induction (recursion) on the list argument:
%
\begin{mcode}
  map_fusion
    :: f:(b -> c) -> g:(a -> b) -> xs:[a]
    -> {map (f . g) xs = (map f . map g) xs}
\end{mcode}

%%\begin{mcode}
%%  map_fusion f g []
%%    =  ((map f) #.# (map g)) []
%%    =. (map f) (#map# g [])
%%    =. #map# f []
%%    =. []
%%    =. #map# (f . g) []
%%    ** QED
%%
%%  map_fusion f g (x:xs)
%%    =   #map# (f . g) (x:xs)
%%    =. (f . g) x : map (f . g) xs
%%    =. (f . g) x : (map f #.# map g) xs
%%       $\because$  map_fusion f g xs
%%    =. (f# . #g) x : map f (map g xs)
%%    =. f   (g  x): map f (map g xs)
%%    =. #map# f (g x: map g xs)
%%    =. map f   (#map# g (x:xs))
%%    =. (map f #.# map g)(x:xs)
%%    ** QED
%%\end{mcode}
%%
% \NV{Say why we need defunctionalization}
% \NV{We use app function like HALO (link to the theory)}

% \mypara{Applicative}

\mypara{Monad Laws}
The monad laws, which relate the
properties of the two operators
$\ebind$ and $\ereturn$ (Figure~\ref{fig:laws}),
refer to $\lambda$-functions,
thus their proof exercises
our support for defunctionalization
and $\eta$- and $\beta$-equivalence.
% and the extensionality axioms to prove.
%
For example, consider the proof of the
associativity law for the list monad.
First, we reflect the bind operator:
\begin{code}
  reflect (>>=) :: [a] -> (a -> [b]) -> [b]
  (x:xs) >>= f = f x ++ (xs >>= f)
  []     >>= f = []
\end{code}
Next, we define an abbreviation for the associativity property:
\begin{code}
type AssocLaw m f g =
  {m >>= f >>= g = m >>= (\x -> f x >>= g)}
\end{code}
Finally, we can prove that the list-bind is associative:
\begin{mcode}
assoc :: m:[a] -> f:(a ->[b]) -> g:(b ->[c])
      -> AssocLaw m f g
assoc [] f g
  =  [] #>>=# f >>= g
  =. [] #>>=# g
  =. []
  =. [] #>>=# (\x -> f x >>= g) ** QED
assoc (x:xs) f g
  =  (x:xs) #>>=# f  >>= g
  =. (f x ++ xs >>= f) >>= g
  =. (f x >>= g) ++ (xs >>= f >>= g)
     $\because$ bind_append (f x) (xs >>= f) g
  =. (f x >>= g) ++ (xs >>= \y -> f y >>= g)
     $\because$ assoc xs f g
  =. (\y -> f y >>= g) x ++
     (xs >>= \y -> f y >>= g)
     $\because$ $\beta$eq f g x
  =. (x:xs) #>>=# (\y -> f y >>= g) ** QED
\end{mcode}
Where the bind-append fusion lemma states that:
\begin{code}
bind_append ::
  xs:[a] -> ys:[a] -> f:(a -> [b]) ->
  {(xs++ys) >>= f = (xs >>= f)++(ys >>= f)}
\end{code}
Notice that the last step requires
$\beta$-equivalence on anonymous
functions, which we get by explicitly
inserting the redex in the logic,
via the following lemma with @trivial@ proof
\begin{mcode}
  $\beta$eq :: f:_ -> g:_ -> x:_ ->
     {bind (f x) g = (\y -> bind (f y) g) x}
  $\beta$eq _ _ _ = trivial
\end{mcode}

\subsection{Functional Correctness} \label{subsec:programs}

Finally, we proved correctness of two programs
from the literature: a SAT solver and a Unification
algorithm.

\mypara{SAT Solver}
We implemented and verified the simple
SAT solver used to illustrate and evaluate
the features of the dependently typed language
Zombie~\citep{Zombie}.
The solver takes as input a formula @f@
and returns an assignment that
\emph{satisfies} @f@ if one exists.
\begin{code}
solve :: f:Formula -> Maybe {a:Asgn|sat a f}
solve f = find (`sat` f) (assignments f)
\end{code}
Function @assignments f@ returns all possible
assignments of the formula @f@ and @sat a f@
returns @True@ iff the assignment @a@ satisfies
the formula @f@:
\begin{code}
  reflect sat :: Asgn -> Formula -> Bool
  assignments :: Formula -> [Asgn]
\end{code}
Verification of @solve@ follows simply by
reflecting @sat@ into the refinement logic,
and using (bounded) refinements to show
that @find@ only returns values on which
its input predicate yields @True@~\cite{Vazou15}.
\begin{code}
  find :: p:(a -> Bool) -> [a]
       -> Maybe {v:a | p v}
\end{code}

\mypara{Unification}
As another example, we verified the
unification of first order terms, as
presented in~\citep{Sjoberg2015}.
First, we define a predicate alias for
when two terms @s@ and @t@ are equal
under a substitution @su@:
\begin{code}
  eq_sub su s t = apply su s == apply su t
\end{code}
Now, we can define a Haskell function
@unify s t@ that can diverge, or return
@Nothing@, or return a substitution @su@
that makes the terms equal:
\begin{code}
  unify :: s:Term -> t:Term
        -> Maybe {su| eq_sub su s t}
\end{code}
For the specification and verification
we only needed to reflect @apply@ and
not @unify@; thus we only had to verify
that the former terminates, and not the latter.
%
% not that @unify@ terminates, which is a
% complicate proof.

%%% HERE
%
As before, we prove correctness by invoking
separate helper lemmas.
For example to prove the post-condition
when unifying a variable @TVar i@ with
a term @t@ in which @i@ \emph{does not}
appear, we apply a lemma @not_in@:
\begin{mcode}
  unify (TVar i) t2
    | not (i Set_mem freeVars t2)
    = Just (const [(i, t2)] $\because$ not_in i t2)

\end{mcode}
\ie if @i@ is not free in @t@,
the singleton substitution yields @t@:
\begin{code}
  not_in :: i:Int
         -> t:{Term | not (i Set_mem freeVars t)}
         -> {eq_sub [(i, t)] (TVar i) t}
\end{code}
%%
%% \NV{Emphasize how real world - diverging code co-exists with refinement reflection}

\section{Related Work}\label{sec:related}

\NV{CHECL Relational-F*, Barthe et al, from POPL 2014, and EasyCrypt}

% We compare refinement reflection to the most closely related
% lines of work in the vast literature on program verification.

\mypara{SMT-Based Verification}
SMT-solvers have been extensively used to automate
program verification via Floyd-Hoare logics~\cite{Nelson81}.
Our work is inspired by Dafny's Verified
Calculations~\citep{LeinoPolikarpova16},
a framework for proving theorems in
Dafny~\citep{dafny}, but differs in
(1)~our use of reflection instead of axiomatization, and
(2)~our use of refinements to compose proofs.
Dafny, and the related \fstar~\citep{fstar}
which like \toolname, uses types to compose
proofs, offer more automation by translating
recursive functions to SMT axioms.
However, unlike reflectionm this axiomatic
approach renders typechecking / verification
undecidable (in theory) and leads to
unpredictability and divergence
(in practice)~\citep{Leino16}.

%% In a work more closely related to
%% ours, \fstar uses refinement types
%% for program verification supporting
%% expressiveness of fully dependent types.
%% %
%% As in Dafny, \fstar directly translates
%% recursive functions to axioms in the logic
%% thus suffers from the ``butterfly effect''
%% and allows the user to explicitly write SMT tactics to control it.

%% Leino \etal~\citep{Leino16}
%% name this problem as the ``butterfly
%% effect'', in which minor modifications
%% to the program source cause significant
%% instabilities in verification and propose
%% trigger selection strategies to address it.
%% %
%% We avoid the ``butterfly effect'' by not
%% directly axiomatizing functions into logic.
%% Instead the information about the function's
%% body is exactly captured in function's result
%% type and user needs to explicitly invoke the function to push
%% the function's definition information into the logic.

\mypara{Dependent types}
Our work is inspired by dependent type
systems like Coq~\citep{coq-book} and
Agda~\citep{agda}.
Reflection shows how deep specification
and verification in the style of Coq and Agda
can be \emph{retrofitted} into existing languages
via refinement typing.
Furthermore, we can use SMT to significantly
automate reasoning over important theories like
arithmetic, equality and functions.
It would be interesting to investigate how
the tactics and sophisticated proof search
of Coq \etc can be adapted to the refinement setting.

% which allow for arbitrary expressiveness of the type system
% in the cost of automatic verification.
%
%% The syntax of \libname's operators is inspired by
%% Equational Reasoning in Agda~\citep{agda}.
%% Here we extended these equational operators
%% to support linear arithmetic and, for example, prove
%% properties of Ackermann function.
%% %
%% Unlike Adga, proof term are explicit in \libname,
%% we do not use heuristics to infer proofs.

\mypara{Dependent Types for Non-Terminating Programs}
Zombie~\citep{Zombie, Sjoberg2015} integrates
dependent types in non terminating programs
and supports automatic reasoning for equality.
Vazou \etal have previously~\citep{Vazou14} shown
how Liquid Types can be used to check
non-terminating programs.
Reflection makes \toolname at least as
expressive as Zombie, \emph{without}
having to axiomatize the theory of
equality within the type system.
Consequently, in contrast to Zombie,
SMT based reflection lets \toolname
verify higher-order specifications
like @foldr_fusion@.

% which lets us use SMT automation
% to verify deep specifications of
% non-trivial programs.
%
% Our current extension is orthogonal to the
% previous work: our system remains sound as
% long as logical terms provably terminate.
%
% We get automation from SMT solvers for not only
% the theory of equality, but also linear arithmetic.
%
% \NV{Zombie with rewritting does not allow HIGHER ORDER reasoning}

\mypara{Dependent Types in Haskell}
Integration of dependent types into Haskell
has been a long standing goal that dates back
to Cayenne~\citep{cayenne}, a Haskell-like,
fully dependent type language with undecidable
type checking.
In a recent line of work~\citep{EisenbergS14}
Eisenberg \etal aim to allow fully dependent
programming within Haskell, by making
``type-level programming ... at least as
  expressive as term-level programming''.
Our approach differs in two significant ways.
First, reflection allows SMT-aided verification
which drastically simplifies proofs over key theories
like linear arithmetic and equality.
Second, refinements are completely erased at run-time.
That is, while both systems automatically lift Haskell
code to either uninterpreted logical functions
or type families, with refinements, the logical
functions are not accessible at run-time, and
promotion cannot affect the semantics of
the program.
As an advantage (resp. disadvantage) our
proofs cannot degrade (resp. optimize)
the performance of programs.

\mypara{Proving Equational Properties}
% of Haskell Programs}
%
Several authors have proposed tools for proving
(equational) properties of (functional) programs.
Systems~\citep{sousa16} and \citep{KobayashiRelational15}
extend classical safety verification algorithms,
respectively based on Floyd-Hoare logic and Refinement Types,
to the setting of relational or $k$-safety properties
that are assertions over $k$-traces of a program.
Thus, these methods can automatically prove that
certain functions are associative, commutative \etc.
but are restricted to first-order properties and
are not programmer-extensible.
Zeno~\citep{ZENO} generates proofs by term
rewriting and Halo~\citep{HALO} uses an axiomatic
encoding to verify contracts.
Both the above are automatic, but unpredictable and not
programmer-extensible, hence, have been limited to
far simpler properties than the ones checked here.
Hermit~\citep{Farmer15} proves equalities by rewriting
GHC core guided by user specified scripts.
In contrast, our proofs are simply Haskell programs,
we can use SMT solvers to automate reasoning, and,
most importantly, we can connect the validity of
proofs with the semantics of the programs.

% \RJ{say: hermit does typeclass laws}
%
% \NV{TO ADD Naoki and class laws for TFP}
%
%% Compared to these systems, our proofs are
%% expressed as Haskell programs and do not
%% require the user to learn a different
%% tactic languages.
%% %
%% Moreover, our system is more general
%% as it allows for both equational
%% and linear arithmetic proofs.
%% %
%% On the other hand, \libname requires
%% explicit proofs and does not currently
%% support any automatic heuristics.

\section{Conclusions \& Future Directions}

We have shown how refinement reflection -- namely
reflecting the definitions of functions in their
output refinements -- can be used to convert a
language into a proof assistant, while ensuring
(refinement) type checking stays decidable and
predictable via careful design of the logic and
proof combinators.

Our evaluation shows that refinement reflection
lets us prove deep specifications of a variety
of implementations, and identifies
important avenues for research.
First, while proofs are \emph{possible}, they can
sometimes be \emph{cumbersome}. For example, in
the proof of associativity of the monadic bind
operator for the @Reader@ monad three of eight
(extensional) equalities required explanations,
some nested under multiple $\lambda$-abstractions.
Thus, it would be valuable to use recent
advances in refinement-based synthesis~\cite{polikarpova16}
to automate proof construction.
Second, while our approach to $\alpha$- and
$\beta$-equivalence is sound, we do not know
if it is \emph{complete}. We conjecture it is,
due to the fact that our refinement terms
are from the simply typed lambda calculus (STLC).
Thus, it would be interesting to use the
normalization of STLC to develop a sound
and complete SMT axiomatization, thereby
automating proofs predictably.

%% We extended Liquid Types to allow reasoning about program functions.
%% %
%% We represent program functions into logic as uninterpreted functions
%% and
%% capture the behavior of functions in the result of the function's
%% types.
%% %
%% We preserved decidable type checking by requiring the user,
%% to explicitly invoke functions,
%% instead of the solver to instantiate functional axioms.
%%
%% In the future we plan to embed existing heuristics and tactics
%% from dependent type languages, like Coq and Adga,
%% to automate common proof procedures.

%% \subsection*{Acknowledgements}

{
\bibliographystyle{plain}
\bibliography{sw}
}

\ifthenelse{\equal{\isTechReport}{true}}
{
\appendix
   \section{Implementation: \href{https://github.com/ucsd-progsys/liquidhaskell/tree/popl17}{Liquid Haskell}}

Refinement Reflection is fully implemented in 
Liquid Haskell and will be included in the next release. 
The implementation can be found in the
\href{https://github.com/ucsd-progsys/liquidhaskell/tree/popl17}{github repository}
and all the benchmarks of Sections~\ref{sec:overview} and~\ref{sec:evaluation}
are included in our 
\href{https://github.com/ucsd-progsys/liquidhaskell/tree/popl17/benchmarks/popl17/pos}{tests}. 
Next, we describe the file 
\href{https://github.com/ucsd-progsys/liquidhaskell/blob/popl17/benchmarks/popl17/pos/Proves.hs}{Proves.hs}, 
the library of proof combinators used by our benchmarks
and discuss known limitations of our implementation.

\renewcommand\libname{\ensuremath{\texttt{Proves}}\xspace}

\subsection{\libname: The Proof Combinators Library}
\label{subsec:library}

In this section we present \libname,
a Haskell library used to structure proof terms.
\libname is inspired by Equational Reasoning Data Types
in Adga~\citep{agdaequational}, providing operators to
construct proofs for equality and linear arithmetic in Haskell.
The constructed proofs are checked by an SMT-solver via Liquid Types.

\spara{Proof terms} are defined in \libname as a type alias for unit,
a data type that curries no run-time information
\begin{code}
  type Proof = ()
\end{code}
Proof types are refined to express theorems about program functions.
For example, the following @Proof@ type expresses that
@fib 2 == 1@
\begin{code}
  fib2 :: () -> {v:Proof | fib 2 == 1}
\end{code}
We simplify the above type by omitting the irrelevant
basic type @Proof@ and variable @v@
\begin{code}
  fib2 :: () -> { fib 2 == 1 }
\end{code}

\libname provides primitives to construct proof terms
by casting expressions to proofs.
To resemble mathematical proofs,
we make this casting post-fix.
We write @p *** QED@ to cast @p@ to a proof term,
by defining two operators @QED@ and @***@ as
\begin{code}
  data QED = QED

  (***) :: a -> QED -> Proof
  _ *** _  = ()
\end{code}

\spara{Proof construction.}
To construct proof terms, \libname
provides a proof constructor @op.@
for logical operators of the theory of
linear arithmetic and equality:
$\{=, \not =, \leq, <, \geq, > \} \in \odot$.
@op. x y@ ensures that $x \odot y$ holds, and returns @x@
\begin{code}
  op.:: x:a -> y:{a| x op y} -> {v:a| v==x}
  op. x _ = x

  -- for example
  ==.:: x:a -> y:{a| x==y} -> {v:a| v==x}
\end{code}
For instance, using @==.@
we construct a proof, in terms of Haskell code,
that @fib 2 == 1@:
\begin{code}
  fib2 _
    =   fib 2
    ==. fib 1 + fib 0
    ==. 1
    *** QED
\end{code} %$

\spara{Reusing proofs: Proofs as optional arguments.}
Often, proofs require reusing existing proof terms.
For example, to prove @fib 3 == 2@ we can reuse the above
@fib2@ proof.
We extend the proof combinators, to receive
an \textit{optional} third argument of @Proof@ type.
\begin{code}
  op.:: x:a -> y:a -> {x op y} -> {v:a|v==x}
  op. x _ _ = x
\end{code}
@op. x y p@ returns @x@ while the third argument @p@
explicitly proves $x \odot y$.

\spara{Optional Arguments.}
The proof term argument is optional.
To implement optional arguments in Haskell we use the standard technique
where for each operator @op!@ we define a type class @Optop@
that takes as input two expressions @a@ and returns a result @r@,
which will be instantiated with either the result value @r:=a@
or a function form a proof to the result @r:=Proof ->  a@.
\begin{code}
  class Optop a r where
    (op.) :: a -> a -> r
\end{code}
When no explicit proof argument is required,
the result type is just an @y:a@ that curries the proof @x op y@
\begin{code}
  instance Optop a a where
  (op.) :: x:a->y:{a| x op y}->{v:a | v==x }
  (op.) x _ = x
\end{code}
Note that Haskell's type inference~\citep{Sulzmann06}
requires both type class parameters @a@ and @r@ to be constrainted at class instance
matching time.
In most our examples, the result type parameter @r@ is not constrained
at instance matching time, thus
due to the Open World Assumption
the matching instance could not be determined.
To address the above, we used another common Haskell trick,
of generalizing the instance to type arguments @a@ and @b@ and then
constraint @a@ and @b@ to be equal @a~b@.
This generalization allows the instance to always match and
imposed the equality constraint after matching.
\begin{code}
  instance (a~b)=>Optop a b where
  (op.) :: x:a->y:{x op y}->{v:b | v==x }
  (op.) x _ = x
\end{code}

To explicitly provide a proof argument,
the result type @r@ is instantiated to @r:= Proof -> a@.
For the same instance matching restrictions as above,
the type is further generalized to return some @b@
that is constraint to be equal to @a@.
\begin{code}
  instance (a~b)=>Optop a (Proof->b) where
  (op.) :: x:a->y:a->{x op y}->{v:b | v==x }
  (op.) x _ _ = x
\end{code}
As a concrete example, we define the equality operator @==.@
via the type class @OptEq@ as
\begin{code}
  class OptEq a r where
   (==.):: a -> a -> r

  instance (a~b)=>OptEq a b where
   (==.)::x:a->y:{a|x==y}->{v:b|v==x}
   (==.) x _ = x

  instance (a~b)=>OptEq a (Proof->b) where
   (==.)::x:a->y:a->{x==y}->{v:b|v==x}
   (==.) x _ _ = x
\end{code}

\spara{Explanation Operator.}
The ``explanation operator'' @(?)@, or @($\because$)@, 
is used to better structure the proofs.
@(?)@ is an infix operator with same fixity as @(op.)@
that allows for the equivalence
@ x op. y ? p == (op.) x y p@
\begin{code}
  (?) :: (Proof -> a) -> Proof -> a
  f ? y = f y
\end{code}

\spara{Putting it all together}
Using the above operators,
we prove that @fib 3 == 2@,
reusing the previous proof of @fib 2 == 1@,
in a Haskell term that resembles mathematical proofs
\begin{code}
  fib3 :: () ->  {fib 3 == 2}
  fib3 _
    =   fib 3
    ==. fib 2 + fib 1
    ==. 2             ? fib2 ()
    *** QED
\end{code}

\spara{Unverified Operators}
All operators in \libname, but two are implemented in Haskell
with implementations verified by Liquid Haskell. 
The ''unsound`` operators are the assume 
(1). @(==?)@ that eases proof construction by assuming equalities, to be proven later
and (2). @(=*)@ extentional proof equality. 

\spara{Assume Operator} @(==?)@ eases proof construction by 
assuming equalities while the proof is in process. 
It is not implemented in that its body is @undefined@. 
Thus, if we run proof terms including assume operator, the proof will merely crash
(instead of returning @()@). 
Proofs including the assume operator are not considered complete, 
as via assume operator any statement can be proven, 

\spara{Function Extensional Equality}
Unlike the assume operator that is undefined and 
included in unfinished thus unsound proofs, 
the functions extensionality is included in valid proofs
that assume function extensionality, an axioms that is assumed, 
as it cannot be proven by our logic. 

To allow function equality via extensionality, 
we provide the user with a function comparison operator
that for each function @f@ and @g@ it transforms a proof 
that for every argument @x@, @f x = g x@ to a proof 
on function equality @f = g@. 
\begin{code}
(=*) :: Arg a => f:(a -> b) -> g:(a -> b)
     -> p:(x:a -> {f x = g x})
     -> {f = g}
\end{code}
The function @(=*)@ is not implemented in the library: 
it returns () and its type is assumed. 
But soundness of its usage requires the argument type variable @a@
to be constrained by a type class constraint @Arg a@, 
for both operational and type theoretic reasons.

From \textit{operational} point of view,
an implementation of @(=*)@ would require checking 
equality of @f x = g x@ \textit{forall} arguments @x@ of type @a@. 
This equality would hold due to the proof argument @p@. 
The only missing point is a way to enumerate all the argument @a@, 
but this could be provided by a method of the type clas @Arg a@. 
Yet, we have not implement @(=*)@ because we do not know how to 
provide such an implementation that can provably satisfy @(=*)@'s type.

From \textit{type theoretic} point of view, 
the type variable argument @a@
appears only on negative positions. 
Liquid type inference is smart enough to infer that 
since @a@ appears only negative @(=*)@ cannot use any @a@
and thus will not call any of its argument arguments @f@, @g@, nor the @p@. 
Thus, at each call site of @(=*)@ the type variable `a` is instantiated
with the refinement type @{v:a | false}@ indicating dead-code 
(since @a@s will not be used by the callee.)
Refining the argument @x:a@ with false at each call-site though
leads to unsoundness, as each proof argument @p@ is a valid proof under 
the false assumption. 
What Liquid inference cannot predict is our intention to call 
@f@, @g@ and @p@ at \textit{every possible argument}. 
This information is capture by the type class constraint @Arg a@
that (as discussed before~\citep{Vazou13}) states that methods of 
the type class 
@Arg a@ may create values of type @a@, thus, 
due to lack of information on the values that are created by the
methods of @Arg a@, @a@ can only be refined with @True@.

With extensional equality, we can prove 
that @\x -> x@ is equal to @\x -> id x@, 
by providing an explicit explanation that 
if we call both these functions with the same 
argument @x@, they return the same result, for each @x@.
\begin{mcode}
 safe :: Arg a => a 
       -> {(\x -> id x) = (\x -> x)}
 safe _  = (\x -> x) 
         =*(\x -> id x) $\because$ (exp ())
 
 exp :: Arg a => a -> x:a 
     -> {(\x -> id x) x = (\x -> x) x}
 exp _ x =  id x 
         ==. x
         *** QED
\end{mcode} 
Note that the result of @exp@ 
is an equality of the redexes 
@(\x -> id x) x@ and @((\x -> x) x@. 
Extentional function equality requires 
as argument an equality on such redexes. 
Via $\beta$ equality instantiations, 
both such redexes will automatically reduce, 
requiring @exp@ to prove @id x = x@, 
with is direct.  

Admittedly, proving function equality via extensionality
is requires a cumbersome indirect proof. 
For each function equality in the main proof
one needs to define an explanation function 
that proves the equality for every argument.

\subsection{Engineering Limitations}
The theory of refinement reflection is fully implemented in Liquid Haskell. 
Yet, to make this extension \textit{usable} in \textit{real world applications}
there are four known engineering limitations that need to be addressed. 
All these limitations seem straightforward to address 
and we plan to fix them soon.

\spara{The language of refinements}
is typed lambda calculus. That is the types of the lambda arguments are 
explicitly specified 
instead of being inferred. 
As another minor limitation, the refinement language parser 
requires the argument to be enclosed in parenthesis
in applications where the function is not a variable. 
Thus the Haskell expression 
@(\x -> x) e@ should be written as @(\x:a -> x) (e)@
in the refinement logic, 

\spara{Class instances methods} can not be reflected. 
Instead, the methods we want to use in the theorems/propositions
should be defined as Haskell functions. 
This restriction has two major implications. 
Firstly, we can not verify correctness of library provided 
instances but we need to redifine them ourselves. 
Secondly, we cannot really verify class instances with class preconditions. 
For example, during verification of monoid associativity of the Maybe instance
\begin{code}
  instance (Monoid a) => Monoid (Maybe a)
\end{code}
there is this @Monoid a@ class constraint assumption we needed to raise to 
proceed verification.

\spara{Only user defined data types}
can currently used in verification. 
The reason for this limitation is that 
reflection of case expressions 
requires checker and projector measures for each 
data type used in reflected functions. 
Thus, not only should these data types be defined in 
the verified module, but also should be 
be injected in the logic by providing a refined version of 
the definition that can (or may not) be trivially refined. 

For example, to reflect a function that uses 
@Peano@ numbers, the Haskell \textit{and} the refined @Peano@ 
definitions should be provided
\begin{code}
data Peano = Z | S Peano

{-@ data Peano [toInt] 
     = Z 
     | S {prev :: Peano} 
  @-}
\end{code}
Note that the termination function @toInt@ 
that maps @Peano@ numbers to natural numbers 
is also crucial for soundness of reflection. 

\spara{There is no module support.}
All reflected definitions, 
including, measures (automatically generated checkers and selector, 
but also the classic lifted Haskell functions to measures) 
and the reflected types of the reflected functions, 
are not exposed outside of the module they are defined. 
Thus all definitions and propositions should exist in the same module. 

   \section{Soundness of Refinement Reflection}

We prove Theorem~\ref{thm:safety}
of \S~\ref{sec:types-reflection}
by reduction to Soundness of \undeclang~\citep{Vazou14}. 

\begin{theorem}{[Denotations]}~\label{tech:thm:denotations}
If $\hastype{\env}{\prog}{\typ}$ then
$\forall \sto\in \interp{\env}. \applysub{\sto}{\prog} \in \interp{\applysub{\sto}{\typ}}$.
\end{theorem}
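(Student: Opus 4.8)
The plan is to proceed by induction on the derivation of $\hastype{\env}{\prog}{\typ}$ and to discharge every case shared with \undeclang by appeal to its soundness result~\citep{Vazou14}, leaving only the rules that are genuinely new in \corelan: selfification (\rtexact), reflection (\rtreflect), and the fixpoint rule (\rtfix). Since \corelan is \undeclang extended \emph{solely} with explicit recursion and the $\erefname$ binding form, and since the type denotation $\interp{\cdot}$ is defined purely through evaluation and the erased shape $\shape{\cdot}$, the bulk of the work is to confirm that these three additions respect that denotation; everything else is inherited.

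For \rtreflect I would exploit the fact that the $\erefname$- and $\eletname$-binders carry \emph{identical} operational semantics: both reduce by substituting $\efix{(\efun{f}{\gtyp_f}{e})}$ for $f$. Hence $\applysub{\sto}{\erefb{f}{\gtyp_f}{e}{\prog}}$ and $\applysub{\sto}{\eletb{f}{\exacttype{\gtyp_f}{e}}{e}{\prog}}$ evaluate in lockstep, so they are interchangeable for the value-based denotation. The premise of \rtreflect gives, by the induction hypothesis, that the let-encoding lies in $\interp{\applysub{\sto}{\typ}}$, and operational agreement transfers this to the $\erefname$-form. Crucially, the circular-looking check that $e$ inhabits its own reflected type $\exacttype{\gtyp_f}{e}$ is already performed inside that premise (via \rtlet), so no extra reasoning about the soundness of the strengthened output type is needed at this step.

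For \rtexact, given $\applysub{\sto}{e} \in \interp{\applysub{\sto}{\tref{v}{\btyp}{r}}}$ by the induction hypothesis, I would derive $\applysub{\sto}{e} \in \interp{\applysub{\sto}{\tref{v}{\btyp}{r \land v = e}}}$ directly from the denotation of basic types: if $\applysub{\sto}{e}$ evaluates to a value $w$, then $\applysub{\sto}{r}\subst{v}{w}$ reduces to $\etrue$ by hypothesis, while the new conjunct $w = \applysub{\sto}{e}$ reduces to $\etrue$ because $\betaeq{\applysub{\sto}{e}}{w}$ and the semantics interprets $=$ up to $\beta$-equivalence (this is exactly where the extensional reading of equality on functions, assumed in the operational semantics, does the work). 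The \rtfix case then reuses the recursion handling already validated for \undeclang's recursive $\eletname$: the basic denotation is permissive about divergence, so a diverging $\efix{\prog}$ is vacuously a member, and a terminating one is covered by a finite unrolling using $\efix{\prog} \hookrightarrow \prog\ (\efix{\prog})$.

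The main obstacle I anticipate is not any single case but making the ``reduction'' rigorous: formulating precisely the erasure that rewrites $\erefname$ into $\eletname$ and checking that it commutes with both the typing judgement (through \rtreflect) and the operational semantics, so that the \undeclang theorem can be invoked as a black box. The delicate point is the interaction of \rtexact with extensional function equality, since soundness there rests entirely on the denotational definition of $=$ respecting $\betaeq{}{}$; I would isolate this as a small lemma, proved before running the main induction.
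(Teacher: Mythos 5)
Your proposal is correct and matches the paper's own proof almost step for step: both proceed by induction on the typing derivation, inherit every unchanged case from the soundness proof of \undeclang~\citep{Vazou14}, argue \rtreflect by transferring the induction hypothesis from the $\eletname$-encoding to the $\erefname$-binder (both reduce to the same $\efix{}$-substituted term, and denotations are closed under evaluation), argue \rtexact by noting that the added conjunct $v = e$ reduces to $\etrue$ once $\applysub{\sto}{e}$ reaches its value $w$, and discharge \rtfix by the previously established fact that $\efix{}$ inhabits $\interp{(\typ \rightarrow \typ) \rightarrow \typ}$. The only difference is bookkeeping: the paper additionally re-verifies \rtlet, since \corelan's let-binders are recursive and desugar through $\efix{}$ (so that rule is also ``modified'' relative to \undeclang), a case your enumeration implicitly folds into the \rtfix/recursion discussion.
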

\begin{proof}
We use the proof from~\citep{Vazou14-tech} and specifically Lemma 4
that is identical to the statement we need to prove. 
Since the proof proceeds by induction in the type derivation, 
we need to ensure that all the modified rules satisfy the statement. 
\begin{itemize}
\item\rtexact
 Assume 
 	\hastype{\env}{e}{\tref{v}{\btyp}{\reft_r\land v = e}}.
 By inversion
	\hastype{\env}{e}{\tref{v}{\btyp}{\reft_r}}(1). 
 By (1) and IH we get 
 $\forall \sto\in \interp{\env}. 
   \applysub{\sto}{e} \in \interp{\applysub{\sto}{\tref{v}{\btyp}{\reft_r}}}$.
 We fix a $\sto\in \interp{\env}$
 We get that if \evalsto{\applysub{\sto}{e}}{w}, 
 then $\evalsto{\applysub{\sto}{\reft_r}\subst{v}{w}}{\etrue}$.  
 By the Definition of $=$ we get that 
 $\evalsto{w = w}{\etrue}$. 
 Since $\evalsto{\applysub{\sto}{(v = e)}\subst{v}{w}}{w = w}$, 
 then $\evalsto{\applysub{\sto}{(\reft_r\land v = e)}\subst{v}{w}}{\etrue}$.  
 Thus
   $\applysub{\sto}{e} \in \interp{\applysub{\sto}{\tref{v}{\btyp}{\reft_r\land v = e}}}$
  and since this holds for any fixed $\sto$,  
 $\forall \sto\in \interp{\env}. 
   \applysub{\sto}{e} \in \interp{\applysub{\sto}{\tref{v}{\btyp}{\reft_r\land v = e}}}$.
\item\rtlet
  Assume 
	\hastype{\env}{\eletb{x}{\gtyp_x}{e_x}{\prog}}{\typ}.  
  By inversion
	\hastype{\env, \tbind{x}{\gtyp_x}}{e_x}{\gtyp_x} (1), 
	\hastype{\env, \tbind{x}{\gtyp_x}}{\prog}{\gtyp} (2), and
    \iswellformed{\env}{\typ} (3). 
 By IH 
	$\forall \sto\in \interp{\env, \tbind{x}{\gtyp_x}}. 
	\applysub{\sto}{e_x} \in \interp{\applysub{\sto}{\gtyp_x}}$ (1')
	$\forall \sto\in \interp{\env, \tbind{x}{\gtyp_x}}. 
	\applysub{\sto}{\prog} \in \interp{\applysub{\sto}{\gtyp}}$ (2'). 
 By (1') and by the type of $\efix{}$ 
	$\forall \sto\in \interp{\env, \tbind{x}{\gtyp_x}}. 
	\applysub{\sto}{\efix{x}\ e_x} \in \interp{\applysub{\sto}{\gtyp_x}}$. 
 By which,  (2') and (3)
	$\forall \sto\in \interp{\env}. 
	\applysub{\sto}{\SUBST{\prog}{x}{\efix{x}\ {e_x}}} \in \interp{\applysub{\sto}{\gtyp}}$.  
%% \NV{CHECK}
\item\rtreflect
  Assume 
  \hastype{\env}{\erefb{f}{\gtyp_f}{e}{\prog}}
			    {\typ}. 
  By inversion, 
    \hastype{\env}{\eletb{f}{\exacttype{\gtyp_f}{e}}{e}{\prog}}
			     {\typ}. 
  By IH, 
 	$\forall \sto\in \interp{\env}. 
	\applysub{\sto}{\eletb{f}{\exacttype{\gtyp_f}{e}}{e}{\prog}} \in \interp{\applysub{\sto}{\gtyp}}$.  
  Since denotations are closed under evaluation, 
	$\forall \sto\in \interp{\env}. 
	\applysub{\sto}{\erefb{f}{\exacttype{\gtyp_f}{e}}{e}{\prog}} \in \interp{\applysub{\sto}{\gtyp}}$.  

\item\rtfix
  In Theorem 8.3 from~\citep{Vazou14-tech} (and using the textbook proofs from~\citep{PLC})
  we proved that for each type $\typ$, $\efix{}_\typ \in \interp{(\typ \rightarrow \typ) \rightarrow \typ}$.
\end{itemize}
\end{proof}

\begin{theorem}{[Preservation]}
If \hastype{\emptyset}{\prog}{\typ}
       and $\evalsto{\prog}{w}$ then $\hastype{\emptyset}{w}{\typ}$.
\end{theorem}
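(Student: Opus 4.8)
The plan is to prove preservation by reduction to the corresponding result for \undeclang~\citep{Vazou14}, exactly as was done for the Denotations part of Theorem~\ref{thm:safety}: the reduction semantics and typing rules of \corelan coincide with those of \undeclang except for the $\erefname$ binding form, the explicit $\efix{}$ reductions, and the new typing rules \rtexact and \rtfix, so only these ingredients require fresh argument. First I would reduce the multi-step statement to single-step preservation --- that is, if $\hastype{\emptyset}{\prog}{\typ}$ and $\evals{\prog}{\prog'}$ then $\hastype{\emptyset}{\prog'}{\typ}$ --- by induction on the length $j$ of the reduction sequence witnessing $\evalsto{\prog}{w}$, since this chain is finite and ends at the value $w$.

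For single-step preservation I would case on the reduction rule applied, relying on the standard substitution lemma and the Constant Typing Lemma~\ref{lemma:constants}. The contextual, $\beta$-reduction, constant-application, and $\mathtt{case}$ cases are inherited verbatim from the \undeclang preservation proof. The genuinely new reduction cases are the $\erefname$-to-$\efix{}$ step, the $\eletname$-to-$\efix{}$ step, and fix-unrolling $\efix{\prog} \hookrightarrow \prog\ (\efix{\prog})$; each is discharged by combining \rtfix, the relevant \rtlet or \rtreflect inversion, and substitution. For the $\erefname$ case in particular, rule \rtreflect types the binder exactly as the \eletname-binder $\eletb{f}{\exacttype{\gtyp_f}{e}}{e}{\prog}$, so preservation follows from the \eletname case together with the fact that $\efix{(\efun{f}{\gtyp_f}{e})}$ inhabits the reflected type $\exacttype{\gtyp_f}{e}$, which is precisely what is checked during type checking.

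The crux is the \rtexact case. Suppose $\prog = e$ is typed via \rtexact, deriving $\typ = \tref{v}{\btyp}{r \land v = e}$ from $\hastype{\emptyset}{e}{\tref{v}{\btyp}{r}}$, and $\evals{e}{e'}$. By the induction hypothesis we obtain $\hastype{\emptyset}{e'}{\tref{v}{\btyp}{r}}$, and re-applying \rtexact gives $\hastype{\emptyset}{e'}{\tref{v}{\btyp}{r \land v = e'}}$. It remains to recover the \emph{original} refinement $v = e$, whose right-hand side still mentions the un-reduced $e$. Since $\evals{e}{e'}$ implies $\betaeq{e}{e'}$, the two singleton refinements denote the same set: for every value $w$, the predicates $(v = e)\subst{v}{w}$ and $(v = e')\subst{v}{w}$ evaluate to the same boolean, because $e$ and $e'$ reduce to a common value and the equality constant respects $\betaeq$ (indeed it is defined as extensional equality on functions). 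Hence $\interp{\tref{v}{\btyp}{r \land v = e'}} = \interp{\tref{v}{\btyp}{r \land v = e}}$, and a single application of the denotational subtyping rule \rsubbase via \rtsub yields $\hastype{\emptyset}{e'}{\typ}$.

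I expect this \rtexact case to be the main obstacle, since it is the only place where reduction alters the \emph{syntax} appearing inside a refinement while the ascribed type is held fixed; it is exactly the new content that reflection-driven selfification adds over \undeclang. The argument rests entirely on the semantic invariance of the equality refinement under $\betaeq$, so the key supporting lemma to isolate and prove first is that $\betaeq{e}{e'}$ implies $\interp{\tref{v}{\btyp}{r \land v = e}} = \interp{\tref{v}{\btyp}{r \land v = e'}}$, which itself follows from determinacy of evaluation and the definition of the equality constant.
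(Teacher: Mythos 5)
Your proposal is correct and follows essentially the same route as the paper: reduce multi-step preservation to a single-step lemma, inherit the unchanged cases from \undeclang, discharge $\rtlet$/$\rtreflect$ via $\rtfix$ plus the substitution lemma, and handle the crux $\rtexact$ case by re-applying $\rtexact$ to get $v = e'$ and then converting back to $v = e$ by subtyping and $\rtsub$. The paper compresses your key supporting lemma into the one-line appeal that ``subtyping is closed under evaluation''; your denotational justification via $\betaeq$-invariance of the singleton refinement is exactly the content behind that appeal.
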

\begin{proof}
In~\citep{Vazou14-tech} proof proceeds by iterative application 
of Type Preservation Lemma 7. 
Thus, it suffices to ensure Type Preservation in \corelan, which 
it true by the following Lemma.
\end{proof}

\begin{lemma}
If \hastype{\emptyset}{\prog}{\typ}
       and $\evals{\prog}{\prog'}$ then $\hastype{\emptyset}{\prog'}{\typ}$.
\end{lemma}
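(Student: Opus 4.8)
The statement is the single-step subject-reduction property for \corelan, which is exactly the ingredient the preceding Preservation theorem reduces to. The plan is to proceed by induction on the derivation of $\evals{\prog}{\prog'}$ (Figure~\ref{fig:semantics}), i.e.\ by case analysis on where the redex sits and which base rule fires, appealing in each case to an inversion of the typing derivation $\hastype{\emptyset}{\prog}{\typ}$. Since \corelan differs from \undeclang only in the three typing rules \rtexact, \rtreflect, \rtfix and the two new reductions for $\erefname$-binders and $\efix{}$, for all the standard $\beta$-, constant-, case-, and let-reductions I would reuse the argument of \citep{Vazou14-tech} essentially verbatim, and concentrate the work on the new cases.

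Two auxiliary facts carry the proof. First, a standard substitution lemma: if $\hastype{\env, \tbind{x}{\typ_x}}{e}{\typ}$ and $\hastype{\env}{e_x}{\typ_x}$ then $\hastype{\env}{\SUBST{e}{x}{e_x}}{\SUBST{\typ}{x}{e_x}}$; together with the constant-typing Lemma~\ref{lemma:constants}, this discharges the $\beta$-, constant-, case-, let-, and $\erefname$-reduction cases once the relevant binder types have been recovered by inversion. Second --- and this is the genuinely new ingredient --- I would use that reduction is contained in logical equality: $\evals{e}{e'}$ implies $\betaeq{e}{e'}$, and by the semantics of $=$ on \corelan values the proposition $e = e'$ is valid, so the denotational base-subtyping rule of Figure~\ref{fig:typing} validates $\issubtype{\emptyset}{\tref{v}{\btyp}{r \land v = e'}}{\tref{v}{\btyp}{r \land v = e}}$.

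For the two new reductions themselves the arguments are short. The $\efix{}$ case $\evals{\efix{\prog}}{\prog\ (\efix{\prog})}$ follows by reading $\efix{\prog}$ as an application, inverting \rtfix and \rtapp, and re-applying \rtapp --- exactly the $\efix{}$-typing reasoning already invoked for Theorem~\ref{thm:safety}. The $\erefname$-binder case $\evals{\erefb{f}{\gtyp_f}{e}{\prog}}{\SUBST{\prog}{f}{\efix{(\efun{f}{\gtyp_f}{e})}}}$ is handled by first applying \rtreflect to view the redex as $\eletb{f}{\exacttype{\gtyp_f}{e}}{e}{\prog}$, inverting \rtlet to obtain the reflected type $\exacttype{\gtyp_f}{e}$ for the fixpoint binder, and then substituting the fixpoint via the substitution lemma.

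The main obstacle is the interaction of \rtexact with the context (congruence) reduction $C[\prog_0] \evals{}{} C[\prog_0']$. Because \rtexact threads the subject expression into its own type, the type at which $\prog_0$ is used inside $C$ may itself mention $\prog_0$, so a naive decompose-and-recompose step (``retype $\prog_0'$ at the same type, plug it back in'') does not literally typecheck. I would resolve this by proving the replacement step modulo subsumption: after re-deriving a type for $\prog_0'$ from the induction hypothesis, I re-establish the original singleton refinement $v = \prog_0$ using the equality $\betaeq{\prog_0}{\prog_0'}$ noted above, and then close the context using \rtsub. The accompanying bookkeeping --- formulating the inversion lemmas so that they see through arbitrary intervening uses of \rtsub and \rtexact, so that the premise needed from \rtfun, \rtapp, or \rtcase can always be extracted --- is the part that demands the most care, but it is routine once the equality-under-reduction fact is in hand.
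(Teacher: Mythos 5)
Your proposal is correct, but it is organized around a different induction than the paper's. The paper piggybacks on the preservation proof of \undeclang~\citep{Vazou14-tech}, which proceeds by induction on the \emph{typing} derivation; consequently its cases are exactly the modified typing rules \rtexact, \rtlet, \rtreflect, \rtfix, and the context-replacement problem you worry about never surfaces as a separate issue --- it is absorbed into the \rtexact case, where the paper argues: invert, apply the IH, re-apply \rtexact to get $v = \prog'$, and conclude by \rtsub using the fact that ``subtyping is closed under evaluation.'' You instead induct on the \emph{reduction} derivation, which buys a more self-contained treatment of the two new reductions (your reading of $\evals{\efix{\prog}}{\prog\ (\efix{\prog})}$ as an \rtapp inversion corresponds to the paper's terse ``\rtfix cannot occur'' dismissal, since bare $\efix{}$ never steps), but costs you exactly the bookkeeping you flag: inversion lemmas that see through \rtsub and \rtexact, and the replacement-modulo-subsumption argument for contexts. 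The two pillars are the same in both proofs --- the substitution lemma (Lemma 6 of~\citep{Vazou14-tech}) and closure of the singleton subtyping $\issubtype{\emptyset}{\tref{v}{\btyp}{r \land v = \prog'}}{\tref{v}{\btyp}{r \land v = \prog}}$ under evaluation --- and in fact you justify the latter (via $\betaeq{\prog}{\prog'}$ and the denotational rule \rsubbase) where the paper merely asserts it. One small slip: you defer the $\eletname$-reduction to~\citep{Vazou14-tech} ``verbatim,'' but in \corelan that reduction substitutes $\efix{(\efun{x}{\gtyp}{\bd_x})}$ and hence needs the new \rtfix rule; this is why the paper treats \rtlet as a modified case. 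Your own handling of the $\erefname$ case (invert the let, type the fixpoint, substitute) contains precisely the missing argument, so this is an imprecision rather than a gap.
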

\begin{proof}
Since Type Preservation in \undeclang is proved by induction on the type derivation tree, 
we need to ensure that all the modified rules satisfy the statement. 
\begin{itemize}
\item\rtexact
 Assume 
 	\hastype{\emptyset}{\prog}{\tref{v}{\btyp}{\reft_r\land v = \prog}}.
 By inversion
	\hastype{\emptyset}{\prog}{\tref{v}{\btyp}{\reft_r}}.
 By IH we get 
	\hastype{\emptyset}{\prog'}{\tref{v}{\btyp}{\reft_r}}.
 By rule \rtexact we get 
 	\hastype{\emptyset}{\prog'}{\tref{v}{\btyp}{\reft_r\land v = \prog'}}.
 Since subtyping is closed under evaluation, we get 
 	\issubtype{\emptyset}{\tref{v}{\btyp}{\reft_r\land v = \prog'}}
 	                {\tref{v}{\btyp}{\reft_r\land v = \prog}}.
 By rule \rtsub we get 
 	\hastype{\emptyset}{\prog'}{\tref{v}{\btyp}{\reft_r\land v = \prog}}.

\item\rtlet
  Assume 
	\hastype{\emptyset}{\eletb{x}{\gtyp_x}{e_x}{\prog}}{\typ}.
 By inversion, 
   \hastype{\tbind{x}{\gtyp_x}}{e_x}{\gtyp_x}  (1), 
   \hastype{\tbind{x}{\gtyp_x}}{\prog}{\gtyp} (2), and
   \iswellformed{\env}{\typ} (3). 
 By rule \rtfix
   \hastype{\tbind{x}{\gtyp_x}}{\efix{x}\ {e_x}}{\gtyp_x}  (1').
 By (1'), (2) and Lemma 6 of~\citep{Vazou14-tech}, we get 
   \hastype{}{\SUBST{\prog}{x}{\efix{x}\ {e_x}}}{\SUBST{\gtyp}{x}{\efix{x}\ e_x}}. 
 By (3)
   $ \SUBST{\gtyp}{x}{\efix{x}\ e_x} \equiv \gtyp$.    
 Since 
   $\prog' \equiv \SUBST{\prog}{x}{\efix{x}\ {e_x}}$, 
 we have 
 \hastype{\emptyset}{\prog'}{\gtyp}. 

\item\rtreflect
  Assume 
	\hastype{\emptyset}{\erefb{x}{\gtyp_x}{e_x}{\prog}}{\typ}.
 By double inversion, with $\gtyp_x' \equiv \exacttype{\gtyp_x}{e_x} $; 
   \hastype{\tbind{x}{\gtyp_x'}}{e_x}{\gtyp_x'}  (1), 
   \hastype{\tbind{x}{\gtyp_x'}}{\prog}{\gtyp} (2), and
   \iswellformed{\env}{\typ} (3). 
 By rule \rtfix
   \hastype{\tbind{x}{\gtyp_x'}}{\efix{x}\ {e_x}}{\gtyp_x'}  (1').
 By (1'), (2) and Lemma 6 of~\citep{Vazou14-tech}, we get 
   \hastype{}{\SUBST{\prog}{x}{\efix{x}\ {e_x}}}{\SUBST{\gtyp}{x}{\efix{x}\ e_x}}. 
 By (3)
   $ \SUBST{\gtyp}{x}{\efix{x}\ e_x} \equiv \gtyp$.    
 Since 
   $\prog' \equiv \SUBST{\prog}{x}{\efix{x}\ {e_x}}$, 
 we have 
 \hastype{\emptyset}{\prog'}{\gtyp}. 

\item\rtfix
  This case cannot occur, as $\efix{}$ does not evaluate to any program. 
\end{itemize}
\end{proof}

   \section{Soundness of Algorithmic Verification}
In this section we prove soundness of Algorithmic verification, 
by proving the theorems of \S~\ref{sec:algorithmic}
by referring to the proofs in~\citep{Vazou14-tech}. 

\subsection{Transformation}

\begin{definition}[Initial Environment]\label{def:initialsmt}
 We define the initial SMT environment \smtenvinit to include
 $$
 \begin{array}{rcll}
 \smtvar{c}  &\colon &\embed{\constty{c}}
   &\forall c\in \corelan\\
 \smtlamname{\sort_x}{\sort}&\colon&\sort_x \rightarrow \sort\rightarrow\tsmtfun{\sort_x}{\sort}
   &\forall \sort_x, \sort\in \smtlan\\
 \smtappname{\sort_x}{\sort}&\colon&\tsmtfun{\sort_x}{\sort} \rightarrow \sort_x \rightarrow \sort
   &\forall \sort_x, \sort\in \smtlan\\
 \smtvar{\dc}&\colon&\embed{\constty{\dc}}
   &\forall\dc\in\corelan\\
 \checkdc{\dc}&\colon&\embed{T \rightarrow \tbool}
   &\forall \dc\in \corelan\ \text{of data type}\ T \\
 \selector{\dc}{i}&\colon&\embed{T \rightarrow \typ_i}
   &\forall \dc\in \corelan\ \text{of data type}\ T \\
   &&&\text{and}\ i\text{-th argument}\ \typ_i \\
 {x^{\sort}_i} & \colon&{\sort}& \forall \sort \in \smtlan \text{and} 1\leq i\leq \maxlamarg\\
 \end{array}
 $$
Where $x^{\sort}_i$ are $\maxlamarg$ global names that only appear as lambda arguments.
\end{definition}

We modify the $\lgfun$ rule to ensure that 
logical abstraction is performed 
using the minimum globally defined lambda argument that is not already abstracted. 
We do so, using the helper function \maxlam{\sort}{\pred}:

\begin{align*}
\maxlam{\sort}{\smtlamname{\sort}{\sort'}\ {x^{\sort}_i}\ \pred} =& \mathtt{max}(i, \maxlam{\sort}{\pred})\\
\maxlam{\sort}{r\ \overline{r}} =& \mathtt{max}(\maxlam{\sort}{\pred, \overline{\pred}}) \\
\maxlam{\sort}{\pred_1 \binop \pred_2} = &  \mathtt{max}(\maxlam{\sort}{\pred_1, \pred_2})\\
\maxlam{\sort}{\unop \pred} =& \maxlam{\sort}{\pred}\\
\maxlam{\sort}{\eif{\pred}{\pred_1}{\pred_2}} =& \mathtt{max}(\maxlam{\sort}{\pred, \pred_1, \pred_2}) \\
\maxlam{\sort}{\pred} =& 0 \\
\end{align*}
$$
\inference{
    i = \maxlam{\embed{\typ_x}}{\pred} & i < \maxlamarg & y = x^{\embed{\typ_x}}_{i+1} \\ 
    \tologicshort{\env, \tbind{y}{\typ_x}}{e\subst{x}{y}}{}{\pred}{}{}{} &
  	\hastype{\env}{(\efun{x}{}{e})}{(\tfun{x}{\typ_x}{\typ})}\\
}{
	\tologicshort{\env}{\efun{x}{}{e}}{}
	        {\smtlamname{\embed{\typ_x}}{\embed{\typ}}\ {y}\ {\pred}}
	        {}{}{}
}[\lgfun]
$$

\begin{lemma}[Type Transformation]
If \tologicshort{\env}{e}{\typ}{p}{\sort}{\smtenv}{\axioms},
and \hastype{\env}{e}{\typ}, then
\smthastype{\smtenvinit, \embed{\env}}{p}{\embed{\typ}}.
\end{lemma}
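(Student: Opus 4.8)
The plan is to proceed by structural induction on the derivation of the transformation judgment $\tologicshort{\env}{e}{\typ}{p}{\sort}{\smtenv}{\axioms}$, following the rules of Figure~\ref{fig:defunc} and the modified $\lgfun$ rule stated just above, while threading through the auxiliary typing premise $\hastype{\env}{e}{\typ}$ wherever a rule supplies it. In every case the goal is to assemble an \smtlan sorting derivation $\smthastype{\smtenvinit, \embed{\env}}{p}{\embed{\typ}}$ from the sorting derivations returned by the induction hypotheses on the subterms. Two facts drive the whole argument. First, the initial environment of Definition~\ref{def:initialsmt} pre-declares exactly the symbols the transformation emits: each $\smtvar{c}$ at sort $\embed{\constty{c}}$, each $\smtvar{\dc}$ at $\embed{\constty{\dc}}$, the combinators $\smtlamname{\sort_x}{\sort}$ and $\smtappname{\sort_x}{\sort}$ at their stated arrow sorts, the check and select functions $\checkdc{\dc}$ and $\selector{\dc}{i}$, and the pool of global lambda-argument names $x^{\sort}_i$. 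Second, the sort component $\sort$ carried by the transformation judgment is by construction $\embed{\typ}$, so the sort I conclude lines up with the embedded type without extra bookkeeping.

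First I would dispatch the base and congruence cases, which are routine. For $\lgbool$, $\lgint$, $\lgpop$, and $\lgdc$ the emitted term is an atom whose sort is read off directly from $\smtenvinit$; for $\lgvar$ it is read off the binding for $x$ at sort $\embed{\env(x)}$ supplied by $\embed{\env}$. For $\lgbinGEN$ and $\lgun$ I would apply the induction hypothesis to each operand, obtaining well-sorted translations at the operand sorts, observe that the interpreted operator has the matching first-order sort, and assemble a $\tbool$-sorted result. The case rules $\lgcaseBool$ and $\lgcase$ reduce to an $\mathtt{if}$-term: the guard is $\tbool$-sorted (for $\lgcase$ because applying the pre-declared $\checkdc{\dc}$ at sort $\tsmtfun{\embed{T}}{\tbool}$ yields a boolean), the substituted branches are well-sorted at $\embed{\typ}$ by the induction hypothesis, and the selectors $\selector{\dc}{i}$ are well-sorted by $\smtenvinit$, so the whole conditional inhabits $\embed{\typ}$.

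The application case $\lgapp$ is where the combinator sorts first earn their keep. From the induction hypotheses the function translation inhabits $\tsmtfun{\embed{\typ_x}}{\embed{\typ}}$ and the argument translation inhabits $\embed{\typ_x}$; since $\smtappname{\embed{\typ_x}}{\embed{\typ}}$ has sort $\tsmtfun{\embed{\typ_x}}{\embed{\typ}} \rightarrow \embed{\typ_x} \rightarrow \embed{\typ}$, the emitted $\smtappname{}{}$ application is well-sorted at $\embed{\typ}$. Here the auxiliary premise $\hastype{\env}{e}{\typ_x \rightarrow \typ}$ is what guarantees that the two operand sorts genuinely agree, so I would rely on it to justify the sort match rather than trying to infer the match from the syntax alone.

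I expect the $\lgfun$ case to be the main obstacle, because the modified rule renames the source binder $x$ to a fixed global name $y = x^{\embed{\typ_x}}_{i+1}$ drawn from the pre-declared pool. The subtlety is that the induction hypothesis on the body is taken in the environment $\env, \tbind{y}{\typ_x}$, whose embedding adds $\tbind{y}{\embed{\typ_x}}$, yet $y$ already occurs in $\smtenvinit$ at precisely sort $\embed{\typ_x}$; I would argue this duplication is harmless because the two sort assignments coincide, so sorting under $\smtenvinit, \embed{\env}, \tbind{y}{\embed{\typ_x}}$ is the same as under $\smtenvinit, \embed{\env}$. The side conditions $i = \maxlam{\embed{\typ_x}}{\pred}$ and $i < \maxlamarg$ are exactly what I need to know that $y$ is fresh relative to the lambda binders already live in $\pred$ and that the name pool is not exhausted, so that the renaming respects the normalization invariant. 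With the body translation established at sort $\embed{\typ}$ and $\smtlamname{\embed{\typ_x}}{\embed{\typ}}$ declared at sort $\embed{\typ_x} \rightarrow \embed{\typ} \rightarrow \tsmtfun{\embed{\typ_x}}{\embed{\typ}}$, the emitted $\smtlamname{}{}$ term inhabits $\tsmtfun{\embed{\typ_x}}{\embed{\typ}}$, which equals $\embed{\tfun{x}{\typ_x}{\typ}}$ by the definition of the type embedding, closing the case.
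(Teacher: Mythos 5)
Your proposal is correct and follows essentially the same route as the paper's proof: induction on the transformation derivation, reading off sorts of atoms from the initial environment of Definition~\ref{def:initialsmt}, assembling compound cases from the induction hypotheses together with the declared sorts of $\smtlamname{}{}$, $\smtappname{}{}$, $\checkdc{\dc}$ and $\selector{\dc}{i}$, and closing the $\lgfun$ case via the equation $\embed{\tfunbasic{\typ_x}{\typ}} = \tsmtfun{\embed{\typ_x}}{\embed{\typ}}$. Your discussion of the global lambda-binder pool and why its pre-declaration in $\smtenvinit$ makes the renamed binder's sort assignment harmless is, if anything, more explicit than the paper's own treatment of that case.
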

\begin{proof}
We proceed by induction on the translation
\begin{itemize}
\item \lgbool : Since $\embed{\tbool} = \tbool$,
If \hastype{\env}{b}{\tbool}, then  
\smthastype{\smtenvinit, \embed{\env}}{b}{\embed{\tbool}}.

\item \lgint :  Since $\embed{\tint} = \tint$,
If \hastype{\env}{n}{\tint}, then  
\smthastype{\smtenvinit, \embed{\env}}{n}{\embed{\tint}}.

\item \lgun : 
Since $\hastype{\env}{\lnot\ e}{\typ}$, then it should be 
$\hastype{\env}{e}{\tbool}$ and $\typ \equiv \tbool$.
By IH,  
\smthastype{\smtenvinit, \embed{\env}}{\pred}{\embed{\tbool}}, 
thus 
\smthastype{\smtenvinit, \embed{\env}}{\lnot \pred}{\embed{\tbool}}. 

\item \lgbinGEN
Assume 	
 $\tologicshort{\env}{e_1\binop e_2}{\tbool}{\pred_1 \binop\pred_2}{\tbool}{\smtenv}{\andaxioms{\axioms_1}{\axioms_2}}$.  
By inversion
    \tologicshort{\env}{e_1}{\typ}{\pred_1}{\embed{\typ}}{\smtenv}{\axioms_1}, and
    \tologicshort{\env}{e_2}{\typ}{\pred_2}{\embed{\typ}}{\smtenv}{\axioms_2}.
Since 
  \hastype{\env}{e_1\binop e_2}{\typ}, then 
  \hastype{\env}{e_1}{\typ_1} and  
  \hastype{\env}{e_1}{\typ_2}.
By IH,  
\smthastype{\smtenvinit, \embed{\env}}{\pred_1}{\embed{\typ_1}} and  
\smthastype{\smtenvinit, \embed{\env}}{\pred_2}{\embed{\typ_2}}.
We split cases on $\binop$
\begin{itemize}
\item If $\binop \equiv =$, then 
  $\typ_1 = \typ_2$, thus $\embed{\typ_1} = \embed{\typ_2}$
  and $\embed{\typ} = \typ = \tbool$.

\item If $\binop \equiv <$, then 
  $\typ_1 = \typ_2 = \tint$, thus $\embed{\typ_1} = \embed{\typ_2} = \tint$
  and $\embed{\typ} = \typ = \tbool$.

\item If $\binop \equiv \land$, then 
  $\typ_1 = \typ_2 = \tbool$, thus $\embed{\typ_1} = \embed{\typ_2} = \tbool$
  and $\embed{\typ} = \typ = \tbool$.
\item If $\binop \equiv +$ or $\binop \equiv -$, then 
  $\typ_1 = \typ_2 = \tint$, thus $\embed{\typ_1} = \embed{\typ_2} = \tint$
  and $\embed{\typ} = \typ = \tint$.
\end{itemize}

\item \lgvar : 
Assume 
	\tologicshort{\env}{x}{\env(x)}{x}{\embed{\env(x)}}{\emptyset}{\emptyaxioms}
Then 
   \hastype{\env}{x}{\env (x)} and 
   \smthastype{\smtenvinit, \embed{\env}}{x}{\embed{\env} (x)}.
But by definition 
  $(\embed{\env}) (x) = \embed{\env(x)}$. 

\item \lgpop : 
Assume 
	\tologicshort{\env}{c}{\constty{\odot}}{\smtvar{c}}{\embed{\constty{\odot}}}{\emptyset}{\emptyaxioms}
Also, 
   \hastype{\env}{c}{\constty{c}} and 
   \smthastype{\smtenvinit, \embed{\env}}{\smtvar{c}}{\smtenvinit (\smtvar{c})}.
But by Definition~\ref{def:initialsmt}
  $\smtenvinit (\smtvar{c}) = \embed{\constty{c}}$. 

\item \lgdc : 
Assume 
	\tologicshort{\env}{\dc}{\constty{\dc}}{\smtvar{\dc}}{\embed{\constty{\dc}}}{\emptyset}{\emptyaxioms}
Also, 
   \hastype{\env}{\dc}{\constty{\dc}} and 
   \smthastype{\smtenvinit, \embed{\env}}{\smtvar{\dc}}{\smtenvinit (\smtvar{\dc})}.
But by Definition~\ref{def:initialsmt}
  $\smtenvinit(\smtvar{\dc}) = \embed{\constty{c}}$. 

\item \lgfun : 
Assume 
	\tologicshort{\env}{\efun{x}{}{e}}{(\tfun{x}{\typ_x}{\typ})}
	        {\smtlamname{\embed{\typ_x}}{\embed{\typ}}\ {x^{\embed{\typ_x}}_{i}}\ {\pred}}
	        {\sort'}{\smtenv, \tbind{f}{\sort'}}{\andaxioms{\{\axioms_{f_1}, \axioms_{f_2}\}}{\axioms}}. 
By inversion 
    $i \leq \maxlamarg$
	\tologicshort{\env, \tbind{x^{\embed{\typ_x}}_{i}}{\typ_x}}{e\subst{x}{x^{\embed{\typ_x}}_{i}}}{}{\pred}{}{}{}, and
  	\hastype{\env}{(\efun{x}{}{e})}{(\tfun{x}{\typ_x}{\typ})}.
By the Definition~\ref{def:initialsmt} on $\smtlamname{}{}$, $x^{\sort}_i$ and induction, we get
   \smthastype{\smtenvinit, \embed{\env}}
     {\smtlamname{\embed{\typ_x}}{\embed{\typ}}\ {x^{\embed{\typ_x}}_{i}}\ {\pred}}
     {\tsmtfun{\embed{\typ_x}}{\embed{\typ}}}.
By the definition of the type embeddings we have
$\embed{\tfunbasic{x}{\typ_x}{\typ}} \defeq \tsmtfun{\embed{\typ_x}}{\embed{\typ}}$.

\item \lgapp : 
Assume 
\tologicshort{\env}{e\ e'}{\typ}
  {\smtappname{\embed{\typ_x}}{\embed{\typ}}\ {\pred}\ {\pred'}}{\embed{\typ}}{\smtenv}{\andaxioms{\axioms}{\axioms'}}. 
By inversion, 
	\tologicshort{\env}{e'}{\typ_x}{\pred'}{\embed{\typ_x}}{\smtenv}{\axioms'}, 
	\tologicshort{\env}{e}{\tfun{x}{\typ_x}{\typ}}{\pred}{\tsmtfun{\embed{\typ_x}}{\embed{\typ}}}{\smtenv}{\axioms},
	\hastype{\env}{e}{\tfun{x}{\typ_x}{\typ}}. 
By IH and the type of $\smtappname{}{}$ we get that 
   \smthastype{\smtenvinit, \embed{\env}}
     {\smtappname{\embed{\typ_x}}{\embed{\typ}}\ {\pred}\ {\pred'}}
     {\embed{\typ}}.

\item \lgcaseBool : 
Assume
	\tologicshort{\env}{\ecaseexp{x}{e}{\etrue \rightarrow e_1; \efalse \rightarrow e_2}}{\typ}
	 {\eif{\pred}{\pred_1}{\pred_2}}{\embed{\typ}}{\smtenv}{\andaxioms{\axioms}{\axioms_i}}
Since 
  \hastype{\env}{\ecaseexp{x}{e}{\etrue \rightarrow e_1; \efalse \rightarrow e_2}}{\typ}, then 
  \hastype{\env}{e}{\tbool}, 
  \hastype{\env}{e_1}{\typ}, and
  \hastype{\env}{e_2}{\typ}.
By inversion and IH, 
  \smthastype{\smtenvinit, \embed{\env}}{\pred}{\tbool}, 
  \smthastype{\smtenvinit, \embed{\env}}{\pred_1}{\embed{\typ}}, and
  \smthastype{\smtenvinit, \embed{\env}}{\pred_2}{\embed{\typ}}.
Thus, 
  \smthastype{\smtenvinit, \embed{\env}}{\eif{\pred}{\pred_1}{\pred_2}}{\embed{\typ}}.

\item \lgcase : 
Assume 
	\tologicshort{\env}{\ecase{x}{e}{\dc_i}{\overline{y_i}}{e_i}}{\typ}
	 {\eif{\checkdc{\dc_1}\ \pred}{\pred_1}{\ldots} \ \mathtt{else}\ \pred_n}{\embed{\typ}}{\smtenv}
	 {\andaxioms{\axioms}{\axioms_i}} and
	 \hastype{\env}{\ecase{x}{e}{\dc_i}{\overline{y_i}}{e_i}}{\typ}.
By inversion we get 
	\tologicshort{\env}{e}{\typ_e}{\pred}{\embed{\typ_e}}{\smtenv}{\axioms} and
	\tologicshort{\env}{e_i\subst{\overline{y_i}}{\overline{\selector{\dc_i}{}\ x}}\subst{x}{e}}{\typ}{\pred_i}{\embed{\typ}}{\smtenv}{\axioms_i}.
By IH and the Definition~\ref{def:initialsmt} on the checkers and selectors, we get
  \smthastype{\smtenvinit, \embed{\env}}{\eif{\checkdc{\dc_1}\ \pred}{\pred_1}{\ldots} \ \mathtt{else}\ \pred_n}{\embed{\typ}}.
\end{itemize}
\end{proof}

\newcommand\tsub[1]{\ensuremath{{\theta_{#1}^\perp}}\xspace}
\newcommand\track[2]{\ensuremath{\langle #1; #2\rangle}\xspace}
\begin{theorem}\label{thm:approximation}
If \tologicshort{\env}{\refa}{}{\pred}{}{}{},
then for every substitution $\sub\in\interp{\env}$
and every model $\sigma\in\interp{\theta^\perp}$,
if $\evalsto{\applysub{\theta^\perp}{\refa}}{v}$ then $\sigma^\beta \models \pred = \interp{v}$.
\end{theorem}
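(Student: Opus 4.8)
The plan is to prove the statement by structural induction on the derivation of the translation judgment $\tologicshort{\env}{\refa}{}{\pred}{}{}{}$, which by the rules of Figure~\ref{fig:defunc} follows the syntax of $\refa$. Throughout I fix $\sub\in\interp{\env}$ and a model $\sigma\in\interp{\theta^\perp}$, where $\theta^\perp$ is the lifted substitution associated with $\sub$, I assume $\evalsto{\applysub{\theta^\perp}{\refa}}{v}$, and I aim to establish $\sigma^\beta\models\pred=\embed{v}$.

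For the base cases \lgbool, \lgint, \lgpop and \lgdc the translated term $\pred$ is a literal, a primitive-operator symbol, or a constructor symbol that is syntactically its own embedding; here $\applysub{\theta^\perp}{\refa}$ is already a value equal to $v$ and the equation holds definitionally. The variable rule \lgvar is where the lifted substitution does the work: $\applysub{\theta^\perp}{x}$ reduces to the value that $\theta^\perp$ assigns to $x$, and since $\sigma\in\interp{\theta^\perp}$ is built so that $\sigma(x)$ is exactly the embedding of that value --- choosing an arbitrary sort-correct value when $\theta(x)$ is bottom --- we obtain $\sigma^\beta\models x=\embed{v}$.

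For the operator rules \lgbinGEN and \lgun I would apply the induction hypothesis to each immediate subterm, obtaining $\sigma^\beta\models\pred_i=\embed{v_i}$ for the value $v_i$ of the $i$-th argument, and then use the agreement between the operational reduction of a primitive constant (given by $\ceval{\cdot}{\cdot}$) and the SMT interpretation of the corresponding operator symbol; congruence in $\sigma^\beta$ then yields $\sigma^\beta\models\pred_1\binop\pred_2=\embed{v}$. The two case-expression rules \lgcaseBool and \lgcase are dual to this: operationally a branch is selected by the top constructor of the scrutinee's value, while in the logic the checkers $\checkdc{\dc_i}$ and selectors $\selector{\dc_i}{}$ --- whose interpretations are fixed by the refined constructor types of \S~\ref{subsec:measures} --- drive the nested $\mathtt{if}$ into the same branch, after which the induction hypothesis on that branch finishes the argument.

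The heart of the proof, and the step I expect to be the main obstacle, is the pair \lgfun and \lgapp, where operational $\beta$-reduction must be matched by the $\beta$-model axioms of Definition~\ref{def:beta-model}. In \lgfun the term $\efun{x}{}{e}$ is already a value, so it suffices to observe that $\smtlamname{}{}\,y\,\pred$ and the embedding of $\applysub{\theta^\perp}{\efun{x}{}{e}}$ agree up to the choice of bound name, which is precisely the $\alpha$-axiom supplied by $\sigma^\beta$ together with the normalization invariant enforced by the modified \lgfun rule. For \lgapp, the induction hypothesis identifies $\pred$ and $\pred'$ with the embeddings of a function value $\efun{x}{}{e_0}$ and an argument value $w'$; the $\beta$-axiom then rewrites $\smtappname{}{}\,(\smtlamname{}{}\,x\,\embed{e_0})\,\embed{w'}$ to $\embed{e_0}\subst{x}{\embed{w'}}$, which I must identify with $\embed{v}$. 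Operationally the redex steps to $e_0\subst{x}{w'}$ and thence to $v$, so closing the gap needs a substitution lemma, $\embed{e_0\subst{x}{w'}}=\embed{e_0}\subst{x}{\embed{w'}}$ in every $\beta$-model, plus an inner induction on the remaining number of evaluation steps applied to the reduct. The delicacy here is twofold: the substitution lemma must respect the representation invariant on lambda indices, so that $\alpha$-instances suffice to reconcile the renamed binder, and bottoms introduced by $\theta^\perp$ must be handled consistently on both sides, so that the arbitrary sort-correct value chosen for a bottom is the same witness used in the operational value and in the model $\sigma$.
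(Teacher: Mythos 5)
Your plan assembles the right per-case ingredients --- congruence for operators and constructors, the checker/selector interpretations for case-expressions, the $\beta$-model axiom of Definition~\ref{def:beta-model} for redexes, and the commutation of the embedding with substitution --- but the induction you organize them around breaks down, and it breaks exactly where the theorem's difficulty lies. Structural induction on the translation derivation cannot process the application case: the reduct $\SUBST{e_0}{x}{w'}$ of a $\beta$-step is not a subterm of the application (substitution can make it arbitrarily larger), and, worse, the closing substitution maps reflected binders to $\mathtt{fix}$-terms, so the evaluation $\evalsto{\applysub{\sub^\perp}{\refa}}{v}$ may unfold recursive definitions unboundedly many times; no measure computed from the source term alone can bound this. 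Your patch --- an ``inner induction on the remaining number of evaluation steps applied to the reduct'' --- does not repair the scheme as stated: in a nested induction the outer measure must not increase at any recursive call, yet at the reduct call your outer (structural) measure grows. And if instead you try to make that inner induction self-contained, its inductive step is precisely the claim that a \emph{single} reduction step preserves the value of the translated term in every $\beta$-model. That single-step preservation lemma is the real content of the theorem, and your proposal neither states nor proves it.

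The paper's proof is, in effect, your proof with the two inductions' priorities flipped, and with the missing lemma made explicit. It inducts on the length of the reduction sequence, phrased via the tracked-substitution formulation of evaluation from the \undeclang technical report, so that every intermediate term still has a well-defined translation (this also handles your variable case cleanly: substitutions are delayed rather than applied up front, so a variable bound to a $\mathtt{fix}$-term never forces embedding a non-value). The crux is then an ``Equivalence Preservation'' lemma: if a tracked configuration steps from $\refa$ to $\refa_2$, then their translations $\pred$ and $\pred_2$ are equal in the (suitably extended) $\beta$-model. That lemma is proved by case analysis on the step derivation: the congruence cases go through by the inner induction hypothesis, the $\beta$-redex case by the $\beta$-model axiom together with the substitution/translation commutation you identified (the paper uses it as well, flagging it as not formally defined), the case-of-constructor case by the checker/selector axioms, and constant application by the semantics of primitives. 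Iterating the lemma along the $n$-step reduction sequence yields the theorem, since the trailing term is a value. If you reorganize your argument with evaluation length as the primary lexicographic component and isolate that one-step lemma, your remaining case analysis slots into it almost verbatim.
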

\begin{proof}
We proceed using the notion of tracking substitutions from Figure 8 of~\citep{Vazou14-tech}. 
Since $\evalsto{\applysub{\theta^\perp}{\refa}}{v}$, 
there exists a sequence of evaluations via tracked substitutions, 
$$
\track{\tsub{1}}{e_1} \hookrightarrow \dots \track{\tsub{i}}{e_i} \dots \hookrightarrow \track{\tsub{n}}{e_n}
$$
with $\tsub{1}\equiv\tsub{}$, $e_1\equiv e$, and $e_n\equiv v$. 
Moreover, each $e_{i+1}$ is well formed under $\Gamma$, 
thus it has a translation 
$\tologicshort{\Gamma}{\refa_{i+1}}{}{\pred_{i+1}}{}{}{}$. 
Thus we can iteratively apply Lemma~\ref{lemma:approximation} $n-1$ times and 
since $v$ is a value the extra variables in $\tsub{n}$ are irrelevant, thus we
get the required
$\sigma^\beta \models \pred = \interp{v}$. 
\end{proof}

For Boolean expressions we specialize the above to
\begin{corollary}\label{thm:embedding}
If \hastype{\env}{\refa}{\tbool^\downarrow} and
\tologicshort{\env}{\refa}{\tbool}{\pred}{\tbool}{\smtenv}{\axioms},
then for every substitution $\sub\in\interp{\env}$
and every model $\sigma\in\interp{\theta^\perp}$,
$\evalsto{\applysub{\theta^\perp}{\refa}}{\etrue} \iff \sigma^\beta \models \pred $
\end{corollary}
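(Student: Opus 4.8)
The plan is to derive this corollary directly from Theorem~\ref{thm:approximation}, which already establishes the semantic correspondence between a reduced value and its SMT embedding; the only extra work is to specialize to the boolean case and to promote the one-directional statement into a genuine biconditional. First I would fix an arbitrary $\sub\in\interp{\env}$ and a model $\sigma\in\interp{\theta^\perp}$, so that all subsequent reasoning concerns a single substitution and model.

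For the forward direction, I would assume $\evalsto{\applysub{\theta^\perp}{\refa}}{\etrue}$ and instantiate Theorem~\ref{thm:approximation} with the value $\etrue$. This immediately yields $\sigma^\beta \models \pred = \interp{\etrue}$. Since the embedding of booleans is the identity (rule \lgbool gives $\interp{\etrue} = \etrue$), this is exactly $\sigma^\beta \models \pred = \etrue$, i.e.\ $\sigma^\beta \models \pred$, as required.

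For the backward direction, the extra hypothesis $\hastype{\env}{\refa}{\tbool^\downarrow}$ is what carries the weight. The termination label $\downarrow$ guarantees that $\applysub{\theta^\perp}{\refa}$ does not diverge, so it reduces to some value $v$, and since $\refa$ has base type $\tbool$ this $v$ is either $\etrue$ or $\efalse$. Applying Theorem~\ref{thm:approximation} to this $v$ gives $\sigma^\beta \models \pred = \interp{v}$. I would then argue by contradiction: if $v = \efalse$ then $\interp{v} = \efalse$, so $\sigma^\beta \models \pred = \efalse$, which together with the assumption $\sigma^\beta \models \pred$ forces $\sigma^\beta \models \etrue = \efalse$, contradicting the consistency of any $\beta$-model (the two boolean constants are interpreted distinctly). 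Hence $v = \etrue$, establishing $\evalsto{\applysub{\theta^\perp}{\refa}}{\etrue}$.

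The step I expect to be the crux is not any calculation but the backward direction's reliance on termination: without the $\downarrow$ annotation one could only conclude the conditional statement of Theorem~\ref{thm:approximation}, and the equivalence would genuinely fail for diverging terms (a divergent $\refa$ could coexist with a satisfiable $\pred$). The key observation is therefore that well-typedness at a \emph{terminating} boolean sort is precisely what upgrades the one-directional approximation theorem into an iff, and that the consistency of the $\beta$-model rules out the $\efalse$ branch.
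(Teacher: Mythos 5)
Your proposal is correct and matches the paper's own proof essentially step for step: the forward direction is a direct instantiation of Theorem~\ref{thm:approximation} at $v \equiv \etrue$, and the backward direction uses the termination label to obtain a boolean value, then rules out $\efalse$ by deriving a contradiction with $\sigma^\beta \models \pred$. Your remark that the $\downarrow$ annotation is precisely what upgrades the one-directional theorem to an iff is exactly the point the paper's proof relies on.
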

\begin{proof}
We prove the left and right implication separately:
\begin{itemize}
\item $\Rightarrow$
By direct application of Theorem~\ref{thm:approximation} for $v \equiv \etrue$. 

\item $\Leftarrow$ 
Since $\refa$ is terminating, 
$\evalsto{\applysub{\theta^\perp}{\refa}}{v}$. 
with either $v \equiv \etrue$ or $v \equiv \efalse$. 
Assume $v \equiv \efalse$, then by Theorem~\ref{thm:approximation}, 
$\bmodel \models \lnot \pred$, which is a contradiction. 
Thus, $v \equiv \etrue$.
\end{itemize}
\end{proof}

\begin{lemma}[Equivalence Preservation]\label{lemma:approximation}
If \tologicshort{\env}{\refa}{}{\pred}{}{}{},
then for every substitution $\sub\in\interp{\env}$
and every model $\sigma\in\interp{\theta^\perp}$,
if  
$\track{\tsub{}}{\refa}\hookrightarrow\track{\tsub{2}}{\refa_2}$
and
for $\Gamma \subseteq \Gamma_2$ so that $\tsub{2} \in \interp{\Gamma_2}$
and $\bmodel_2\in\interp{\tsub{2}}$,
$\tologicshort{\Gamma_2}{\refa_2}{}{\pred_2}{}{}{}$
then 
$\bmodel \cup (\bmodel_2 \setminus \bmodel) \models \pred = \pred_2$.
\end{lemma}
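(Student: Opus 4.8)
The plan is to prove this single-step statement by case analysis on which reduction rule of Figure~\ref{fig:semantics} fires in $\track{\tsub{}}{\refa} \hookrightarrow \track{\tsub{2}}{\refa_2}$, in each case comparing the translation $\pred$ of the redex with the translation $\pred_2$ of the contractum and discharging the equality $\pred = \pred_2$ under the $\beta$-model $\bmodel \cup (\bmodel_2 \setminus \bmodel)$. Because the rules of Figure~\ref{fig:defunc} are syntax-directed and compositional, the real content lies in the handful of primitive reductions; the contextual rule $C[\prog] \hookrightarrow C[\prog']$ I would dispatch by an inner induction on the evaluation context $C$, where the translation of each application, operator, constructor, or case node is built from the translations of its immediate subterms, so the inductive hypothesis on the reduced subterm together with congruence closure of the SMT logic yields $\pred = \pred_2$.

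For the primitive reductions I would argue as follows. Constant application $c\ v \hookrightarrow \ceval{c}{v}$: rules $\lgpop$ and $\lgbinGEN$ embed $c$ as the corresponding interpreted SMT symbol, whose built-in interpretation agrees by construction with the denotation fixed in Lemma~\ref{lemma:constants}, so $\pred$ and $\pred_2$ are forced equal in every model. For case reduction, the translation (rule $\lgcase$) is a nested $\mathtt{if}$ whose guards are the check functions $\checkdc{\dc_i}$ and whose branch binders are bound via the selectors $\selector{\dc_i}{}$; since the scrutinee is headed by some $\dc_j$, the predefined axioms on the checkers and selectors make the $j$-th guard true, all others false, and each selector project the matching component of the constructor arguments, collapsing the $\mathtt{if}$ to exactly the translation of the selected branch. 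The $\efix{}$-, $\eletname$-, and $\erefname$-unfoldings substitute a $\efix{}$ term for the bound name; the binding they introduce is precisely what $\bmodel_2 \setminus \bmodel$ records, and the equality then follows from the substitution lemma discussed next.

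The hard case, and the main obstacle, is $\beta$-reduction $(\efun{x}{\typ}{e})\ e' \hookrightarrow \SUBST{e}{x}{e'}$. Here the translation of the redex is $\smtappname{}{}\ (\smtlamname{}{}\ x\ \embed{e})\ \embed{e'}$ built by rules $\lgfun$ and $\lgapp$, and the $\beta$-equivalence axiom of Definition~\ref{def:beta-model} rewrites it in the $\beta$-model to $\embed{e}\subst{x}{\embed{e'}}$. What remains is a substitution lemma asserting that translation commutes with substitution, $\embed{\SUBST{e}{x}{e'}} = \embed{e}\subst{x}{\embed{e'}}$, so that this last term is $\pred_2$. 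Proving this commutation is the delicate step: it must respect the normalization invariant on $\lambda$-binders of $\S~\ref{subsec:equivalences}$, so that the minimal-fresh-binder renaming chosen independently by $\lgfun$ on the two sides lines up (which is exactly where the $\alpha$-instances and the minimal-index discipline of the modified $\lgfun$ rule are needed), and it must track how the lifted substitution $\tsub{}$ treats bottom, so that the fresh bindings recorded in $\tsub{2}$ and interpreted by $\bmodel_2$ stay consistent with $\bmodel$ on the shared variables. Once the substitution lemma and the $\beta$-instance axiom are combined, this case closes and the lemma follows.
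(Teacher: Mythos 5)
Your skeleton matches the paper's proof: it, too, proceeds by case analysis on the step, discharging the contextual cases by the inductive hypothesis plus congruence of equality in the logic, the fully-applied constant case by agreement of the interpreted symbols with $\ceval{c}{v}$, case-of-constructor by the axiomatized checkers and selectors, and the $\beta$-redex by the $\beta$-axiom of Definition~\ref{def:beta-model}. The genuine gap is \emph{which} reduction relation the case analysis ranges over. The step in the lemma is a step of the tracked-substitution semantics of~\citep{Vazou14-tech}, not of the plain contextual semantics of Figure~\ref{fig:semantics}: besides the contextual and primitive rules, it has rules in which a variable is resolved against its binding in $\tsub{}$, in which a binding $(x,e_x)$ inside $\tsub{}$ itself takes a step, and in which a constructor-valued binding is exploded into fresh bindings $\overline{(y_i,e_i)}$. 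These are precisely the steps that turn $\tsub{}$ into a larger $\tsub{2}$ and that the model extension $\bmodel_2\setminus\bmodel$ exists to interpret; the paper's proof spends five of its cases on them, and your inner induction over evaluation contexts $C$ generates none of them. Conversely, the $\eletname$-, $\erefname$- and $\efix{}$-unfolding cases you include cannot fire, because refinements $\refa$ are expressions, not binders or programs, so they contain no such forms; attributing $\bmodel_2\setminus\bmodel$ to those rules misplaces where the model actually grows.

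On the $\beta$-case you are, if anything, more careful than the paper. The paper closes that case in one line by citing the $\beta$-axiom, tacitly identifying the logic-level substitution $\pred\subst{x}{\pred_x}$ with the translation of the source-level contractum $\refa\subst{x}{\refa_x}$. Your observation that this identification is a separate commutation lemma (translation commutes with substitution), and that it must interact correctly with the fresh-binder normalization discipline enforced by the modified $\lgfun$ rule, names an obligation the paper leaves entirely implicit; discharging it explicitly would strengthen the paper's argument rather than merely reproduce it.
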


\begin{proof}
We proceed by case analysis on the derivation 
$\track{\tsub{}}{\refa}\hookrightarrow\track{\tsub{2}}{\refa_2}$.
\begin{itemize}
\item 
Assume
	$\track{\tsub{}}{\refa_1\ \refa_2}\hookrightarrow\track{\tsub{2}}{\refa_1'\ \refa_2}$. 
By inversion 
	$\track{\tsub{}}{\refa_1}\hookrightarrow\track{\tsub{2}}{\refa_1'}$.
Assume
  $\tologicshort{\Gamma}{\refa_1}{}{\pred_1}{}{}{}$, 	
  $\tologicshort{\Gamma}{\refa_2}{}{\pred_2}{}{}{}$, 	
  $\tologicshort{\Gamma_2}{\refa_1'}{}{\pred_1'}{}{}{}$.  	
By IH 
  $\bmodel \cup (\bmodel_2 \setminus \bmodel)
    \models \pred_1 = \pred_1'$, 
thus  
  $\bmodel \cup (\bmodel_2 \setminus \bmodel)
    \models \smapp{\pred_1}{\pred_2} = \smapp{\pred_1'}{\pred_2}$. 
    
\item 
Assume
	$\track{\tsub{}}{c\ \refa}\hookrightarrow\track{\tsub{2}}{c\ \refa'}$. 
By inversion 
	$\track{\tsub{}}{\refa}\hookrightarrow\track{\tsub{2}}{\refa'}$.
Assume
  $\tologicshort{\Gamma}{\refa}{}{\pred}{}{}{}$,	
  $\tologicshort{\Gamma}{\refa'}{}{\pred'}{}{}{}$.  	
By IH 
  $\bmodel \cup (\bmodel_2 \setminus \bmodel)
    \models \pred = \pred'$, 
thus  
  $\bmodel \cup (\bmodel_2 \setminus \bmodel)
    \models \smapp{c}{\pred} = \smapp{c}{\pred'}$. 

\item 
Assume
	$\track{\tsub{}}{\ecase{x}{\refa}{\dc_i}{\overline{y_i}}{e_i}}
	  \hookrightarrow\track
	  {\tsub{2}}{\ecase{x}{\refa'}{\dc_i}{\overline{y_i}}{e_i}}$. 
By inversion 
	$\track{\tsub{}}{\refa}\hookrightarrow\track{\tsub{2}}{\refa'}$.
Assume
  $\tologicshort{\Gamma}{\refa}{}{\pred}{}{}{}$,	
  $\tologicshort{\Gamma}{\refa'}{}{\pred'}{}{}{}$.  	
  $\tologicshort{\env}{e_i\subst{\overline{y_i}}{\overline{\selector{\dc_i}{}\ x}}\subst{x}{e}}{\typ}{\pred_i}{\embed{\typ}}{\smtenv}{\axioms_i}$.
By IH 
  $\bmodel \cup (\bmodel_2 \setminus \bmodel)
    \models \pred = \pred'$, 
thus  
  $\bmodel \cup (\bmodel_2 \setminus \bmodel)
   \models
	 {\eif{\checkdc{\dc_1}\ \pred}{\pred_1}{\ldots} \ \mathtt{else}\ \pred_n}{\embed{\typ}}
 $\\ $=
	 {\eif{\checkdc{\dc_1}\ \pred'}{\pred_1}{\ldots} \ \mathtt{else}\ \pred_n}{\embed{\typ}}
  $.	 

\item 
Assume
	$\track{\tsub{}}{D\ \overline{\refa_i}\ \refa\ \overline{\refa_j}}
	\hookrightarrow\track{\tsub{2}}{D\ \overline{\refa_i}\ \refa'\ \overline{\refa_j}}$. 
By inversion 
	$\track{\tsub{}}{\refa}\hookrightarrow\track{\tsub{2}}{\refa'}$.
Assume
  $\tologicshort{\Gamma}{\refa}{}{\pred}{}{}{}$,	
  $\tologicshort{\Gamma}{\refa_i}{}{\pred_i}{}{}{}$,	
  $\tologicshort{\Gamma}{\refa'}{}{\pred'}{}{}{}$.  	
By IH 
  $\bmodel \cup (\bmodel_2 \setminus \bmodel)
    \models \pred = \pred'$, 
thus  
  $\bmodel \cup (\bmodel_2 \setminus \bmodel)
    \models \smapp{D}{\overline{\pred_i}\ \pred\ \overline{\pred_j}} 
          = \smapp{D}{\overline{\pred_i}\ \pred'\ \overline{\pred_j}}$. 

\item 
Assume
	$\track{\tsub{}}{c\ w}
	\hookrightarrow\track{\tsub{}}{\delta(c, w)}$. 
By the definition of the syntax, $c\ w$ is a fully applied logical operator, thus
  $\bmodel \cup (\bmodel_2 \setminus \bmodel)
    \models {c\ w} = \interp{\delta(c, w)}$

\item 
Assume
	$\track{\tsub{}}{(\efun{x}{}{\refa}) \refa_x}
	\hookrightarrow\track{\tsub{}}{\refa\subst{x}{\refa_x}}$. 
Assume
  $\tologicshort{\Gamma}{\refa}{}{\pred}{}{}{}$,	
  $\tologicshort{\Gamma}{\refa_x}{}{\pred_x}{}{}{}$. 
Since $\bmodel$ is defined to satisfy the $\beta$-reduction axiom, 
  $\bmodel \cup (\bmodel_2 \setminus \bmodel)
    \models \smapp{(\smlam{x}{\refa})}{\pred_x} =\pred\subst{x}{\pred_x}$. 
  
\item 
Assume
	$\track{\tsub{}}{\ecase{x}{D_j\ \overline{e}}{\dc_i}{\overline{y_i}}{e_i}}
	\hookrightarrow\track{\tsub{}}{e_j\subst{x}{D_j\ \overline{e}}\subst{y_i}{\overline{e}}}$. 
Also, let 
  $\tologicshort{\Gamma}{\refa}{}{\pred}{}{}{}$,	
  $\tologicshort{\Gamma}{\refa_i\subst{x}{D_j\ \overline{\refa}}\subst{y_i}{\overline{\refa_i}}}{}{\pred_i}{}{}{}$.
By the axiomatic behavior of the measure selector $\checkdc{\dc_j\ \overline{\pred}}$, we get 
  $\bmodel\models \checkdc{\dc_j\ \overline{\pred}}$.
Thus, 
  $\bmodel
	 {\eif{\checkdc{\dc_1}\ \pred}{\pred_1}{\ldots} \ \mathtt{else}\ \pred_n}
	= \pred_j
  $.

\item 
Assume
	$\track{(x, e_x)\tsub{}}{x}
	\hookrightarrow
	\track{(x, e_x')\tsub{2}}{x}$.
By inversion
	$\track{\tsub{}}{e_x}
	\hookrightarrow
	\track{\tsub{2}}{e_x'}$.
By identity of equality, 
  $(x, \pred_x)\bmodel \cup (\bmodel_2 \setminus \bmodel)
    \models x = x$. 

\item 
Assume
	$\track{(y, e_y)\tsub{}}{x}
	\hookrightarrow
	\track{(y, e_y)\tsub{2}}{e_x}$.
By inversion
	$\track{\tsub{}}{x}
	\hookrightarrow
	\track{\tsub{2}}{e_x}$.
Assume 
  $\tologicshort{\Gamma}{\refa_x}{}{\pred_x}{}{}{}$.
By IH
  $\bmodel \cup (\bmodel_2 \setminus \bmodel)
    \models x = \pred_x$. 
Thus  
  $(y,\pred_y)\bmodel \cup (\bmodel_2 \setminus \bmodel)
    \models x = \pred_x$. 

\item 
Assume
	$\track{(x, w)\tsub{}}{x}
	\hookrightarrow
	\track{(x, w)\tsub{}}{w}$.
Thus  
  $(x,\interp{w})\bmodel 
    \models x = \interp{w}$. 

\item 
Assume
	$\track{(x, D\ \overline{y})\tsub{}}{x}
	\hookrightarrow
	\track{(x, D\ \overline{y})\tsub{}}{D\ \overline{y}}$.
Thus  
  $(x,\smapp{D}{\overline{y}})\bmodel 
    \models x = \smapp{D}{\overline{y}}$. 

\item 
Assume
	$\track{(x, D\ \overline{e})\tsub{}}{x}
	\hookrightarrow
	\track{(x, D\ \overline{y}), \overline{(y_i, e_i)}\tsub{}}{D\ \overline{y}}$.
Assume 
  $\tologicshort{\Gamma}{\refa_i}{}{\pred_i}{}{}{}$.
Thus  
  $(x,\smapp{D}{\overline{y}}), \overline{(y_i, \pred_i)}\bmodel 
    \models x = \smapp{D}{\overline{y}}$. 
\end{itemize}
\end{proof}

\subsection{Soundness of Approximation}
\begin{theorem}[Soundness of Algorithmic]
If \ahastype{\env}{e}{\typ} then \hastype{\env}{e}{\typ}.
\end{theorem}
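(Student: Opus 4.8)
The plan is to prove the theorem by a simultaneous induction on the derivations of the three algorithmic judgments---typing $\ahastype{\env}{e}{\typ}$, well-formedness $\aiswellformed{\env}{\typ}$, and subtyping $\aissubtype{\env}{\typ_1}{\typ_2}$---showing that each implies its declarative counterpart. The key observation is that the algorithmic system of Figure~\ref{fig:modifications} differs from the declarative system of Figures~\ref{fig:syntax} and~\ref{fig:typing} in exactly one rule, the base subtyping rule $\rsubbase$, which replaces denotational containment by an SMT validity check; the termination labels $\tlabel$ decorating basic types are metadata that erase to the unlabelled declarative types and do not affect the declarative judgments. Every other rule is shared verbatim between the two systems.

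The central step is the subtyping case. I would lift Lemma~\ref{lem:subtyping} from base refinement types to all types: for base types $\rsubbase$, the conclusion $\issubtype{\env}{\tref{v}{\btyp}{e_1}}{\tref{v}{\btyp}{e_2}}$ is exactly Lemma~\ref{lem:subtyping}; for function types, the rule $\rsubfun$ is identical in the two systems, so I apply the induction hypothesis to its two premises---$\aissubtype{\env}{\typ_x'}{\typ_x}$ on the contravariant domain and $\aissubtype{\env,\tbind{x}{\typ_x'}}{\typ}{\typ'}$ on the covariant codomain---and reassemble the declarative conclusion using the declarative $\rsubfun$.

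For typing and well-formedness, every rule but $\rtsub$ simply recurses: each premise is an algorithmic judgment of one of the three kinds, so the induction hypothesis yields its declarative counterpart and the (identical) declarative rule rebuilds the conclusion. For example, the algorithmic $\rwbase$ types a refinement at $\tbool^\tlabel$; stripping the label and invoking the hypothesis on its typing subderivation gives the declarative premise $\hastype{\env,\tbind{v}{\btyp}}{\refa}{\tbool}$. The only rule requiring the subtyping lifting is the subsumption rule $\rtsub$: there I apply the hypothesis to the typing premise and the lifted subtyping result to the subtyping premise, concluding via the declarative $\rtsub$.

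Because the entire argument reduces to a rule-by-rule match plus Lemma~\ref{lem:subtyping}, the main obstacle is not this induction---which is essentially bookkeeping---but the content already packaged inside that lemma. The delicate point there is that the environment embedding $\embed{\env}$ feeding the verification condition $\vcond{\env}{\pred}$ must faithfully represent the closing substitutions $\interp{\env}$ against which declarative subtyping is defined, and in particular that conjoining only the refinements of \emph{provably terminating} binders (dropping the rest) is sound. This is exactly what Corollary~\ref{thm:embedding} and the termination discipline establish, and it is the step I would verify most carefully when unfolding the proof of Lemma~\ref{lem:subtyping}.
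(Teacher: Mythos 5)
Your proposal is correct and takes essentially the same route as the paper: the paper's proof likewise reduces the theorem to the base-subtyping approximation lemma (Lemma~\ref{lem:subtyping}, itself established via Corollary~\ref{thm:embedding}), observing that all other rules coincide between the two systems. Your explicit simultaneous induction and the lifting of the lemma through the shared $\rsubfun$ rule simply spell out the bookkeeping that the paper compresses into the single sentence ``to prove soundness it suffices to prove that subtyping is appropriately approximated.''
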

\begin{proof}
To prove soundness it suffices to prove that subtyping is appropriately approximated, 
as stated by the following lemma.
\end{proof}

\begin{lemma}
If \aissubtype{\env}{\tref{v}{\btyp}{e_1}}{\tref{v}{\btyp}{e_2}}
then \issubtype{\env}{\tref{v}{\btyp}{e_1}}{\tref{v}{\btyp}{e_2}}.
\end{lemma}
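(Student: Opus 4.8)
The plan is to unfold both subtyping relations to their definitions and to bridge evaluation and SMT satisfaction using the embedding Corollary~\ref{thm:embedding}. By the denotational rule \rsubbase, it suffices to fix an arbitrary closing substitution $\sub\in\interp{\env}$ and establish the containment $\interp{\applysub{\sub}{\tref{v}{\btyp}{e_1}}} \subseteq \interp{\applysub{\sub}{\tref{v}{\btyp}{e_2}}}$. Unfolding the denotation of basic refinement types, I would take an arbitrary $e$ of base type $\btyp$ in the left-hand set; if $e$ diverges the membership condition holds vacuously, so I would reduce to the case $\evalsto{e}{w}$ together with the obligation: assuming $\evalsto{\applysub{\sub}{e_1\subst{v}{w}}}{\etrue}$, prove $\evalsto{\applysub{\sub}{e_2\subst{v}{w}}}{\etrue}$.

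First I would assemble the logical counterpart of these hypotheses. The algorithmic premise of \rsubbase lives in the extended environment $\env' \defeq \env, \tbind{v}{\{\btyp^\tlabel | e_1\}}$, so $w$ together with $\sub$ yields a closing substitution for $\env'$. Because $\sub\in\interp{\env}$, every provably-terminating binder of $\env$ has its refinement evaluating to \etrue{} under $\sub$, and by assumption so does $e_1\subst{v}{w}$; these are precisely the binders that the environment embedding $\embed{\env'}$ conjoins, all others contributing only \etrue. Applying the forward direction of Corollary~\ref{thm:embedding} to each such refinement (whose termination label $\tlabel$ discharges the corollary's reduces-to-a-value precondition), I obtain that every model $\sigma$ in the lifted substitution $\sub^\perp$ has its $\beta$-extension $\sigma^\beta$ satisfying each conjunct, hence $\sigma^\beta\models\embed{\env'}$.

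The final step is to push this through the verification condition. The algorithmic premise supplies $\smtvalid{\vcond{\env'}{\pred'}}$, where $\tologicshort{\env'}{e_2}{\tbool}{\pred'}{}{}{}$; that is, $\embed{\env'}\Rightarrow\pred'$ holds in every model, in particular in the $\beta$-model $\sigma^\beta$. Instantiating validity at $\sigma^\beta$ and using $\sigma^\beta\models\embed{\env'}$ yields $\sigma^\beta\models\pred'$, and the backward direction of Corollary~\ref{thm:embedding} then gives $\evalsto{\applysub{\sub}{e_2\subst{v}{w}}}{\etrue}$, establishing membership of $e$ in the right-hand denotation and closing the containment. The hard part will be the bookkeeping at the semantic/logical interface: matching the closing substitution $\sub$ (extended by $w$) to a lifted substitution $\sub^\perp$ and its $\beta$-model, ensuring the termination labels genuinely license each use of the embedding corollary, and checking that only the terminating binders of $\env'$ feed into $\embed{\env'}$ so that the implication's antecedent is actually discharged. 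Once these are pinned down, the argument is a direct double application of Corollary~\ref{thm:embedding}, and the soundness of algorithmic typing (Theorem~\ref{thm:soundness-smt}) follows immediately.
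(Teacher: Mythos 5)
Your proposal is correct and follows essentially the same route as the paper's own proof: unfold \rsubbase{} to denotational containment, fix a closing substitution, dispatch the divergent case trivially, then apply Corollary~\ref{thm:embedding} in the forward direction to satisfy the embedded environment hypotheses in a $\beta$-model, instantiate the validity of the algorithmic VC at that model, and apply the corollary in the backward direction to recover evaluation of $e_2\subst{v}{w}$ to \etrue. If anything, your treatment of the bookkeeping (the lifted substitution, the termination labels licensing the corollary, and which binders contribute to $\embed{\env'}$) is more explicit than the paper's.
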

\begin{proof}
By rule \rsubbase, we need to show that
$\forall \sub\in\interp{\env}.
  \interp{\applysub{\sub}{\tref{v}{\btyp}{\refa_1}}}
  \subseteq
  \interp{\applysub{\sub}{\tref{v}{\btyp}{\refa_2}}}$.
We fix a $\sub\in\interp{\env}$.
and get that forall bindings
$(\tbind{x_i}{\tref{v}{\btyp^{\downarrow}}{\refa_i}}) \in \env$,
$\evalsto{\applysub{\sub}{e_i\subst{v}{x_i}}}{\etrue}$.

Then need to show that for each $e$,
if $e \in \interp{\applysub{\sub}{\tref{v}{\btyp}{\refa_1}}}$,
then $e \in \interp{\applysub{\sub}{\tref{v}{\btyp}{\refa_2}}}$.

If $e$ diverges then the statement trivially holds.
Assume $\evalsto{e}{w}$.
We need to show that
if $\evalsto{\applysub{\sub}{e_1\subst{v}{w}}}{\etrue}$
then $\evalsto{\applysub{\sub}{e_2\subst{v}{w}}}{\etrue}$.

Let \vsub the lifted substitution that satisfies the above.
Then  by Lemma~\ref{thm:embedding}
for each model $\bmodel \in \interp{\vsub}$,
$\bmodel\models\pred_i$, and $\bmodel\models q_1$
for
$\tologicshort{\env}{e_i\subst{v}{x_i}}{\btyp}{\pred_i}{\embed{\btyp}}{\smtenv_i}{\axioms_i}$
$\tologicshort{\env}{e_i\subst{v}{w}}{\btyp}{q_i}{\embed{\btyp}}{\smtenv_i}{\beta_i}$.
Since \aissubtype{\env}{\tref{v}{\btyp}{e_1}}{\tref{v}{\btyp}{e_2}} we get
$$
\bigwedge_i \pred_i
\Rightarrow q_1 \Rightarrow q_2
$$
thus $\bmodel\models q_2$.
By Theorem~\ref{thm:embedding} we get $\evalsto{\applysub{\sub}{\refa_2\subst{v}{w}}}{\etrue}$.
\end{proof}

}

\end{document}